\newcommand{\footremember}[2]{%
    \footnote{#2}
    \newcounter{#1}
    \setcounter{#1}{\value{footnote}}%
}
\newcommand{\footrecall}[1]{%
    \footnotemark[\value{#1}]%
}
\setlist[enumerate,1]{label=(\roman*), leftmargin=2.2em}
\setlist[enumerate,2]{label=(\alph*)}
\setlist{nosep,topsep=0.1em}
\setlist[itemize,1]{label={\bfseries--}}
\newtheorem*{rep@theorem}{\rep@title}
\newcommand{\newreptheorem}[2]{%
	\newenvironment{rep#1}[1]{%
		\def\rep@title{\cref{##1}}%
		\begin{rep@theorem}}%
		{\end{rep@theorem}}}
\let\@@pmod\pmod
\DeclareRobustCommand{\pmod}{\@ifstar\@pmods\@@pmod}
\def\@pmods#1{\mkern8mu({\operator@font mod}\mkern 6mu#1)}
\let\@@mod\mod
\DeclareRobustCommand{\mod}{\@ifstar\@mods\@@mod}
\def\@mods#1{\mkern8mu{\operator@font mod}\mkern 6mu#1}
\definecolor{darkblue}{rgb}{0,0,0.38}
\definecolor{darkred}{rgb}{0.6,0,0}
\definecolor{darkgreen}{rgb}{0.1,0.35,0}
\patchcmd\blx@bblinput{\blx@blxinit}
                      {\blx@blxinit
                      }{}{\fail}
\addspace\mkbibbrackets{\thefield{eprintclass}}}}}
\addspace\mkbibbrackets{\thefield{eprintclass}}}}}
\patchcmd{\@algocf@start}{%
  \begin{lrbox}{\algocf@algobox}%
}{%
  \rule{0.025\textwidth}{\z@}%
  \begin{lrbox}{\algocf@algobox}%
  \begin{minipage}{0.95\textwidth}%
}{}{}
\patchcmd{\@algocf@finish}{%
  \end{lrbox}%
}{%
  \end{minipage}%
  \end{lrbox}%
}{}{}
\tikzstyle{patternSquare} = [draw, thick, rectangle, minimum width=.9cm, minimum height=.9cm, align=center, inner sep=.5mm, text width=.8cm, execute at begin node=\setlength{\baselineskip}{10pt}, font=\small, text height=1.5ex, text depth=0.25ex]
\newcommand{\ksum}[1][]{\mathbin{\oplus_{#1}}}
\DeclareRobustCommand{\cev}[1]{%
  {\mathpalette\do@cev{#1}}%
}
\newcommand{\do@cev}[2]{%
  \vbox{\offinterlineskip
    \sbox\z@{$\m@th#1 x$}%
    \ialign{##\cr
      \hidewidth\reflectbox{$\m@th#1\vec{}\mkern4mu$}\hidewidth\cr
      \noalign{\kern-\ht\z@}
      $\m@th#1#2$\cr
    }%
  }%
}
 \newcommand{\labeltarget}[1]{\Hy@raisedlink{\hypertarget{#1}{}}}
 \newcommand{\linkdest}[1]{\Hy@raisedlink{\hypertarget{#1}{}}}
\newcommand{\MCCTU}{\hyperlink{prb:MCCTU}{MCCTU}\xspace}
\newcommand{\GCTUF}{\hyperlink{prb:GCTUF}{GCTUF}\xspace}
\newcommand{\RGCTUF}[1][R]{\hyperlink{prb:R-GCTUF}{$#1$-GCTUF}\xspace}
\newcommand{\GCLF}{\hyperlink{prb:GCLF}{GCLF}\xspace}
\newcommand{\GCTU}{\hyperlink{prb:GCTU}{GCTU}\xspace}
\newcommand{\GCC}{\hyperlink{prb:GCC}{GCC}\xspace}
\newcommand{\XLC}{\hyperlink{prb:XLC}{XLC}\xspace}
\newcommand{\GCLO}{\hyperlink{prb:GCLO}{GCLO}\xspace}
\newcommand{\ab}{(\alpha,\beta)}
\newcommand{\abprime}{(\alpha',\beta')}
\newcommand{\Z}{\mathbb{Z}}
\newtheorem*{theorem*}{Theorem}
\newtheorem{theorem}{Theorem}
\newtheorem{lemma}[theorem]{Lemma}
\newtheorem*{lemma*}{Lemma}
\newtheorem{conjecture}[theorem]{Conjecture}
\newtheorem{proposition}[theorem]{Proposition}
\newtheorem{definition}[theorem]{Definition}
\newtheorem{remark}[theorem]{Remark}
\newtheorem{corollary}[theorem]{Corollary}
\newtheorem{observation}[theorem]{Observation}
\crefname{theorem}{Theorem}{Theorems}
\crefname{conjecture}{Conjecture}{Conjectures}
\Crefname{lemma}{Lemma}{Lemmas}
\Crefname{claim}{Claim}{Claims}
\Crefname{fact}{Fact}{Facts}
\Crefname{remark}{Remark}{Remarks}
\Crefname{observation}{Observation}{Observations}
\Crefname{line}{Line}{Lines}
\Crefname{figure}{Figure}{Figures}
\newcommand{\Pinar}{\widehat{\Pi}}
\newcommand{\quot}[3][\Big]{{{}^{\textstyle #2}\!#1/\!_{\textstyle #3}}}
\newcommand{\inlinequot}[2]{\sfrac{#1}{#2}}
\title{Advances on Strictly $\Delta$-Modular IPs%
\thanks{%
Funded through the Swiss National Science Foundation grants 200021\_184622 and P500PT\_206742, the European Research Council (ERC) under the European Union's Horizon 2020 research and innovation programme (grant agreement No 817750), and the Deutsche Forschungsgemeinschaft (DFG, German Research Foundation) under Germany's Excellence Strategy~--~EXZ-2047/1~--~390685813.%
}}
\author{
Martin N{\"a}gele%
\thanks{%
Research Institute for Discrete Mathematics and Hausdorff Center for Mathematics, University of Bonn, Bonn, Germany.
Email: \href{mailto:mnaegele@uni-bonn.de}%
{mnaegele@uni-bonn.de}.%
}%
\and
Christian N{\"o}bel%
\footremember{ETH}{
Department of Mathematics, ETH Zurich, Zurich, Switzerland.
Email: $\{$\href{mailto:cnoebel@ethz.ch}{cnoebel}, \href{mailto:rtorres@ethz.ch}{rtorres}, \href{mailto:ricoz@ethz.ch}{ricoz}$\}$@ethz.ch.%
}
\and
Richard Santiago%
\footrecall{ETH}%
\and
Rico Zenklusen
\footrecall{ETH}%
}
\date{}
\begin{document}

\maketitle

\thispagestyle{empty}
\addtocounter{page}{-1}

\begin{abstract}
There has been significant work recently on integer programs (IPs) $\min\{c^\top x \colon Ax\leq b,\,x\in \mathbb{Z}^n\}$ with a constraint marix $A$ with bounded subdeterminants.
This is motivated by a well-known conjecture claiming that, for any constant $\Delta\in \mathbb{Z}_{>0}$, $\Delta$-modular IPs are efficiently solvable, which are IPs where the constraint matrix $A\in \mathbb{Z}^{m\times n}$ has full column rank and all $n\times n$ minors of $A$ are within $\{-\Delta, \dots, \Delta\}$.
Previous progress on this question, in particular for $\Delta=2$, relies on algorithms that solve an important special case, namely \emph{strictly $\Delta$-modular IPs}, which further restrict the $n\times n$ minors of $A$ to be within $\{-\Delta, 0, \Delta\}$.
Even for $\Delta=2$, such problems include well-known combinatorial optimization problems like the minimum odd/even cut problem.
The conjecture remains open even for strictly $\Delta$-modular IPs. Prior advances were restricted to prime $\Delta$, which allows for employing strong number-theoretic results.

In this work, we make first progress beyond the prime case by presenting techniques not relying on such strong number-theoretic prime results.
In particular, our approach implies that there is a randomized algorithm to check feasibility of strictly $\Delta$-modular IPs in strongly polynomial time if $\Delta\leq4$.

\end{abstract}

\begin{tikzpicture}[overlay, remember picture, shift = {(current page.south east)}]
\coordinate (anchor) at (0,0);
\node[anchor=south east, outer sep=5mm] at (anchor) {
\begin{tikzpicture}[outer sep=0] %
\node (ERC) {\includegraphics[height=13mm]{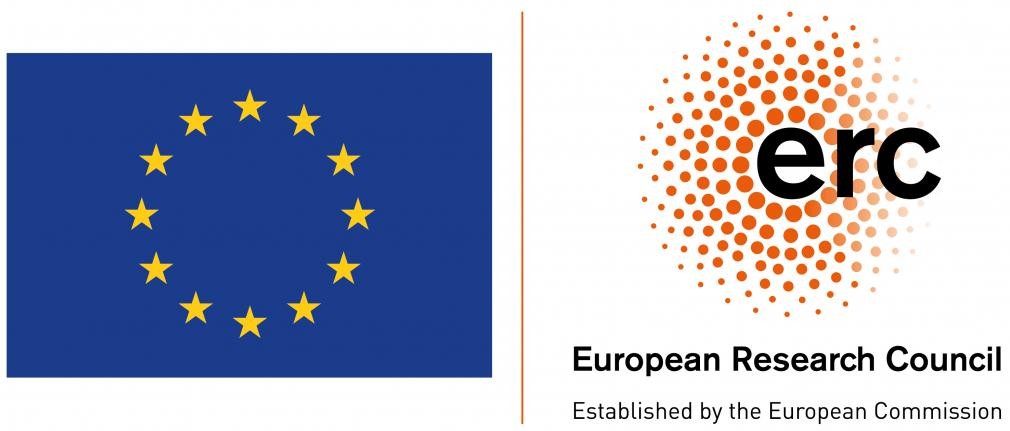}};
\node[left=5mm of ERC] (SNSF) {\includegraphics[height=7mm]{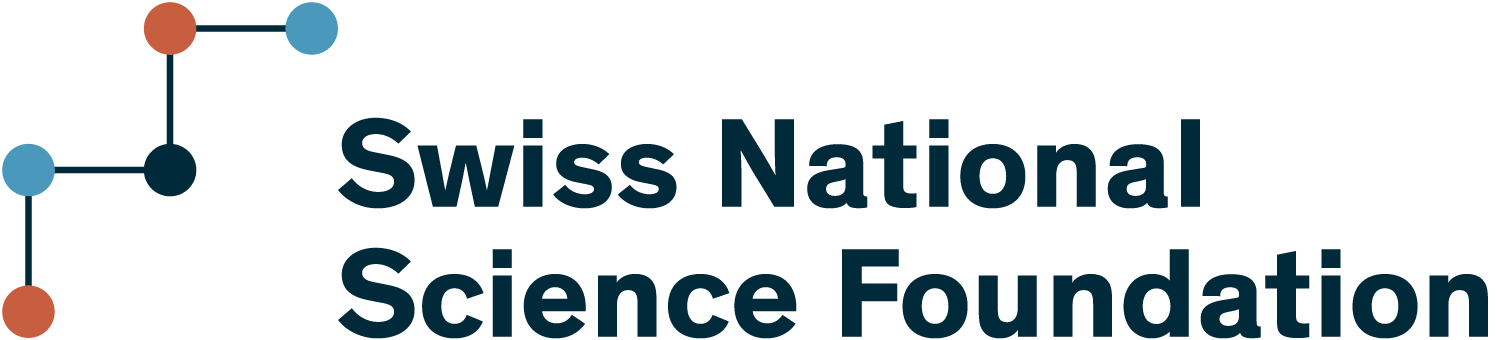}};
\node[right=5mm of ERC] (DFG) {\includegraphics[height=5mm]{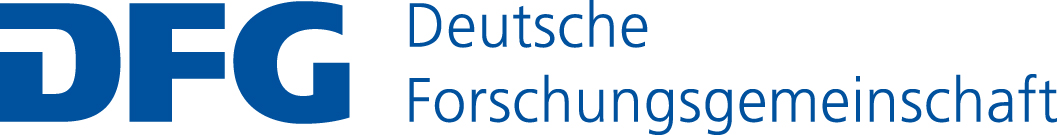}};%
\end{tikzpicture}
};
\end{tikzpicture}

\newpage

\section{Introduction}

Integer Programs (IPs) $\min\{c^\top x\colon Ax\leq b,\, x\in\mathbb{Z}^n\}$ are a central \NP-hard problem class in Combinatorial Optimization.
There is substantial prior work and interest in identifying special classes of polynomial-time solvable IPs while remaining as general as possible.
One of the best-known such classes are IPs with a constraint matrix that is \emph{totally unimodular} (TU), i.e., the determinant of any of its square submatrices is within $\{-1,0,1\}$.
A long-standing open conjecture in the field is whether this result can be generalized to $\Delta$-modular constraint matrices for constant $\Delta$.
Here, we say that a matrix $A\in\mathbb{Z}^{k\times n}$ is \emph{$\Delta$-modular} if it has full column rank and all $n\times n$ submatrices have determinants in $\{-\Delta,\ldots,\Delta\}$.%
\footnote{A weaker variant of the conjecture claims efficient solvability of IPs with \emph{totally $\Delta$-modular} constraint matrices, where all subdeterminants are bounded by $\Delta$ in absolute value.
The conjecture involving $\Delta$-modular matrices implies the weaker variant.
Indeed, an IP $\min\{c^\top x \colon Ax \leq b, x\in \mathbb{Z}^n\}$ with a totally $\Delta$-modular constraint matrix can be reformulated as $\min\{c^\top (x^+-x^-) \colon A(x^+-x^-)\leq b, x^+, x^- \in \mathbb{Z}_{\geq 0}^n\}$.
It is not hard to see that the constraint matrix of the new LP remains totally $\Delta$-modular; moreover, it has full column rank because of the non-negativity constraints.
}
For brevity, we call an IP with $\Delta$-modular constraint matrix a \emph{$\Delta$-modular IP}.
We recap the above-mentioned conjecture below.
Unfortunately, we do not know its precise origin; it may be considered folklore in the field.
\begin{conjecture}\label{conj:deltaMod}
For constant $\Delta\in \mathbb{Z}_{\geq 0}$, $\Delta$-modular IPs can be solved in polynomial time.
\end{conjecture}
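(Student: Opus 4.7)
My plan is a two-stage reduction to a well-understood combinatorial core, followed by an attack on the residual arithmetic obstructions. First, I would reduce from general to \emph{strictly} $\Delta$-modular IPs by dividing out the gcd of the $n\times n$ minors of $A$: either this gcd equals $\Delta$ (so $A$ is already strictly $\Delta$-modular), or it is a proper divisor, and rescaling yields a strictly $\Delta'$-modular IP with $\Delta' < \Delta$. Since $\Delta$ has only $O(\log\Delta)$ divisors, a straightforward induction on $\Delta$ handles the non-strict case. Thus, as the paper's abstract also hints, the real question is the strictly $\Delta$-modular one.

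Given a strictly $\Delta$-modular IP, I would then use the Smith normal form $A = UDV$, with $U,V$ unimodular and $D$ diagonal with invariant factors dividing $\Delta$. A unimodular change of variables rewrites the problem as the minimization of a linear objective over the integer points of a polyhedron whose describing matrix is totally unimodular, intersected with integer vectors satisfying a small number of congruency conditions on prescribed linear forms, with moduli that divide $\Delta$. This puts us in the generalized congruency-constrained TU framework developed by Nägele, Sudakov, Zenklusen and successors, which we may then try to invoke.

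To solve this congruency-constrained TU core, I would factor $\Delta = \prod_i p_i^{e_i}$ and attempt a Chinese-remaindering approach: solve each prime-power subproblem separately, then stitch the solutions together. When $\Delta = p$ is prime, $\mathbb{Z}/p\mathbb{Z}$ is a field, and strong number-theoretic tools (character sums, polynomial identities over a field) are available; these are precisely the ingredients used in prior work to obtain efficient algorithms for strictly $p$-modular IPs. For general $\Delta$, the hope is that combining the prime-power solutions through their shared integer variable $x$ can still be done in polynomial time, possibly by exploiting sparsity results on the number of distinct rows or columns of a $\Delta$-modular matrix.

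The hard part will be exactly this gluing step. The subproblems modulo different prime powers are coupled through the common vector $x$, so individually optimal residue-feasible solutions need not meet in a common lattice point; moreover, for $p^e$ with $e \geq 2$ the ring $\mathbb{Z}/p^e\mathbb{Z}$ is not a field, so the number-theoretic identities underpinning the prime case do not transfer. Overcoming this will likely require either (i) a sparsity-type structural theorem bounding, polynomially in the input, the residue patterns that columns of a $\Delta$-modular matrix can exhibit modulo the various $p_i^{e_i}$, enabling a tractable enumeration over joint residues, or (ii) a new algebraic handle that replaces prime-field arguments with arguments over product rings $\prod_i \mathbb{Z}/p_i^{e_i}\mathbb{Z}$. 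The fact that the present paper establishes unconditional strongly polynomial feasibility only up to $\Delta \leq 4$ is a strong indication that this cross-prime coupling is exactly the barrier, and that fully resolving the conjecture will demand substantially new ideas beyond a clean prime-by-prime decomposition.
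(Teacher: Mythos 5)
You are attempting to prove Conjecture~\ref{conj:deltaMod}, which the paper explicitly presents as a long-standing \emph{open} conjecture; the paper proves only a small fragment of it (randomized strongly polynomial feasibility for strictly $\Delta$-modular IPs with $\Delta\leq 4$), so there is no ``paper proof'' to match, and your proposal is a research programme rather than a proof --- you concede as much in your final paragraph. Beyond that, two of your concrete steps fail. First, the reduction from $\Delta$-modular to strictly $\Delta$-modular IPs by ``dividing out the gcd of the $n\times n$ minors'' does not work: a $\Delta$-modular matrix may have nonzero $n\times n$ minors of several distinct absolute values (say $1$ and $\Delta$), so their gcd can be $1$ while the matrix is nowhere near strictly $1$-modular, and no rescaling produces a strictly $\Delta'$-modular instance. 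The equivalence of Conjectures~\ref{conj:deltaMod} and~\ref{conj:strictlyDeltaMod} is known only for $\Delta=2$, via the Veselov--Chirkov result cited in the paper; for general $\Delta$ it is itself open. Second, your claim that prior work already gives ``efficient algorithms for strictly $p$-modular IPs'' for every prime $p$ is not correct: what is known is the bimodular case $\Delta=2$ and a \emph{randomized feasibility} algorithm for $\Delta=3$; already $\Delta=5$ is open even for feasibility. So the prime-power building blocks of your Chinese-remainder scheme do not yet exist, and the gluing of residues across coprime moduli through a shared integer point $x$ --- which you correctly identify as the crux --- is precisely the barrier this paper attacks (via group constraints and Lemma~\ref{lemma:CD-replacement} as a replacement for Cauchy--Davenport) without resolving it in general.

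Where your outline does align with the paper is the middle stage: for a \emph{strictly} $\Delta$-modular matrix, choosing an $n\times n$ submatrix $H$ with $\lvert\det H\rvert=\Delta$, substituting $y=Hx$, and extracting congruency constraints with $\prod_i m_i=\Delta$ from the Smith normal form of the fractional part of $H^{-1}$ is exactly the content of Lemma~\ref{lem:redToMCCTU_general}. Note, however, that this transformation genuinely needs strictness: total unimodularity of $AH^{-1}$ is equivalent to every $n\times n$ minor of $A$ lying in $\{-\Delta,0,\Delta\}$, so this step cannot be bootstrapped to the non-strict case, and your first reduction cannot supply the missing hypothesis.
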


First progress on \cref{conj:deltaMod} was made by \textcite{artmann_2017_strongly}, who showed that it holds for $\Delta=2$ (the bimodular case).
\textcite{fiorini_2021_integerPrograms} show that the conjecture is true for an arbitrary constant $\Delta$ under the extra condition that the constraint matrix has at most two non-zero entries per row or column.
Through a non-trivial extension of the techniques in~\cite{artmann_2017_strongly}, it was shown by \textcite{nagele_2022_congruency} that there is a randomized algorithm to check feasibility of an IP with a strictly 3-modular constraint matrix in polynomial time.
Here, a matrix $A\in\Z^{k\times n}$ is called \emph{strictly $\Delta$-modular} if it has full column rank and all its $n\times n$ submatrices have determinants in $\{-\Delta, 0, \Delta\}$.

As a key ingredient, all these prior approaches solve certain combinatorial optimization problems with congruency constraints.
This is not surprising, as even strictly $\Delta$-modular IPs include the following class of \emph{MCCTU problems}:%
\footnote{%
To capture an \MCCTU{} problem as a strictly $\Delta$-modular IP, replace each congruency constraint $\gamma_i^\top x \equiv r_i\pmod*{m_i}$ by an equality constraint $\gamma_i^\top x + m_i y_i = r$ with $y_i\in\mathbb{Z}$.
The corresponding constraint matrix is strictly $\Delta$-modular for $\Delta=\prod_{i=1}^q m_i$.%
}
\begin{mdframed}[innerleftmargin=0.5em, innertopmargin=0.5em, innerrightmargin=0.5em, innerbottommargin=0.5em, userdefinedwidth=0.95\linewidth, align=center]
	{\textbf{Multi-Congruency-Constrained TU Problem (\MCCTU{}\linkdest{prb:MCCTU}):}}
	\sloppy Let $T\in\mathbb{Z}^{k\times n}$ be TU, $b\in\mathbb{Z}^k$, $c\in \mathbb{R}^k$, $m\in \mathbb{Z}^q_{>0}$, $\gamma_i\in \mathbb{Z}^n$ for $i\in [q]$, $r\in \mathbb{Z}^q$.
Solve
$$
	\min\{c^\top x \colon Tx \leq b,\, \gamma_i^\top x \equiv r_i \pmod*{m_i} \;\forall i\in [q],\, x\in\mathbb{Z}^n\}\enspace.
$$
\end{mdframed}
Unless mentioned otherwise, we assume that in the context of \MCCTU{} problems, $q$ and $m_i$ are constant.
Even \MCCTU with just a single congruency constraint, i.e., $q=1$, already contains the classical and well-studied odd and even cut problems, and, more generally, the problem of finding a minimum cut whose number of vertices is $r \pmod*{m}$.
(See~\cite{padberg_1982_odd,barahona_1987_construction,groetschel_1984_corrigendum,goemans_1995_minimizing,nagele_2018_submodular,nagele_2020_newContraction} for related work.)
It can also capture the minimum $T$-join problem, congruency-constrained flow problems, and many other problems linked to TU matrices.

Combinatorial optimization problems with congruency constraints are highly non-trivial and many open questions remain.
As they are already captured by strictly $\Delta$-modular IPs, this motivates the following weakening of \cref{conj:deltaMod}.

\begin{conjecture}\label{conj:strictlyDeltaMod}
Strictly $\Delta$-modular IPs can be solved in polynomial time for constant $\Delta\in \mathbb{Z}_{\geq 0}$.
\end{conjecture}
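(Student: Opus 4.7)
My plan is to reduce a general strictly $\Delta$-modular IP to a \MCCTU{} instance, and then solve the latter using a common framework that extends the bimodular algorithm of \textcite{artmann_2017_strongly} and the $\Delta = 3$ result of \textcite{nagele_2022_congruency}. For the reduction, I would start by bringing the constraint matrix $A \in \Z^{m\times n}$ into Hermite normal form via a unimodular column transformation (which preserves the set of integer solutions after the corresponding change of variables). Since every $n\times n$ minor of $A$ lies in $\{-\Delta, 0, \Delta\}$, the diagonal entries of the resulting Hermite form must multiply to $\pm \Delta$ on any full-rank row selection. Splitting off the rows that carry the non-trivial diagonal contributions then yields a TU subsystem $T x \leq b$ together with a constant number of residual linear forms $\gamma_i^\top x$ taken modulo divisors $m_i$ of $\Delta$. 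Because $\Delta$ is constant, both the number $q$ of such congruencies and each $m_i$ are constants, placing the resulting problem squarely in the \MCCTU{} class.

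\textbf{Solving the \MCCTU.} For the \MCCTU{} step, the natural approach is to enumerate the $O(\prod_i m_i)$ possible residues of $\gamma_i^\top x \pmod*{m_i}$ and, for each residue guess, reduce to a combinatorial optimization problem over integer points of the TU polyhedron $\{x \colon Tx\leq b\}$ subject to fixing these congruencies. A single congruency can be handled by the framework behind minimum congruency-constrained cuts; multiple constant-modulus congruencies can, in the prime modulus case, be combined via CRT to a single prime modulus $\prod_i m_i$, which is then amenable to the algebraic tools used in prior work (e.g., Chevalley--Warning style arguments or linear algebra over $\mathbb{F}_p$).

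\textbf{Main obstacle.} The principal difficulty, and the reason why previous progress has been restricted to prime $\Delta$, is that when $\Delta$ has repeated prime factors (already $\Delta = 4 = 2^2$), the ring $\Z/\Delta \Z$ is no longer a field, and CRT does not decompose the problem into prime moduli. Consequently the number-theoretic machinery invoked in the prime case breaks down, and one needs a genuinely different approach. My plan for overcoming this is two-pronged: first, exploit the global structure of the TU subsystem via Seymour's decomposition of regular matroids \cite{seymour_1980}, which reduces the TU part to graphic, cographic, or $R_{10}$ building blocks on which congruency constraints can be handled combinatorially; second, replace field-based existence arguments by a \emph{lift-and-correct} scheme that processes the $p$-adic digits of the modulus one at a time, using the $p=2$, $\Delta=2$ algorithm at the base layer and carefully adjusting solutions modulo $p^{k+1}$ from solutions modulo $p^k$ without losing polynomial running time.

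\textbf{Scope of the plan.} I expect this strategy to deliver, for now, only a randomized strongly polynomial feasibility algorithm for small constant $\Delta$ (concretely $\Delta \leq 4$), since already two-digit lifts for $p = 2$ make the enumeration and the correction step delicate. Extending to arbitrary constant $\Delta$ would require either a uniform prime-power replacement for the Chevalley--Warning-type step or an entirely new structural handle on strictly $\Delta$-modular matrices that bypasses the \MCCTU{} reformulation. Pinpointing such a tool is, I expect, where the main research effort must go.
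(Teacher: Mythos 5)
This statement is a \emph{conjecture}, and the paper does not prove it --- it explicitly states that the conjecture remains open even in the strictly $\Delta$-modular case. Your proposal does not prove it either: by your own admission in the final paragraph, the strategy is expected to yield only a randomized strongly polynomial \emph{feasibility} algorithm for $\Delta\leq 4$, which is precisely the paper's main theorem (\cref{thm:strictly-delta}), not the conjecture. So the first thing to be clear about is that what you have written is a research plan toward the paper's partial results, not a proof of the statement; there is no argument here (or in the paper) that handles optimization, arbitrary constant $\Delta$, or even deterministic feasibility for $\Delta=4$.

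Comparing your plan to what the paper actually does for its partial results: the reduction to \MCCTU{} is real (\cref{lem:redToMCCTU_general}), though the paper implements it by transforming $y=Hx$ for a maximal-determinant $n\times n$ submatrix $H$ and extracting the congruencies from the Smith normal form of $\Delta H_F$, rather than via a Hermite-form row split; the use of Seymour's decomposition is also as you describe. The substantive divergence is in how composite moduli are handled. Your proposed mechanism --- CRT where possible plus a $p$-adic ``lift-and-correct'' scheme climbing from solutions modulo $p^k$ to $p^{k+1}$ --- is not what the paper does, and you give no indication of how the correction step would preserve feasibility of the TU system, which is exactly where such lifting arguments tend to fail. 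The paper instead (i) recasts the congruency constraints as a single constraint over a finite abelian group $G$ and works in a hierarchy of relaxed target sets $R\subseteq G$ of depth $|G|-|R|\leq 3$; (ii) reduces to the quotient group $G/H$ whenever $R$ is a union of cosets of a subgroup $H$ (\cref{lem:cosetReduction}); and (iii) proves a Cauchy--Davenport substitute valid for arbitrary finite abelian groups under the assumption that $R$ is not a union of cosets (\cref{lemma:CD-replacement}). It also resolves the transposed-network base block for all finite abelian groups by an elementary pigeonhole argument on zero-sum subsets (\cref{lem:elementary,thm:GCLF}), circumventing the prime-power restriction of congruency-constrained submodular minimization --- an ingredient your plan does not address at all. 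In short: your high-level roadmap matches the paper's, but the key new ideas that make the non-prime case work are absent from your proposal, and the statement itself remains unproven by both.
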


Even resolving this weaker  conjecture would settle several open problems, including congruency-constrained min cuts (in both directed and undirected graphs), or the problem of efficiently and deterministically finding a perfect matching in a red/blue edge-colored bipartite graph such that the number of red matching edges is $r \pmod*{m}$.
(This is a simplified version of the famous red-blue matching problem, where the task is to find a perfect matching with a specified number of red edges; for both versions, randomized algorithms are known.)
Interestingly, for the bimodular case ($\Delta=2$), a result by \textcite{veselov_2009_integer} implies that \cref{conj:deltaMod} and \cref{conj:strictlyDeltaMod} are equivalent (see~\cite{artmann_2017_strongly}).

Our goal is to shed further light on \cref{conj:strictlyDeltaMod} and overcome some important hurdles of prior approaches.
In a first step, we note that a positive resolution of \cref{conj:strictlyDeltaMod} does not only imply efficient solvability of \MCCTU problems, but also vice versa, and this reduction works in strongly polynomial time.
\begin{lemma}\label{lem:redToMCCTU}
Let $\Delta>0$.
Every strictly $\Delta$-modular IP can, in strongly polynomial time, be reduced to an \MCCTU problem with moduli $m_i$ such that $\Delta=\prod_{i=1}^q m_i$.
\end{lemma}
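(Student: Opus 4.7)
The plan is to combine the Smith Normal Form (SNF) with a pivot on a unimodular $n\times n$ submatrix. I would first compute unimodular $U \in \Z^{k\times k}$ and $V \in \Z^{n\times n}$ with $UAV = \left(\begin{smallmatrix}D\\0\end{smallmatrix}\right)$, where $D = \mathrm{diag}(d_1,\ldots,d_n)$ and $d_1\mid\cdots\mid d_n$. Since the greatest common divisor of all $n\times n$ minors is invariant under unimodular row and column operations and equals $\Delta$ for strictly $\Delta$-modular $A$, this gives $\prod_{i=1}^n d_i = \Delta$. Next, I would substitute $x = Vy$; writing $W_1 := (U^{-1})_{[k],[n]}$ for the first $n$ columns of $U^{-1}$, the constraint $AVy \leq b$ becomes $W_1 D y \leq b$. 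The further substitution $z := Dy \in \Z^n$ replaces the integrality of $y$ by congruencies $z_i \equiv 0 \pmod{d_i}$ for each $i\in[n]$, turning the IP into $\min\{c_1^\top z : W_1 z \leq b,\ z \in \Z^n,\ z_i \equiv 0\pmod{d_i}\text{ for all } i\in[n]\}$ for some $c_1 \in \mathbb{R}^n$.

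The second step is to reach a TU form via a final unimodular change of variables. The $n\times n$ minors of $W_1$ equal those of $W_1 D = AV$ divided by $\det D = \Delta$, so they lie in $\{-1, 0, 1\}$; by full column rank, there exists $I_0 \subseteq [k]$ with $|I_0| = n$ and $(W_1)_{I_0,:}$ unimodular. Setting $P := ((W_1)_{I_0,:})^{-1} \in \Z^{n\times n}$ (which is unimodular) and substituting $z = Pz''$, the constraint becomes $T z'' \leq b$ with $T := W_1 P$. By construction $T_{I_0,:} = I_n$, and I would argue that the ``non-pivot'' block $W' := T_{[k]\setminus I_0,:}$ is TU: any $r\times r$ minor of $W'$ can be padded with $(n-r)$ of the identity rows from $T_{I_0,:}$ to form an $n\times n$ minor of $T$, which equals $\det P$ times the corresponding $n\times n$ minor of $W_1$ and is therefore in $\{-1, 0, 1\}$. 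A Laplace expansion along the identity rows then shows that \emph{every} minor of $T$ lies in $\{-1, 0, 1\}$, so $T$ is totally unimodular.

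Finally, the congruencies $z_i \equiv 0 \pmod{d_i}$ translate into $\gamma_i^\top z'' \equiv 0 \pmod{d_i}$, where $\gamma_i$ is the $i$-th row of $P$, and the objective becomes linear in $z''$. Discarding trivial congruencies (those with $d_i = 1$) yields an \MCCTU{} instance whose moduli multiply to $\prod_i d_i = \Delta$. All steps---computing the SNF, inverting the unimodular submatrix $(W_1)_{I_0,:}$, and the two changes of variables---can be carried out in strongly polynomial time. The main obstacle is establishing that the pivoted matrix $T$ is TU; this reduces to the structural observation that pivoting a matrix with max-minors in $\{-1, 0, 1\}$ on a unimodular $n\times n$ submatrix produces a TU matrix, reflecting the regular-matroid nature of such matrices.
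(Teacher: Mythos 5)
Your reduction is mathematically sound and arrives at essentially the same object as the paper's proof (a TU system obtained by dividing out a determinant\nobreakdash-$\pm\Delta$ structure, plus congruencies extracted from a Smith normal form); your derivation of $\prod_i d_i=\Delta$ via determinantal divisors is in fact slightly cleaner than the paper's isomorphism-theorem argument. However, there is a genuine gap in the complexity claim, and it sits exactly where the lemma has its only real content. You begin by computing the Smith normal form of $A$ itself. The entries of a strictly $\Delta$-modular matrix are not bounded in terms of $n$ and $\Delta$ --- for instance $\left(\begin{smallmatrix}1&0\\M&1\end{smallmatrix}\right)$ is strictly $1$-modular for every $M$ --- and producing unimodular $U,V$ with $UAV=\left(\begin{smallmatrix}D\\0\end{smallmatrix}\right)$ requires gcd-type eliminations on these unbounded entries. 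No strongly polynomial algorithm for this is known: already the gcd of two integers cannot be obtained with a number of arithmetic operations independent of their bit length by any known method (running the standard SNF elimination on $\left(\begin{smallmatrix}F_{n+1}&F_n\\F_n&F_{n-1}\end{smallmatrix}\right)$, a strictly $1$-modular matrix, reproduces the Euclidean algorithm on consecutive Fibonacci numbers and takes $\Theta(n)$ steps on a $2\times 2$ input). Since the lemma without the strongly-polynomial qualifier is already known (the paper points to \cite{gribanov_2022_delta-modular}), your unsupported assertion that computing the SNF of $A$ is strongly polynomial is the crux of the matter, not a formality.

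The paper circumvents exactly this by reordering your two steps. It first substitutes $y=Hx$ for an arbitrary nonsingular $n\times n$ submatrix $H$ of $A$ (found by Gaussian elimination, which is strongly polynomial, and which automatically has $\lvert\det H\rvert=\Delta$ by strictness), thereby obtaining the TU matrix $T=AH^{-1}$ directly via the same minor-division and identity-padding argument you use at the end. Only then does it convert the integrality condition $H^{-1}y\in\Z^n$ into congruencies, applying the SNF algorithm not to $A$ but to $\Delta H_F$, where $H_F\in[0,1)^{n\times n}$ is the fractional part of $H^{-1}$: this is an integer matrix with entries bounded by the constant $\Delta$, so the weakly polynomial Kannan--Bachem algorithm runs on it in strongly polynomial time. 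Your argument can be repaired by adopting this ordering, but as written its first step does not meet the runtime requirement of the lemma.
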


Without the strongly polynomial time condition, this also follows from very recent work of~\textcite[Lemma 4]{gribanov_2022_delta-modular}.

Further, we are interested in making progress regarding the feasibility version of \cref{conj:strictlyDeltaMod}, i.e., efficiently deciding whether a strictly $\Delta$-modular IP is feasible.
Prior approaches settle this question for $\Delta=2$~\cite{artmann_2017_strongly} and---using a randomized algorithm---%
for $\Delta=3$~\cite{nagele_2022_congruency}.
A main hurdle to extend these is that they crucially rely on $\Delta$ being prime, for example through the use of the Cauchy-Davenport Theorem.
Our main contribution here is to address this.
In particular, we can check feasibility for $\Delta=4$ with a randomized algorithm, which is the first result in this context for non-prime $\Delta$.
More importantly, our techniques will hopefully prove useful for future advances on this challenging question.
\begin{theorem}\label{thm:strictly-delta}
There exists a strongly polynomial-time randomized algorithm to find a feasible solution of a strictly $4$-modular IP, or detect that it is infeasible.
\end{theorem}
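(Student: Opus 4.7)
The plan is to invoke \cref{lem:redToMCCTU} to reduce the strictly $4$-modular IP to an \MCCTU problem with moduli $m_1,\ldots,m_q$ whose product equals $4$. Up to ordering, only two cases arise: either a single modulus $m_1=4$, or two moduli $m_1=m_2=2$. In both cases, my aim is to reduce in strongly polynomial time to a constant number of \MCCTU-type subproblems whose moduli are all prime, since such subproblems can be attacked by the strictly bimodular-type machinery of~\cite{artmann_2017_strongly,nagele_2020_newContraction,nagele_2022_congruency}.

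For the two-mod-$2$ case, I would handle the parity constraints iteratively: solve the \MCCTU with the first mod-$2$ congruency using the strictly bimodular algorithm, and then recursively search within an implicit description of its feasible set for a solution also satisfying the second mod-$2$ constraint. Since each individual modulus is prime, the existing prime-based machinery applies to each round, but care is required in bookkeeping the structural/polyhedral decomposition obtained in the first step so that the second round fits again into a strictly bimodular-type \MCCTU.

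The single-mod-$4$ case is the heart of the proof and the main obstacle, because $\Z/4\Z$ is not a field and the Cauchy--Davenport--style sumset arguments used in~\cite{artmann_2017_strongly,nagele_2022_congruency} to control the distribution of $\gamma^\top x\bmod p$ among vertex-like solutions of the TU relaxation no longer apply. My plan is to exploit the subgroup chain $\{0\}\subset 2(\Z/4\Z)\subset \Z/4\Z$: enumerate the induced mod-$2$ residue of $\gamma^\top x$, solve the resulting \MCCTU with prime modulus $2$ via the bimodular framework, and then attempt to refine a mod-$2$ solution into a mod-$4$ solution. The hard part is this refinement step, which requires a new structural input replacing the prime-based sumset arguments. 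Concretely, within a sufficiently rich family of vertex solutions of the TU relaxation that share the same mod-$2$ residue, one needs to show that $\gamma^\top x\bmod 4$ either attains both elements of the coset $r+2(\Z/4\Z)$ or is provably confined to one of them. I expect this to follow by combining (i)~a proximity/circuit-augmentation argument bounding the effect of short exchanges on $\gamma^\top x\bmod 4$, (ii)~Seymour's decomposition of regular matroids~\cite{seymour_1980} to reduce to graphic, cographic, and $R_{10}$ building blocks where an ad hoc analysis is feasible, and (iii)~a randomized isolation step in the spirit of~\cite{camerini_1992_rpp}, which would explain the randomization appearing in the theorem statement.
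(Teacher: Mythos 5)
Your first step---invoking \cref{lem:redToMCCTU} to pass to an \MCCTU problem with $\prod_i m_i=4$---matches the paper, which then reinterprets the congruency constraints as a single constraint over an abelian group of order $4$ and proves \cref{thm:GCTUFmod4}. From there your route diverges and has two genuine gaps. First, in the $m_1=m_2=2$ case, ``recursively search within an implicit description of its feasible set'' is not an operation the bimodular machinery supplies: the algorithm of \cite{artmann_2017_strongly} returns a single solution, not a tractable description of all solutions of the first congruency-constrained problem into which a second parity constraint could be injected, and the solution set of an \MCCTU problem is not itself the integer-point set of a TU system, so the second round does not ``fit again into a strictly bimodular-type \MCCTU.'' Second, and more centrally, the mod-$4$ refinement step---showing that $\gamma^\top x \bmod 4$ either attains both elements of the coset $r+2(\mathbb{Z}/4\mathbb{Z})$ or is provably confined to one of them---is precisely the missing replacement for the sumset arguments, and the ingredients you list (proximity, base-block case analysis, randomized isolation) do not supply it. The hard part of using Seymour's decomposition here is not the base blocks but propagating group information back through the $3$-sums, where one must control the sets of attainable group values of the two sides of the sum; your proposal does not engage with that at all.

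The paper resolves this differently: instead of lifting along the chain $\{0\}\subset 2(\mathbb{Z}/4\mathbb{Z})\subset \mathbb{Z}/4\mathbb{Z}$, it relaxes the target to a set $R\subseteq G$ and runs an induction over the depth $|G|-|R|\le 3$. Two ingredients replace Cauchy--Davenport: \cref{lem:cosetReduction} quotients out whenever $R$ is a union of cosets of a nontrivial subgroup (the only place where your subgroup-chain intuition survives), and \cref{lemma:CD-replacement} shows that if $R$ is not of that form, then $|X|=|G|-|R|$ and $|Y|\ge 2$ already force $(X+Y)\cap R\neq\emptyset$. These feed a pattern analysis across the $A$-/$B$-split of a $3$-sum (\cref{lem:propagating_hidden_solutions} and the four pattern types). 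Separately, for the transposed-network base block the paper proves a new feasibility result (\cref{thm:GCLF}) valid for every finite abelian group; you would need something like it, since your two-mod-$2$ case corresponds to the non-cyclic group $\mathbb{Z}/2\mathbb{Z}\times\mathbb{Z}/2\mathbb{Z}$, which is not covered by the prime-power congruency-constrained submodular minimization results. Your guess about the source of randomization (the exact-weight subroutine of \cite{camerini_1992_rpp} for the network-matrix base block) is correct.
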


We remark that the randomization appearing in the above theorem comes from the fact that one building block of our result is a reduction to a problem class that includes the aforementioned congruency-constrained red/blue-perfect matching problem, for which only randomized approaches are known.

\subsection{Group-constrained problems and proof strategy for \cref{thm:strictly-delta}}

To show \cref{thm:strictly-delta}, we exploit its close connection to \MCCTU.
Capturing the congruency constraints of an \MCCTU problem through an abelian group constraint, we attain the following \emph{group-constrained TU feasibility problem}.
\begin{mdframed}[innerleftmargin=0.5em, innertopmargin=0.5em, innerrightmargin=0.5em, innerbottommargin=0.5em, userdefinedwidth=0.95\linewidth, align=center]
	{\textbf{Group-Constrained TU Feasibility (\GCTUF{}\linkdest{prb:GCTUF}):}}
	\sloppy Let $T\in\mathbb{R}^{k\times n}$ be a TU matrix, let $b\in\mathbb{Z}^k$, let $(G,+)$ be a finite abelian group, and let $\gamma\in G^{n}$ and $r\in G$. The task is to show infeasibility or find a solution of the system
	$$
	Tx \leq b,\ \gamma^\top x = r,\ x\in\mathbb{Z}^n \enspace.
	$$
\end{mdframed}
Here, the scalar product $\gamma^\top x$ denotes the linear combination of the group elements $\gamma_1,\ldots,\gamma_n$ with multiplicities $x_1,\ldots,x_n$ in $G$.
Group constraints generalize congruency constraints, which are obtained in the special case where $G$ is cyclic.
More generally, by the fundamental theorem of finite abelian groups, a finite abelian group $G$ is, up to isomorphism, a direct product of cyclic groups.
Hence, a group constraint can be interpreted as a set of congruency constraints and vice versa.
Thus, \GCTUF and \MCCTU feasibility are two views on the same problem.
We stick to \GCTUF mostly for convenience of notation. Moreover, the \GCTUF setting also allows for an elegant use of group-related results later on.
One may assume that the group is given through its multiplication table (the \emph{Cayley table}).
In fact, the precise group representation is not of great importance to us.
Concretely, for constant $\Delta$, strictly $\Delta$-modular IP feasibility problems reduce to \GCTUF problems with a  constant size group.
Many of our polynomial-time algorithmic results can even be extended to settings where the group size is not part of the input, and access to group operations is provided through an oracle.

By a slight extension of \cref{lem:redToMCCTU} (see \cref{sec:reductionToGCTU}) and the aforementioned equivalent viewpoint of multiple congruency constraints and a group constraint, in order to prove \cref{thm:strictly-delta}, it is enough for us to show the equivalent statement below.
\begin{theorem}\label{thm:GCTUFmod4}
	There exists a strongly polynomial time randomized algorithm for \GCTUF{} problems with a group of cardinality at most $4$.
\end{theorem}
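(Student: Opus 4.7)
I would split the argument according to the isomorphism type of the finite abelian group $G$. For $|G|\leq 1$ the group constraint is vacuous and the problem is TU LP feasibility, solved in strongly polynomial time via Tardos' algorithm. For $|G|=2$ the instance is a single mod-$2$ congruency constraint over a TU system and falls under the bimodular algorithm of \textcite{artmann_2017_strongly}. For $|G|=3$ we have a single mod-$3$ congruency constraint, handled by the randomized algorithm of \textcite{nagele_2022_congruency}. These cover all groups of cardinality at most $3$, which are either trivial or have prime order.

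The novel case is $|G|=4$, for which $G$ is either $\mathbb{Z}/4\mathbb{Z}$ or the Klein four-group $\mathbb{Z}/2\mathbb{Z}\times\mathbb{Z}/2\mathbb{Z}$. In both, I would exploit that $G$ admits a subgroup $H$ of order $2$ with $G/H\cong \mathbb{Z}/2\mathbb{Z}$, and peel these two layers off one at a time. Let $\pi\colon G\twoheadrightarrow G/H$. First, solve the projected \GCTUF instance $Tx\leq b,\,\pi(\gamma)^\top x = \pi(r)$ via the $|G|=2$ subroutine; infeasibility of the projection certifies infeasibility of the original. Otherwise we obtain a feasible $\bar x$ with $\gamma^\top \bar x\in r+H$, and it remains to decide whether the specific element $r\in r+H$ (as opposed to $r+h$, where $h$ is the non-zero element of $H$) is attainable as $\gamma^\top x$ for some integer $x$ with $Tx\leq b$ in the correct $G/H$-coset.

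The main obstacle is this last decision. Knowing one $\bar x$ that is correct modulo $H$ does not tell us whether $\gamma^\top\bar x$ equals $r$ or $r+h$, and correcting from one to the other requires modifying $\bar x$ by an integer vector $d$ with $T(\bar x+d)\leq b$ and $\gamma^\top d=h$. Because $|G|$ is composite, one cannot rely on Cauchy-Davenport-style arguments that were pivotal for prime $|G|$. My plan to sidestep this is a polynomial-size enumeration driven by proximity bounds for bimodular/TU systems: guess a constant amount of structural information about a hypothetical feasible $x$ (for example, a constant-size subset of tight rows, the values of a few integer coordinates, or a sign pattern distinguishing "shift by $h$" from "no shift"), and for each guess encode the remainder as a residual bimodular \GCTUF solvable by the $|G|=2$ algorithm. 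Correctness follows by arguing that the family of guesses is exhaustive whenever a feasible solution exists, so either one branch succeeds or all branches returning infeasible certify infeasibility of the original instance.

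The randomization stated in the theorem is inherited from two sources: the $|G|=3$ subroutine used when handling sub-instances of that form, and the fact that certain bimodular reformulations arising inside the enumeration step reduce to combinatorial problems in the red/blue perfect-matching family for which only randomized polynomial algorithms are currently known. Each layer is strongly polynomial and the enumeration overhead depends only on the constant $|G|\leq 4$, so the overall running time is strongly polynomial as claimed. The hard part will be making the enumeration argument rigorous without appealing to primality of $|G|$ and ensuring that the structural guesses can indeed be encoded as bimodular residual instances.
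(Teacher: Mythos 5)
Your handling of $|G|\leq 3$ by deferring to prior work is acceptable, but for $|G|=4$ there is a genuine gap, and it sits exactly where you locate it yourself: the ``correction'' step. Projecting $\gamma^\top x = r$ along a subgroup $H$ of order $2$ only relaxes the constraint to $\gamma^\top x \in r+H$; deciding afterwards whether the specific element $r$ (rather than $r+h$) is attainable is not a smaller problem in any useful sense. The set $\{x\colon Tx\leq b,\ \gamma^\top x\in r+H\}$ is not itself a TU system, so it cannot be handed back to the $|G|=2$ subroutine, and the residual question is again a group-constrained feasibility problem with a singleton target over a group of order $4$ --- the full-difficulty instance. The proposed fix, enumerating ``a constant amount of structural information'' about a hypothetical solution and encoding each guess as a residual bimodular instance, is unsupported: no argument is given that a constant (or polynomial) family of guesses is exhaustive, and no known proximity bound for these systems yields such a statement. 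This is precisely the obstruction that makes the composite case hard, and the proposal does not overcome it.

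The paper's route is entirely different: it never layers the group. It reduces strictly $4$-modular IPs to \GCTUF{} (\cref{lem:redToMCCTU}), embeds \GCTUF{} into a hierarchy of relaxed problems \RGCTUF{} with target sets $R\subseteq G$ of depth $d=|G|-|R|\leq 3$ (\cref{thm:RGCTUF}), and recurses on Seymour's decomposition of the TU constraint matrix rather than on the group. A quotient by a subgroup does appear (\cref{lem:cosetReduction}), but only when the \emph{target set} satisfies $R=R+H$, which never holds for the original singleton target; the role of primality is instead played by \cref{lemma:CD-replacement}, a Cauchy--Davenport substitute valid for arbitrary finite abelian groups whenever $R$ is not a union of cosets, combined with a new analysis of the $A$-pattern (\cref{lem:propagating_hidden_solutions}). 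Base blocks are solved via exact-length circulation (the sole source of randomization, via \textcite{camerini_1992_rpp}) and, for transposed network matrices, the elementary \cref{lem:elementary}; your attribution of the randomness partly to the $|G|=3$ subroutine is also inaccurate.
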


On a high level, we follow a well-known strategy for TU-related problems by employing Seymour's decomposition~\cite{seymour_1980} to decompose the problem into problems on simpler, more structured TU matrices.
(See, e.g.,~\cite{dinitz_2014_matroid,artmann_2017_strongly,aprile_2021_regular,nagele_2022_congruency}.)
Roughly speaking, Seymour's decomposition states that a TU matrix is either very structured---in which case we call it a \emph{base block}---or can be decomposed into smaller TU matrices through a small set of well-defined operations.
(See the discussion following \cref{thm:TUdecomp} for more details.)
The use of Seymour's decomposition typically comes with two main challenges, namely
\begin{enumerate*}
\item solving the base block cases, and
\item propagating solutions of the base block cases back through the decomposition efficiently to solve the original problem.
\end{enumerate*}
First, we show that this propagation can be done efficiently for our problem.
\begin{theorem}\label{thm:reductionToBaseBlocksG4}
	Let $G$ be an abelian group of size at most $4$.
	Given an oracle for solving
	{%
		base block \GCTUF{} problems with group $G$, %
	}%
	we can solve
	{%
		\GCTUF{} problems with group $G$ %
	}%
	in strongly polynomial time with strongly polynomially many calls to the oracle.
\end{theorem}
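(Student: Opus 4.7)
The plan is to traverse Seymour's decomposition of the TU matrix $T$ bottom-up, showing that each $k$-sum node can be resolved using only a constant number of \GCTUF{} subproblems on its two children. Combined with the base block oracle at the leaves, this yields a strongly polynomial recursion with strongly polynomially many oracle calls. By Seymour's theorem, $T$ is either a base block---on which we invoke the oracle directly---or admits a $k$-sum decomposition $T = T_1 \oplus_k T_2$ with $k\in\{1,2,3\}$, where $T_1$ and $T_2$ are strictly smaller TU matrices. Since such a decomposition is computable in strongly polynomial time and has polynomial size, it suffices to describe a combining step for each sum type that preserves strong polynomiality and multiplies the number of subproblems by a constant.

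For $k=1$, the block-diagonal structure $T = \operatorname{diag}(T_1, T_2)$ fully decouples the variables $x = (x^{(1)}, x^{(2)})$ and the inequality system. The group constraint reads $(\gamma^{(1)})^\top x^{(1)} + (\gamma^{(2)})^\top x^{(2)} = r$. Since $|G|\leq 4$, we enumerate all $s \in G$ and solve two independent \GCTUF{} subproblems on $T_1$ and $T_2$ with group targets $s$ and $r-s$ respectively; the original instance is feasible if and only if some $s$ yields feasibility on both sides, in which case the concatenation of the two solutions produces a feasible $x$.

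For $k\in\{2,3\}$, the two pieces $T_1, T_2$ share a rank-$1$ or rank-$2$ interaction via common rows and/or columns. The system $Tx \leq b$ then decomposes into the two subsystems together with a small set of shared coupling quantities---for instance, the inner product of a specific column of $T_1$ with $x$ must match a dual quantity on the $T_2$ side. The technical core, following the blueprint of~\cite{artmann_2017_strongly, nagele_2022_congruency} for bimodular and strictly trimodular IPs, is to handle these integer-valued couplings by enumerating (i) constantly many residues of the couplings modulo $|G|$, to allocate the group constraint across the two sides, and (ii) where needed, a polynomial number of break-point values for the remaining integer degrees of freedom, extracted through parametric reasoning on the TU subsystems. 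For each enumerated choice we obtain two decoupled \GCTUF{} subproblems on $T_1$ and $T_2$ with modified right-hand sides and group targets, solved by induction, and the resulting partial solutions can be stitched into a feasible $x$ for $T$.

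The main obstacle is ensuring that the $2$- and $3$-sum combining step runs in strongly polynomial time while producing only polynomially many subproblems. The coupling enumeration has to be bounded independently of the encoding size of $b$, which is precisely where we exploit $|G|\leq 4$: the constant group size keeps the residue enumeration at $O(|G|^{k-1}) = O(1)$ per sum node, and the residual integer sweeps reduce---via TU-preserving reformulations analogous to those in the prior $\Delta\in\{2,3\}$ arguments---to finitely many parametric regimes identifiable in strongly polynomial time. Since each node of Seymour's decomposition thus contributes only a constant multiplicative overhead in the subproblem count, the overall recursion yields strongly polynomially many calls to the base block oracle, establishing the theorem.
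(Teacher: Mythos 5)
Your high-level plan (recurse along Seymour's decomposition, decouple the two summands at each sum node, enumerate how the group constraint splits) is the right starting point, but as stated the recursion is exponential, not strongly polynomial, and the step that would fix this is exactly the missing idea. At a $k$-sum node the two children $T_1,T_2$ need not be balanced: one of them (say the $A$-side) can have all but two of the columns, and the decomposition tree can have depth $\Theta(n)$. If each node multiplies the number of subproblems by a constant factor $\geq 2$ (e.g.\ by enumerating all $s\in G$ and solving the $A$-side with target $\{s\}$ and the $B$-side with target $\{r-s\}$), you get $|G|^{\Theta(n)}$ oracle calls. The paper avoids this by working in the relaxed hierarchy of \RGCTUF{} problems with target sets $R\subseteq G$ and depth $d=|G|-|R|$: the small side $B$ is recursed on at the \emph{same} depth (affordable because it has at most half the columns), while the large side $A$ is only ever recursed on at \emph{strictly smaller} depth, i.e.\ with strictly larger target sets $R_A$, $|R_A|\geq |R|+1$. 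Since the depth can decrease at most $|G|-1$ times, this asymmetry is what makes the call count $f(n,d)$ polynomial (cf.\ the bound $f(n,d)\leq (d+1)^{3d}n^{d+3\log_2(d+1)+2}$ in the proof of \cref{thm:reductionToBaseBlocksRGCTUF}). Your proposal never reduces the depth on the large side, so it cannot close the recursion.

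The second gap is that once you only allow yourself relaxed (larger-target) queries on the $A$-side, you can no longer determine the full set $\pi_A(\alpha,\beta)$ of achievable group values, and feasibility may be \emph{hidden}: the unique value in $\pi_B(\alpha,\beta)$ may combine into $R$ only with an element of $\pi_A(\alpha,\beta)$ that your $d$ relaxed queries did not reveal. Handling this is the technical core of the paper and occupies most of \cref{sec:technicalOverview}: the restriction of the coupling values to a constant-size pattern shape $\Pinar$ (\cref{lemma:boundedPatternShape}), the averaging result (\cref{prop:averaging}), the coset reduction (\cref{lem:cosetReduction}) enabling the Cauchy--Davenport replacement (\cref{lemma:CD-replacement}) for non-prime group orders, and the propagation lemma (\cref{lem:propagating_hidden_solutions}) together with a four-way case analysis of pattern structure. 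Note also that $|G|\leq 4$ is not used merely to keep an enumeration constant-size, as your write-up suggests; it enters essentially through \cref{lem:propagating_hidden_solutions}, which fails already at depth $d=4$ (\cref{fig:barrier_example}). Finally, your ``polynomial number of break-point values'' for the integer couplings is not needed and not what the paper does: a constant-size set $\Pinar$ of coupling pairs suffices. Without the depth-reducing recursion and the hidden-solution analysis, the proposal does not establish the theorem.
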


In fact, our approach underlying \cref{thm:GCTUFmod4,thm:reductionToBaseBlocksG4} operates in a hierarchy of \GCTUF{} problems with increasingly relaxed group constraints of the form $\gamma^\top x\in R$ for subsets $R\subseteq G$ of increasing size, and allows for proving the above results for such relaxed \GCTUF{} problems for arbitrary constant-size groups $G$ as long as $|G|-|R|\leq 3$.
(See \cref{sec:technicalOverview} for more details.)
In principle, this is along the lines of the approach to congruency-constrained TU problems in~\cite{nagele_2022_congruency}, but incorporates the new viewpoint of group constraints, and additionally improves over earlier results in two ways:
First, our approach applies to arbitrary finite abelian groups, while previous setups heavily relied on the group cardinality being a prime.
Secondly, in the setting with relaxed group constraints, we extend the admissible range of $|G|-|R|$ by one, thus proceeding further in the hierarchy of \GCTUF problems, and newly covering \GCTUF problems with groups of cardinality $4$.

Besides being a key part of our approach, \cref{thm:reductionToBaseBlocksG4} underlines that base block \GCTUF problems are not merely special cases, but play a key role in progress on general \GCTUF problems.
There are only two non-trivial types of such base block \GCTUF problems, namely when the constraint matrix is a so-called \emph{network matrix} or a transpose thereof.
Both cases cover combinatorial problems that are interesting on their own, and their complexity status remains open to date.
If the constraint matrix is a network matrix, \GCTUF can be cast as a circulation problem with a group constraint.
By reducing to and exploiting results of \textcite{camerini_1992_rpp} on exact perfect matching problems, a randomized algorithm for the congruency-constrained case has been presented in~\cite{nagele_2022_congruency}.
We observe that these results extend to the group-constrained setting.
The other base block case, where the constraint matrix is the transpose of a network matrix, can be cast as a group-constrained directed minimum cut problem by leveraging a result in~\cite{nagele_2022_congruency}.
Prior work combined this reduction with results on congruency-constrained submodular minimization~\cite{nagele_2018_submodular} to solve the optimization version of the problem for congruency-constraints of prime power modulus.
We show that the feasibility question on this base block can be solved efficiently on any finite abelian group of constant order, thus circumventing the prime power restriction that is intrinsic in prior approaches.
\begin{theorem}\label{thm:transpose-bb}
	Let $G$ be a finite abelian group. There is a strongly polynomial time algorithm for solving \GCTUF{} problems with group $G$ where the constraint matrix is the transpose of a network matrix.
\end{theorem}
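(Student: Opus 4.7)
The plan is to build on the reduction from~\cite{nagele_2022_congruency} that recasts a \GCTUF{} instance whose constraint matrix is the transpose of a network matrix as a group-constrained directed $s$--$t$ cut problem: given a digraph $D=(V,A)$ with integer capacities $c\colon A\to\Z$, group labels $\gamma\colon A\to G$, terminals $s,t\in V$, and a target $r\in G$, the task becomes to decide whether there is $S\subseteq V$ with $s\in S$, $t\notin S$, satisfying the capacity constraints and $\sum_{a\in\delta^+(S)}\gamma_a = r$ in $G$. Once this reduction is in place, the remaining challenge is a purely combinatorial one on directed cuts.

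I would first strip away the capacity constraints by standard network-flow preprocessing: contracting arcs that are forced into every feasible $s$--$t$ cut and deleting arcs that are forbidden from every feasible cut, while updating $r$ accordingly. After this step, every $s$--$t$ cut in the resulting digraph is capacity-feasible, so the only remaining condition is the group equation on its labels. The cost of a cut can then be encoded as a fully submodular function on $V\setminus\{s,t\}$.

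Next, I would invoke the fundamental theorem of finite abelian groups to write $G\cong\prod_i \Z/p_i^{e_i}\Z$ and process the prime-power factors one at a time. For a single prime-power factor, the projected group condition is a congruency constraint modulo a prime power, which is exactly the regime covered by the congruency-constrained submodular function minimization framework of~\cite{nagele_2018_submodular} applied to the directed cut function. I would iteratively refine: at each step, solve a congruency-constrained cut problem over the current prime-power factor while restricting attention to cuts that already satisfy all previously imposed congruency targets, and, if necessary, enumerate over the bounded number of coset representatives of each factor.

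The main obstacle is that the family of $s$--$t$ cuts satisfying a fixed congruency constraint is not, a priori, the minimizer family of a submodular function, so a naive iteration across factors loses the structure that~\cite{nagele_2018_submodular} relies on. I would address this by encoding already-enforced constraints directly into the digraph through gadgets---for instance, parallel ``copy'' arcs carrying shifted labels or auxiliary source/sink attachments---so that each successive iteration again takes the form of a prime-power-congruency-constrained minimum cut on a slightly enlarged graph. This reduces the overall feasibility question to polynomially many calls to a prime-power-congruency-constrained cut subroutine, each solvable in strongly polynomial time, yielding the claimed strongly polynomial-time algorithm for arbitrary finite abelian~$G$.
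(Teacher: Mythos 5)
There is a genuine gap, and it sits exactly at the step you flag as ``the main obstacle.'' Your plan is to decompose $G$ into prime-power cyclic factors and iterate the congruency-constrained submodular minimization framework of~\cite{nagele_2018_submodular} factor by factor, repairing the loss of submodular structure by graph gadgets that encode the already-enforced congruencies. But this is precisely the barrier the paper identifies as intrinsic to the submodular-minimization route: the family of cuts satisfying a fixed congruency constraint is not a ring family, and no gadget construction is known that re-expresses ``$\gamma_1(\delta^+(S))=r_1$'' as a structural property of an enlarged cut instance. If such a gadget existed, it would resolve congruency-constrained submodular minimization for composite moduli, which the paper explicitly states remains open (see the discussion around~\eqref{eq:CCSM}). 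Your proposal names the obstacle but the suggested fix (``parallel copy arcs carrying shifted labels or auxiliary source/sink attachments'') is not substantiated and, as far as is known, cannot work in the required generality; the argument therefore does not go through for non-prime-power $|G|$, which is the whole point of the theorem.

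The paper's proof avoids submodular minimization entirely. It uses the observation (stated in a footnote) that the reduction of~\cite{nagele_2022_congruency} actually lands in a \emph{lattice} feasibility problem, \GCLF{} (\cref{prop:GCTUFtoGCLF}), not merely a submodular one, and then exploits the lattice structure directly: by the elementary zero-sum pigeonhole argument of \cref{lem:elementary}, any inclusion-wise minimal feasible lattice element $X$ has a generating set $C_X$ of size less than $|G|$ (otherwise one could remove a zero-sum subset of $C_X$ and stay feasible with the same group value), so enumerating all candidate sets $C_X$ of size below $|G|$ decides feasibility in strongly polynomial time (\cref{thm:GCLF}). This is why feasibility escapes the prime-power restriction while the optimization version does not: minimality under inclusion is a useful notion for feasibility but says nothing about cost. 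To repair your proof you would need either to justify the gadget construction (which would be a significant standalone result) or to switch to an argument, like the paper's, that does not route through congruency-constrained submodular minimization at all.
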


\subsection{Further related work}
The parameter $\Delta$ has been studied from various viewpoints.
While efficient recognition of (totally) $\Delta$-modular matrices is open for any $\Delta \geq 2$, approaches to approximate the largest subdeterminant in absolute value were studied~\cite{summa_2015_largest,nikolov_2015_randomized}.
Also, focusing on more restricted subdeterminant patterns proved useful~\cite{veselov_2009_integer,artmann_2016_nondegenerate,glanzer_2021_abcRecognition}.
Aiming at generalizing a bound of \textcite{heller_1957_linear} for $\Delta=1$, bounds on the maximum number of rows of a $\Delta$-modular matrix were obtained~\cite{glanzer_2018_number,lee_2021_polynomial,averkov_2022_maximal}.
Also, the influence of the parameter $\Delta$ on structure and properties of IPs and polyhedra is multi-faceted (see, e.g.,~\cite{bonifas_2012_subdetDiameter,%
eisenbrand_2017_geometric,%
gribanov2016integer,%
gribanov_2021_lattice,%
gribanov_2021_fptas,%
lee_2020_improvingProximity,%
paat_2021_integralitynumber,%
tardos_1986_strongly} and references therein).

\subsection{Structure of the paper}

We prove the strongly polynomial time reduction from \cref{lem:redToMCCTU} in \cref{sec:reductionToGCTU}.
In \cref{sec:transposedNetworkBaseBlock}, we prove \cref{thm:transpose-bb}.
\cref{sec:technicalOverview} illustrates our approach and new contributions towards \cref{thm:reductionToBaseBlocksG4} on a more technical level, and explains the main new ingredients of our proof.
Throughout \cref{sec:technicalOverview}, we build on several results from~\cite{nagele_2022_congruency} that are proved there for congruency-constrained TU problems, i.e., the case of a cyclic group constraint.
In \cref{sec:BBreduction}, we show how these proofs can be adapted to the group setting.

\section{Reducing to group-constrained problems}\label{sec:reductionToGCTU}

We prove the following slightly strengthened version of \cref{lem:redToMCCTU}.

\begin{lemma}\label{lem:redToMCCTU_general}
Let $\Delta>0$.
Given a strictly $\Delta$-modular IP of the form
$ \min\{c^\top x\colon Ax\leq b, x\in\mathbb{Z}^n\}$, %
one can, in strongly polynomial time, determine an \MCCTU problem
\begin{equation*}\label{eq:MC_problem}
\min\{\bar c^\top y \colon Ty \leq b,\, \gamma_i^\top y \equiv r_i \pmod*{m_i} \;\forall i\in [q],\, y\in\mathbb{Z}^n\}
\end{equation*}
together with a non-singular $n\times n$ submatrix $H$ of $A$ such that the following holds:
\begin{enumerate}
\item\label{lemitem:transf_mod} $\Delta=\prod_{i\in[q]}m_i$.
\item\label{lemitem:transf_obj} $\bar{c}^\top=c^\top H^{-1}$.
\item\label{lemitem:transf_sol} The map $x\mapsto Hx$ is a bijection between feasible solutions of the strictly $\Delta$-modular IP and the \MCCTU problem.
\end{enumerate}
\end{lemma}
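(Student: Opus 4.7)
The plan is to perform a change of variables $y = Hx$ where $H$ is a carefully chosen non-singular $n\times n$ submatrix of $A$, reducing the strictly $\Delta$-modular IP to an \MCCTU problem. The key insight is that (a) $|\det H| = \Delta$ for any such $H$, (b) $T := AH^{-1}$ turns out to be totally unimodular, and (c) the Smith Normal Form of $H$ encodes the integrality constraint $x \in \mathbb{Z}^n$ as congruency constraints whose moduli multiply to~$\Delta$.

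First, I would locate an $n\times n$ non-singular submatrix $H$ of $A$ via Gaussian elimination in strongly polynomial time. Such $H$ exists since $A$ has full column rank, and since $A$ is strictly $\Delta$-modular its non-zero $n\times n$ minors have absolute value exactly $\Delta$, giving $|\det H| = \Delta$. Setting $T := AH^{-1}$, I would verify that $T$ is totally unimodular. Every $n\times n$ minor of $T$ equals $\det(A')/\det(H) \in \{-1,0,1\}$ for some $n\times n$ submatrix $A'$ of $A$. For a smaller $k\times k$ submatrix $T_{I,J}$, one uses that $T_{R,:} = H H^{-1} = I_n$ on the row index set $R$ of $H$ in $A$: appending to $I$ the identity-rows indexed by the $n-k$ entries of $R$ corresponding to columns outside $J$ produces an $n\times n$ submatrix with block-triangular structure whose determinant, by cofactor expansion along the identity part, equals $\pm\det T_{I,J}$. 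Hence $|\det T_{I,J}|\in\{0,1\}$, so $T$ is TU and in particular integer-valued.

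Next, compute a Smith Normal Form $H = UDV$ with $U,V\in\mathbb{Z}^{n\times n}$ unimodular and $D = \operatorname{diag}(m_1,\dots,m_n)$, $m_i\in\mathbb{Z}_{>0}$. Then $\prod_i m_i = |\det H| = \Delta$, yielding item~(i). Under the substitution $y = Hx$, the constraint $Ax \leq b$ becomes $T y \leq b$; the objective rewrites as $c^\top H^{-1} y = \bar c^\top y$ with $\bar c^\top := c^\top H^{-1}$, giving item~(ii); and integrality $x \in \mathbb{Z}^n$ translates to $y \in H\mathbb{Z}^n = U D\mathbb{Z}^n$ (using unimodularity of $V$), which is equivalent to $(U^{-1}y)_i \equiv 0 \pmod{m_i}$ for every $i\in[n]$. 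Taking $\gamma_i\in\mathbb{Z}^n$ to be the $i$-th row of the integer matrix $U^{-1}$ and $r_i := 0$, then discarding trivial constraints with $m_i=1$ (which leaves the product $\prod m_i = \Delta$ unchanged), gives the desired \MCCTU instance. Item~(iii) is then immediate from the invertibility of $H$ combined with the above equivalences.

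The main technical concern is ensuring strong polynomiality of the Smith Normal Form computation. The Kannan--Bachem algorithm is polynomial, but standard implementations can cause intermediate bit-blow-up; this is addressed by performing SNF via a modular variant, which suffices because only the residues of $U^{-1}y$ modulo the $m_i\mid\Delta$ matter for encoding the congruency constraints, and all these residues have bit size polynomial in $\log\Delta$, which in turn is polynomial in the input size. The remaining steps---selecting $H$, computing the $\{-1,0,1\}$-valued matrix $T = AH^{-1}$, forming $\bar c$, and reducing each $\gamma_i$ modulo $m_i$---are straightforwardly strongly polynomial.
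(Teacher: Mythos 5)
Your proof is correct in substance and follows the same overall strategy as the paper---substitute $y = Hx$ for a nonsingular $n\times n$ submatrix $H$ of $A$ with $|\det H| = \Delta$, observe that $T = AH^{-1}$ is totally unimodular, and convert the integrality condition $x\in\mathbb{Z}^n$ (i.e., $H^{-1}y\in\mathbb{Z}^n$) into congruency constraints via a Smith normal form---but it differs in one substantive choice: you take the Smith normal form of $H$ itself, whereas the paper takes the Smith normal form of $\widetilde{H}_F = \Delta H_F$, where $H_F$ is the entrywise fractional part of $H^{-1}$. Your choice buys a shorter proof of item~(i): the invariant factors of $H$ multiply to $|\det H| = \Delta$ by definition, while the paper must run a quotient-group counting argument to establish $\Delta=\prod_i m_i$. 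What the paper's choice buys is exactly the point you flag as your ``main technical concern'': $\widetilde{H}_F$ has entries bounded by the constant $\Delta$, so running a weakly polynomial SNF algorithm on it is automatically strongly polynomial, whereas the entries of $H$ are entries of the input $A$ and can be arbitrarily large, so a weakly polynomial SNF computation on $H$ is not strongly polynomial. Note that the obstruction is not intermediate bit-blow-up (which Kannan--Bachem already controls) but that the number of arithmetic operations of a weakly polynomial algorithm depends on the magnitudes of the input entries. Your proposed fix---a ``modular variant''---is the right idea but should be made precise: reduce $H$ modulo $\Delta$ before computing the Smith form, which is legitimate because every invariant factor of $H$ divides $\Delta$ and, since $\Delta H^{-1}$ is integral, $\Delta\mathbb{Z}^n\subseteq H\mathbb{Z}^n$, so membership in $H\mathbb{Z}^n$ is determined by residues modulo $\Delta$. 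Once this is spelled out, your argument is essentially a reparametrization of the paper's: working modulo $\Delta$ with $H$ is the mirror image of the paper's passage from $H^{-1}$ to its fractional part.
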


We remark that the one-to-one correspondence of feasible solutions given in \cref{lemitem:transf_sol} of \cref{lem:redToMCCTU_general} is (besides \cref{lemitem:transf_mod}) precisely what we need to deduce our main result, \cref{thm:strictly-delta}, from \cref{thm:GCTUFmod4}.
Moreover, \cref{lemitem:transf_obj,lemitem:transf_sol} of \cref{lem:redToMCCTU_general} together imply that $x\mapsto Hx$ is not only a bijection between feasible solutions, but also a bijection between optimal solutions of the two involved problems, so \cref{lem:redToMCCTU} is indeed also implied by \cref{lem:redToMCCTU_general}.

\begin{proof}[Proof of \cref{lem:redToMCCTU_general}]
We show how to transform the given strictly $\Delta$-modular problem into an \MCCTU problem.
Let $H$ be an $n\times n$ submatrix of $A$ with $\lvert\det(H)\rvert=\Delta$.
After a variable transformation to $y=Hx$, we can equivalently rewrite the original integer program in the form
$$
\min \{\bar{c}^\top y\colon Ty \leq b,\, H^{-1}y\in\mathbb{Z}^n\}\enspace,
$$
where $\bar{c}^\top=c^\top H^{-1}$, and $T=AH^{-1}$.
By definition, the map $x\mapsto Hx$ is a bijection between feasible solutions of the original IP and the above problem.
Note that $T$ is unimodular and contains an identity submatrix; hence $T$ is totally unimodular.
To complete the proof, it thus suffices to show that the constraint $H^{-1}y\in\mathbb{Z}^n$ can be transformed to multiple congruency constraints with moduli whose product equals $\Delta$.

To this end, we first write $H^{-1} = H_I + H_F$ with an integer matrix $H_I\in\mathbb{Z}^{n\times n}$ and a fractional matrix $H_F\in [0,1)^{n\times n}$, i.e., matrices whose entries are given by
\[
	(H_I)_{i,j} \coloneqq \left\lfloor \left(H^{-1}\right)_{i,j} \right\rfloor
    \quad\text{and}\quad
    (H_F)_{i,j} \coloneqq \left(H^{-1}\right)_{i,j} - (H_I)_{i,j}\enspace,
\]
where, for $x\in\mathbb{R}$, $\lfloor x\rfloor$ is the integer part of $x$, i.e., the unique number $n\in\mathbb{Z}$ with $n\leq x<n+1$.
Using this decomposition, we obtain
\[
H^{-1}y\in\mathbb{Z}^n \quad\iff\quad H_Iy + H_Fy\in\mathbb{Z}^n \quad\iff\quad H_Fy\in\mathbb{Z}^n\enspace.
\]
Because $\det(H)=\Delta$, we have $\Delta H^{-1}\in\mathbb{Z}^{n\times n}$ by Cramer's rule, and thus also $\widetilde{H}_F\coloneqq \Delta H_F\in\mathbb{Z}^{n\times n}$.
Furthermore, the entries of $\widetilde{H}_F$ are bounded by the constant $\Delta$ in absolute value.
Consequently, using a weakly polynomial time algorithm for computing the Smith normal form of an integer matrix~\cite{kannan_1979_polynomial}, we can in strongly polynomial time determine the Smith normal form of $\widetilde{H}_F$, i.e., we can in strongly polynomial time find  unimodular matrices $S, U\in\mathbb{Z}^{n\times n}$ and integers $\widetilde m_i\in\mathbb{Z}$ such that $D=\operatorname{diag}(\widetilde m_1, \ldots, \widetilde m_n)=S^{-1}\widetilde{H}_F U^{-1}$.
Using this decomposition, we get
\[
	H_Fy\in\mathbb{Z}^n
    \quad\iff\quad
    SDUy\in\Delta\mathbb{Z}^n
    \quad\iff\quad
    DUy\in\Delta\mathbb{Z}^n\enspace.
\]
Here, the last equivalence exploits unimodularity of $S$.
Letting $\gamma_i^\top$ denote the $i$\textsuperscript{th} row of $U$, we can further rewrite
\[
    DUy\in\Delta\mathbb{Z}^n
    \quad\iff\quad
    \forall i\in[n]\colon\ \widetilde{m}_i \gamma_i^\top y \in \Delta\mathbb{Z}^n
    \quad\iff\quad
    \forall i\in[n]\colon\ \gamma_i^\top y \equiv 0\pmod*{m_i}\enspace,
\]
where we use $m_i\coloneqq \sfrac{\Delta}{\gcd(\Delta, \widetilde{m}_i)}$.%
\footnote{Note we might have $\widetilde{m}_i=0$ for some $i\in[n]$. In this case, $\gcd(\Delta, \widetilde{m}_i)=\Delta$ and hence $m_i=1$, so the corresponding congruency constraint is always satisfied.}
It is thus left to show $\Delta=\prod_{i=1}^n m_i$.
To this end, consider the composed map
\[
	\Phi \colon \Z^n \overset{U}{\longrightarrow} \Z^n \overset{\pi}{\longrightarrow} \prod_{i=1}^n \quot{\Z}{m_i\Z}\enspace,
\]
where, for $z\in\Z^n$, the first map is defined by $z\mapsto Uz$, and the second is the component-wise projection given by $\pi(z)\coloneqq(\pi_1(z_1),\ldots,\pi_n(z_n))$, where $\pi_i$ denotes the natural projection from $\Z$ to $\inlinequot{\Z}{m_i\Z}$.
As $U$ is unimodular, the first map is an isomorphism of groups.
Furthermore, $\pi$ is a surjective group homomorphism.
By the isomorphism theorem, we thus get an isomorphism
\[
	\prod_{i=1}^n \quot{\Z}{m_i\Z} = \operatorname{im}\Phi \cong \quot{\Z^n}{\ker \Phi}\enspace.
\]
Note that the cardinality of the left-hand side group is $\prod_{i=1}^n m_i$.
Therefore, we may finish the proof by showing that $\lvert\inlinequot{\Z^n}{\ker\Phi}\rvert = \Delta$.
To this end, observe that $\ker \Phi$ is the set of $y\in \Z^n$ fulfilling the congruency constraints, i.e., $\gamma_i^\top y\equiv 0 \pmod{m_i}$ for all $i\in[n]$.
By the above discussion, this is precisely the set $\{y \in \Z^n\colon H^{-1}y\in \Z^n\} = H\Z^n$.
Consequently, $\lvert\inlinequot{\Z^n}{\ker\Phi}\rvert = \lvert\inlinequot{\Z^n}{H\Z^n}\rvert=\lvert\det(H)\rvert=\Delta$, as desired.
\end{proof}

\section{GCTUF with transposed network constraint matrices}\label{sec:transposedNetworkBaseBlock}

In the setting with a congruency constraint instead of a group constraint, \cite{nagele_2022_congruency} shows that every base block problem with a constraint matrix that is a transposed network matrix can be reduced to a node-weighted minimization problem over a lattice with a congruency constraint,%
\footnote{%
In fact, the proof in~\cite{nagele_2022_congruency} claims a reduction to a submodular minimization problem, but shows the stronger one presented here.%
}
i.e., a problem of the form
\begin{equation}\label{eq:CCSM}
\min \{w(S)\colon S\in\mathcal{L},\,\gamma(S)\equiv r\pmod*{m} \}\enspace,
\end{equation}
where $\mathcal{L}\subseteq 2^N$ is a lattice on some finite ground set $N$, $\gamma\colon N\to\mathbb{Z}$, $r\in\mathbb{Z}$, $m\in\mathbb{Z}_{>0}$, $w\colon N\to\mathbb{R}$, and we use $\gamma(S)\coloneqq \sum_{v\in S}\gamma(v)$ as well as $w(S)\coloneqq \sum_{v\in S}w(v)$.%
\footnote{%
\label{ftn:lattice}%
We recall that a \emph{lattice} $\mathcal{L}\subseteq 2^N$ is a set family such that for any $A,B\in\mathcal{L}$, we have $A\cap B, A\cup B\in\mathcal{L}$.
We assume such a lattice to be given by a compact encoding in a directed acyclic graph $H$ on the vertex set $N$ such that $X\subseteq N$ is an element of the lattice if and only if $\delta_H^-(X)=\emptyset$ (cf.~\cite[Section 10.3]{groetschel_1993_geometric}). Here, as usual, in a digraph $G=(V,A)$ and for $X\subseteq V$, we denote by $\delta^+(X)$ and $\delta^-(X)$ the arcs in $A$ leaving and entering $X$, respectively. Moreover, we write $\delta^\pm(v)\coloneqq\delta^\pm(\{v\})$ for $v\in V$.%
}
Being a special case of congruency-constrained submodular minimization, it is known that such problems, and thus the corresponding congruency-constrained TU problems with a transposed network constraint matrix, can be solved in strongly polynomial time for constant prime power moduli $m$, while the case of general constant composite moduli remains open~\cite{nagele_2018_submodular}.
The progress on \GCTUF{}, particularly the reduction to base block feasibility problems through \cref{thm:reductionToBaseBlocksG4} and its generalization (\cref{thm:reductionToBaseBlocksRGCTUF} in \cref{sec:technicalOverview}), motivates studying these reductions and results in the feasibility setting and with a group constraint instead of a congruency constraint, giving rise to the following problem.

\begin{mdframed}[innerleftmargin=0.5em, innertopmargin=0.5em, innerrightmargin=0.5em, innerbottommargin=0.5em, userdefinedwidth=0.95\linewidth, align=center]
	{\textbf{Group-Constrained Lattice Feasibility (\GCLF{}\linkdest{prb:GCLF}):}}
    Let $N$ be a finite set, $\mathcal{L}\subseteq 2^N$ a lattice, $(G,+)$ a finite abelian group, $\gamma \colon N \rightarrow G$, $r\in G$.
    The task is to find $X\in\mathcal{L}$ with $\gamma(X) = r$, or decide infeasibility.
\end{mdframed}
We observe that the reduction in~\cite{nagele_2022_congruency} from congruency-constrained TU problems with transposed network constraint matrices to problems of the form given in~\eqref{eq:CCSM} extends to the group-con\-strained case.
In particular, we obtain the following result in the feasibility setting.
(See \cref{sec:transposedNetworkBB} for some details.)%

\begin{proposition}\label{prop:GCTUFtoGCLF}
Let $G$ be a finite abelian group. Any \GCTUF{} problem with group $G$ and a constraint matrix that is a transposed network matrix can in strongly polynomial time be reduced to a \GCLF{} problem with group $G$.
\end{proposition}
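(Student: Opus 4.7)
The plan is to revisit the reduction of~\cite{nagele_2022_congruency} from congruency-constrained TU feasibility problems with a transposed network constraint matrix to problems of the form~\eqref{eq:CCSM}, and to verify that its mechanics do not depend on the cyclic nature of the target group. The starting point is that a transposed network matrix is described by a pair consisting of a directed tree $D$ and an auxiliary digraph $H$ on a common vertex set, with columns of the matrix indexed by arcs of $D$ (the variables) and rows indexed by arcs of $H$ (the constraints).

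First, I would recall that under the standard potential substitution—assigning to each vertex of $D$ a value equal to the signed partial sum of the $x$-variables along the unique path from a fixed root—the inequality system $T^\top x\leq b$ turns into a system of arc-difference constraints $\pi_v-\pi_u\leq b_a$ on an auxiliary digraph $H'$. Its integer solutions correspond bijectively to integer-valued potentials $\pi$, and, via a thresholding of the potentials, to certain vertex subsets encoded by a lattice $\mathcal{L}$ (represented as in Footnote~\ref{ftn:lattice}). This is exactly the passage carried out in~\cite{nagele_2022_congruency}, and it is purely combinatorial and strongly polynomial.

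Next, I would trace the single linear form $\gamma^\top x$ through the same substitution. The tree-path sign conventions express each $x_e$ as a finite signed sum of potentials, so $\gamma^\top x$ becomes a sum of the form $\gamma'(X)=\sum_{v\in X}\gamma'(v)$ over vertex subsets $X\in\mathcal{L}$, where the element labels $\gamma'(v)$ are obtained from the $\gamma_e$ by finitely many additions, subtractions, and sign-flips in the coefficient group. The key observation is that all these operations are well-defined in any abelian group, so the reduction carries through verbatim with $\mathbb{Z}/m\mathbb{Z}$ replaced by an arbitrary finite abelian group $G$. In particular, the original constraint $\gamma^\top x=r$ in $G$ becomes the condition $\gamma'(X)=r$ in $G$, and we obtain a \GCLF{} instance on $\mathcal{L}$ with the same group $G$ and target $r$.

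Finally, strong polynomiality is preserved since the tree-path and lattice encoding are performed combinatorially and each group label $\gamma'(v)$ is computed using only a polynomial number of group operations on inputs. The main point requiring verification is that every step of the congruency-based reduction that uses modular arithmetic—in particular the translation of $\gamma^\top x$ along tree paths—only relies on associativity, commutativity, and the existence of inverses; hence replacing $\mathbb{Z}/m\mathbb{Z}$ by $G$ introduces no obstacle. The detailed bookkeeping of this substitution, and the correctness of the forward and backward direction of the bijection between feasible $x$ and feasible $X\in\mathcal{L}$, is the one place requiring care, and is deferred to \cref{sec:transposedNetworkBB}.
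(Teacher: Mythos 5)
Your proposal is correct and follows essentially the same route as the paper: the paper likewise establishes this proposition by observing that the reduction of~\cite[Section 4.2]{nagele_2022_congruency} from congruency-constrained TU problems with transposed network constraint matrices to lattice problems uses only operations valid in any finite abelian group, and defers the details to \cref{sec:transposedNetworkBB} (via \cref{thm:transposeNetworkToLattice}). The one substantive point where group structure actually enters is the proximity bound that keeps the lattice ground set finite (of size $n|G|$), which rests on the generalized pigeonhole statement \cref{lem:sumModM}; your blanket claim that every step relies only on abelian-group arithmetic covers this, so the argument stands.
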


Thus, it remains to study \GCLF{} problems.
Interestingly, for the pure feasibility question, we can circumvent the barriers present in the optimization setting, and obtain the following result through a concise argument.

\begin{theorem}\label{thm:GCLF}
Let $G$ be a finite abelian group.
\GCLF{} problems with group $G$ can be solved in strongly polynomial time.
\end{theorem}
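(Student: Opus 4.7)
The plan is to give a strongly polynomial-time algorithm that computes the entire image set $R := \{\gamma(X) : X \in \mathcal{L}\} \subseteq G$; once $R$ is known, feasibility reduces to checking $r \in R$, and a witness $X$ can be retrieved by standard bookkeeping alongside the computation.

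The structural heart of the argument is a clean recursion on the encoding DAG $H$ of $\mathcal{L}$. Fix any source $v$ of $H$ (a vertex without incoming arcs) and let $D_v$ denote $v$ together with all its descendants in $H$. I would partition closed sets $X \in \mathcal{L}$ according to whether $v \in X$. If $v \notin X$, then $X$ must also avoid every descendant of $v$, since including any such descendant would, by ancestor-closure, force $v$ into $X$; hence such closed sets are exactly the closed sets of the induced sub-DAG $H[N \setminus D_v]$. If $v \in X$, then since $v$ has no ancestors in $H$, the map $X \mapsto X \setminus \{v\}$ is a bijection between these closed sets and the closed sets of $H[N \setminus \{v\}]$, with $\gamma$-values differing by exactly $\gamma(v)$. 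Together these two cases yield the identity
\[
R(H) \;=\; R\bigl(H[N \setminus D_v]\bigr) \,\cup\, \bigl(\gamma(v) + R\bigl(H[N \setminus \{v\}]\bigr)\bigr),
\]
with base case $R(\emptyset) = \{0\}$.

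The remaining task is to turn this recursion into a strongly polynomial-time algorithm. Each recursive call reduces $|N|$ by at least one, so the recursion has depth $O(|N|)$, and combining stored values $R \subseteq G$ takes only $O(|G|)$ time per call. The main obstacle is to prevent exponential branching: implemented naively, the two-way recursion could generate up to $2^{|N|}$ sub-DAGs. I would address this by memoizing over sub-DAGs $H[S]$ as they are encountered and by always choosing $v$ as the earliest source in a fixed topological order of $H$. Under this deterministic source-selection rule, every sub-DAG visited arises as $H[S]$ for a subset $S$ obtained by a polynomial-length sequence of the two canonical removal operations ``delete a source'' and ``delete a source together with all of its descendants.''

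The key technical step I anticipate is bounding the number of distinct such $S$ that appear by a polynomial in $|N|$; equivalently, reorganizing the recursion as an explicit polynomial-size DP (over a canonical linear traversal of $H$) whose state is a subset of $G$. I expect this is precisely where the paper's concise argument lives: once this polynomial bound is in place, the overall running time is strongly polynomial, since each memoized sub-DAG is resolved in $O(|G|)$ time from its two children, and a witness for any target $r \in R$ is recovered by unwinding the recursion tree from $H$ down to $\emptyset$.
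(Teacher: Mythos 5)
Your recursive identity $R(H) = R(H[N\setminus D_v]) \cup \bigl(\gamma(v)+R(H[N\setminus\{v\}])\bigr)$ is correct, but the step you explicitly defer---bounding the number of distinct sub-DAGs $H[S]$ encountered under memoization by a polynomial in $|N|$---is not a routine detail: it is false for the rule you propose. Take $H$ to be the disjoint union of $k$ arcs $v_i\to w_i$ and fix the (valid) topological order $v_1,\ldots,v_k,w_1,\ldots,w_k$. The earliest remaining source is some $v_j$ until all of $v_1,\ldots,v_k$ have been processed, and at $v_j$ the two branches remove either $D_{v_j}=\{v_j,w_j\}$ or $\{v_j\}$; since these choices touch pairwise disjoint vertex pairs, after $k$ levels the recursion has produced all $2^k$ sets of the form $N\setminus\bigcup_j T_j$ with $T_j\in\{\{v_j,w_j\},\{v_j\}\}$, and these are pairwise distinct, so memoization over subsets does not collapse them. (For this particular $H$ an interleaved order would happen to merge the states, but you give no argument that some canonical order works for all DAGs, and a DP whose state is an arbitrary subset of the remaining vertices has no a priori polynomial bound.) As written the algorithm is therefore exponential, and the missing bound is essentially the whole content of the theorem rather than a step one can expect to fill in mechanically.

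The paper takes a much more local route that avoids computing the image set altogether. It uses that every $X\in\mathcal{L}$ is determined by $C_X=\{x\in X:\delta^+(x)\subseteq\delta^+(X)\}$, and an elementary zero-sum lemma (among any $\ell\ge|G|$ group elements there is a non-empty zero-sum subset, found via pigeonhole on partial sums) to show that an inclusion-wise minimal feasible $X$ satisfies $|C_X|<|G|$: otherwise one could delete a zero-sum subset $Y\subseteq C_X$ and obtain a strictly smaller lattice element with the same group value. Feasibility is then decided by brute-force enumeration of the $O(|N|^{|G|-1})$ candidate sets $C_X$. If you want to salvage your approach you would need a different, provably polynomial DP state; the minimality-plus-zero-sum argument is the shortcut that makes this unnecessary.
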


Clearly, \cref{prop:GCTUFtoGCLF} and \cref{thm:GCLF} together imply \cref{thm:transpose-bb}.
The main observation towards a proof of \cref{thm:GCLF} is the following elementary lemma.

\begin{lemma}\label{lem:elementary}
Let $G$ be a finite abelian group, and let $\gamma_1,\ldots,\gamma_\ell\in G$.
If $\ell\geq|G|$, then there is a non-empty subset $I\subseteq [\ell]$ such that $\sum_{i\in I} \gamma_i=0$.
\end{lemma}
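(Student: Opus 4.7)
The plan is to prove this via a standard pigeonhole argument on partial sums. Specifically, I would define the partial sums $S_0 \coloneqq 0$ and $S_j \coloneqq \sum_{i=1}^{j} \gamma_i$ for $j \in [\ell]$, yielding $\ell + 1$ group elements $S_0, S_1, \ldots, S_\ell \in G$.

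Since $\ell + 1 \geq |G| + 1 > |G|$, the pigeonhole principle guarantees that two of these partial sums coincide, i.e., there exist indices $0 \leq a < b \leq \ell$ with $S_a = S_b$. Because $G$ is abelian (in fact any group suffices here), this yields $\sum_{i=a+1}^{b} \gamma_i = S_b - S_a = 0$. The set $I \coloneqq \{a+1, \ldots, b\}$ is then non-empty (since $b > a$) and a subset of $[\ell]$, and it has the desired zero-sum property.

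There is essentially no obstacle here: the statement is a weak form of the classical Davenport constant bound (one has $D(G) \leq |G|$ for any finite abelian group), and the short pigeonhole proof above is the standard one. I would keep the proof to just a few lines in the actual write-up.
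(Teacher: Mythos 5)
Your proof is correct and is essentially the same pigeonhole-on-partial-sums argument the paper uses; including $S_0=0$ merely merges the paper's two cases (some $s_i=0$, or two partial sums coincide) into a single application of the pigeonhole principle. No issues.
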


\begin{proof}
Either $s_i\coloneqq \sum_{j\leq i}\gamma_j=0$ for some $i\in [\ell]$, or there exist $i<j$ with $s_i=s_j$; hence $I=[i]$ or $I=\{i+1,\ldots,j\}$, respectively, has the desired properties.
\end{proof}

To prove \cref{thm:GCLF}, we work with a representation of the lattice $\mathcal{L}$ through an acyclic digraph $H$ (see \cref{ftn:lattice}).
We exploit that every $X\in\mathcal{L}$ is uniquely defined by the subset $C_X\coloneqq\{x\in X\colon \delta^+(x)\subseteq\delta^+(X)\}$.

\begin{proof}[Proof of \cref{thm:GCLF}]
We claim that if the given \GCLF problem is feasible, there is a feasible $X$ with $|C_X| < |G|$.
If so, we obtain an efficient procedure for \GCLF with group $G$ through enumerating all such $C_X$ and checking if $\gamma(X)=r$.
To prove the claim, assume for contradiction that it is wrong, and let $X\in\mathcal{L}$ be minimal with $\gamma(X) = r$.
Then $|C_X| \geq |G|$, and applying \cref{lem:elementary} to $C_X$ gives a non-empty subset $Y\subseteq C_X$ with $\gamma(Y) = 0$.
Thus, $X\setminus Y$ is a strictly smaller lattice element with $\gamma(X\setminus Y) = \gamma(X) - \gamma(Y) = \gamma(X) = r$, a contradiction.
\end{proof}

\section{Approaching GCTUF problems and a proof of \cref{thm:reductionToBaseBlocksG4}}%
\label{sec:technicalOverview}

In order to tackle \GCTUF{} problems, following ideas from~\cite{nagele_2022_congruency}, we introduce a hierarchy of slightly relaxed \GCTUF{} problems by weakening the group constraint.

\begin{mdframed}[innerleftmargin=0.5em, innertopmargin=0.5em, innerrightmargin=0.5em, innerbottommargin=0.5em, userdefinedwidth=0.95\linewidth, align=center]
{\textbf{\boldmath$R$-Group-Constrained TU Feasibility (\RGCTUF{}\linkdest{prb:R-GCTUF}):}}
\sloppy Let $T\in\{-1, 0, 1\}^{k\times n}$ be TU, $b\in\mathbb{Z}^k$, let $(G,+)$ be a finite abelian group, $\gamma\in G^{n}$ and $R\subseteq G$. The task is to show infeasibility or find a solution of
$$
Tx \leq b,\ \gamma^\top x \in R,\ x\in\mathbb{Z}^n \enspace.
$$
\end{mdframed}
Here, we typically call $R$ the set of target elements. The above setup allows us to measure progress between \GCTUF{} (the case of $|R|=1$) and an unconstrained IP with TU constraint matrix (captured by setting $R = G$).
In particular, the difficulty of an \RGCTUF{} problem increases as the size of $R$, i.e., the number of target elements, decreases.
The main parameter capturing this hardness is the \emph{depth} $d\coloneqq|G|-|R|$ of the problem.
We show the following generalization of \cref{thm:GCTUFmod4}.

\begin{theorem}\label{thm:RGCTUF}
Let $G$ be a finite abelian group. There is a strongly polynomial ran\-do\-mized algorithm solving \RGCTUF{} problems with group $G$ and $|G|- |R|\leq 3$.
\end{theorem}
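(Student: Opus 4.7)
The plan is to prove \cref{thm:RGCTUF} by induction on the depth $d = |G|-|R|$ of the \RGCTUF{} problem, following the general template of~\cite{nagele_2022_congruency} but lifted from cyclic groups to arbitrary finite abelian groups. The base case $d=0$ (i.e., $R=G$) is trivial: the group constraint is vacuous and any vertex of the polytope $\{Tx\le b\}$ is integral because $T$ is TU, so LP feasibility settles the problem in strongly polynomial time. For the inductive step, assume we have a (randomized) algorithm for all \RGCTUF{} problems with depth strictly less than $d\in\{1,2,3\}$, and show how to solve a depth-$d$ instance.

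The inductive step is carried out by reducing to base block \GCTUF{} problems through Seymour's decomposition of TU matrices. Concretely, I would first establish an analogue of \cref{thm:reductionToBaseBlocksG4} for the relaxed setting, i.e., an efficient reduction from \RGCTUF{} with $|G|-|R|\leq 3$ to base block \GCTUF{} problems (with group $G$), relying only on strongly polynomially many calls to a base block oracle. The key structural step, which follows the strategy laid out in~\cite{nagele_2022_congruency}, is to walk through a Seymour decomposition tree and, at each $k$-sum node ($k\in\{1,2,3\}$), combine solutions of the two child \RGCTUF{} subproblems into a solution of the parent problem. The novelty here is that the glueing requires understanding which target sets $R'\subseteq G$ can be realized in the subproblems so that their ``sumset'' contains $R$; in the cyclic prime-order case this is handled with Cauchy--Davenport, but in general abelian groups one instead uses Kneser-type sumset bounds or direct enumeration that is feasible because $d\le 3$ gives only a bounded number of relevant target subsets. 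This is where the restriction $|G|-|R|\le 3$ enters: the hierarchy has to be closed under the sumset operation, and depth $3$ is the largest depth for which the arising subproblems still fit inside the hierarchy.

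Once we are reduced to base block \GCTUF{} problems (which correspond to the case $|R|=1$), we invoke the two base block results. If the constraint matrix is the transpose of a network matrix, \cref{thm:transpose-bb} gives a deterministic strongly polynomial algorithm. If the constraint matrix is a network matrix, the \GCTUF{} instance becomes a group-constrained circulation problem, and we extend the approach of~\cite{nagele_2022_congruency} by reducing to exact matroid intersection / exact perfect matching problems and invoking the randomized pseudo-polynomial algorithm of~\textcite{camerini_1992_rpp}; the group structure of $G$ (rather than just a cyclic modulus) is accommodated by working with the isomorphic product of cyclic groups given by the fundamental theorem of finite abelian groups, treating each cyclic component separately in the matching reduction.

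The main obstacle I expect is precisely the generalization of the sumset/combination arguments from cyclic prime groups to arbitrary abelian groups: Cauchy--Davenport is the workhorse in the prior approach, and removing the primality assumption forces one to track cosets of subgroups of $G$ along the decomposition recursion to ensure the hierarchy stays closed. The second, more routine obstacle is verifying that each of the adaptations from~\cite{nagele_2022_congruency} (particularly the 2-sum and 3-sum combination lemmas) goes through with group-valued labels $\gamma \in G^n$ in place of $\mathbb{Z}/m\mathbb{Z}$-valued ones, which is essentially a bookkeeping task deferred to \cref{sec:BBreduction}.
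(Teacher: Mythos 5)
Your high-level architecture matches the paper's: a hierarchy of relaxed \RGCTUF{} problems indexed by the depth $d=|G|-|R|$, Seymour's decomposition with a reduction to base block oracles (\cref{thm:reductionToBaseBlocksRGCTUF}), a Kneser-flavoured replacement for Cauchy--Davenport, and the two base block results you cite. Your instinct to ``track cosets of subgroups'' is also exactly right: if $R=R+H$ for a non-trivial proper subgroup $H$, \cref{lem:cosetReduction} passes to the quotient $\inlinequot{G}{H}$ at strictly smaller depth, and otherwise \cref{lemma:CD-replacement} recovers the Cauchy--Davenport-style conclusion $(X+Y)\cap R\neq\emptyset$ for $|X|=|G|-|R|$ and $|Y|\geq 2$.

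There are, however, two genuine gaps. First, the combination step at a $3$-sum is not a matter of enumerating ``which target sets can be realized so that their sumset contains $R$'': the set $\pi_A\ab$ of group elements realizable by the large child problem cannot be computed exactly with affordable recursive calls, so one must reason about it indirectly. The paper's engine for this is the pattern analysis over the constant-size set $\Pinar$ of interface values $(\alpha,\beta)=(f^\top x_B,h^\top x_A)$, the averaging result (\cref{prop:averaging}), and above all \cref{lem:propagating_hidden_solutions}, which shows that a pair hiding a solution forces $|\bar\pi_A|=d$ at adjacent pairs so that \cref{lemma:CD-replacement} can be applied there. None of this appears in your sketch, and your stated reason for the restriction $d\le 3$ (``the hierarchy has to be closed under the sumset operation'') is not the actual obstruction: the coset/sumset machinery works at every depth, and the only place $d\leq 3$ enters is the counting inequality $\binom{d-1}{2}+d-1<d+1$ in the proof of \cref{lem:propagating_hidden_solutions}, which fails for $d=4$ (see \cref{fig:barrier_example}).

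Second, your recursion is not obviously polynomial. If at each $3$-sum node you query both children at the same depth $d$ a constant number of times, the recursion tree explodes, since the larger child may have only two fewer columns than the parent. The paper's bookkeeping is deliberately asymmetric: the smaller child (the $B$-problem, at most half the columns) is queried at the same depth $d$ to compute up to $d+1$ realizable elements, while the larger child (the $A$-problem) is only ever queried at depth at most $d-1$, where the algorithm is cheaper, to compute up to $d$ elements. This asymmetry, together with the variable-reduction cases of \cref{thm:blackbox}, is what makes the bound $f(n,d)\leq (d+1)^{3d} n^{d+3\log_2(d+1)+2}$ on the number of oracle calls go through (\cref{thm:summaryDecomp}). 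You would need to supply both ingredients for the proposal to become a proof.
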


Our approach exploits uses Seymour's decomposition theorem for TU matrices.
To state this result, we first introduce the additional notions of a $3$-sum of matrices, and pivoting operations.

\begin{definition}[$3$-sum]
Let $A\in\mathbb{Z}^{k_A\times n_A}$, $B\in\mathbb{Z}^{k_B\times n_B}$, $e\in\mathbb{Z}^{k_A}$, $f\in\mathbb{Z}^{n_B}$, $g\in\mathbb{Z}^{k_B}$,  $h\in\mathbb{Z}^{n_A}$.
The \emph{$3$-sum} of $\begin{psmallmatrix}
A & e & e \\ h^\top & 0 & 1
\end{psmallmatrix}$ and $\begin{psmallmatrix}
0 & 1 & f^\top \\ g & g & B
\end{psmallmatrix}$ is
$
\begin{psmallmatrix}
A & e & e \\ h^\top & 0 & 1
\end{psmallmatrix} \ksum[3] \begin{psmallmatrix}
0 & 1 & f^\top \\ g & g & B
\end{psmallmatrix} \coloneqq \begin{psmallmatrix}
A & ef^\top \\ gh^\top & B
\end{psmallmatrix}
$.
\end{definition}

\begin{definition}[Pivoting]
	\label{def:pivoting}
	Let $C\in\mathbb{Z}^{k\times n}$, $p\in\mathbb{Z}^n$, $q\in\mathbb{Z}^k$, and $\varepsilon\in\{-1,1\}$. The matrix obtained from pivoting on $\varepsilon$ in $T \coloneqq \begin{psmallmatrix}
		\varepsilon & p^\top \\ q & C
	\end{psmallmatrix}$, i.e., pivoting on the element $T_{11}$ of $T$, is
	$
	\pivot[11](T) \coloneqq \begin{psmallmatrix}
		-\varepsilon & \varepsilon p^\top \\ \varepsilon q & C - \varepsilon q p^\top
	\end{psmallmatrix}%
	$.
	More generally, $\pivot[ij](T)$ for indices $i$ and $j$ such that $T_{ij}\in\{-1,1\}$ is obtained from $T$ by first permuting rows and columns such that the element $T_{ij}$ is permuted to the first row and first column, then performing the above pivoting operation on the permuted matrix, and finally reversing the row and column permutations.
\end{definition}

With this notation at hand, we can state Seymour's TU decomposition theorem as follows.

\begin{theorem}[Seymour's TU decomposition]\label{thm:TUdecomp}
Let $T\in\mathbb{Z}^{k\times n}$ be a totally unimodular matrix. Then, one of the following cases holds.
\begin{enumerate}
\item\label{thmitem:TUdecomp_netw} $T$ or $T^\top$ is a network matrix.

\item\label{thmitem:TUdecomp_const} $T$ is, possibly after iteratively applying the operations of
\begin{itemize}
\item deleting a row or column with at most one non-zero entry,
\item deleting a row or column that appears twice or whose negation also appears in the matrix, and
\item changing the sign of a row or column,
\end{itemize}
equal to one of
\begin{equation*}
\begin{psmallmatrix*}[r]
 1 & -1 &  0 &  0 & -1 \\
-1 &  1 & -1 &  0 &  0 \\
 0 & -1 &  1 & -1 &  0 \\
 0 &  0 & -1 &  1 & -1 \\
-1 &  0 &  0 & -1 &  1
\end{psmallmatrix*}
\quad\text{and}\quad
\begin{psmallmatrix}
 1 &  1 &  1 &  1 &  1 \\
 1 &  1 &  1 &  0 &  0 \\
 1 &  0 &  1 &  1 &  0 \\
 1 &  0 &  0 &  1 &  1 \\
 1 &  1 &  0 &  0 &  1
\end{psmallmatrix}\enspace.
\end{equation*}

\item\label{thmitem:TUdecomp_pivotsum} $T$ can, possibly after row and column permutations and pivoting once, be decomposed into a $3$-sum of totally unimodular matrices with $n_A, n_B\geq 2$.
\end{enumerate}
Additionally, we can in time $\mathrm{poly}(n)$ decide which of the cases holds and determine the involved matrices.
\end{theorem}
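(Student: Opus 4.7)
The plan is to translate the matrix statement into the language of regular matroids and invoke Seymour's celebrated decomposition theorem. The essential dictionary is Tutte's classical equivalence: a rational matrix $T$ is TU if and only if the matroid represented by the columns of $[I \mid T]$ over $\mathbb{Q}$ is \emph{regular}, i.e., representable over every field. Under this correspondence, network matrices realize graphic matroids and their transposes realize cographic matroids; the sign-flip, duplicate-row/column, and trivial-row/column deletion operations of case \ref{thmitem:TUdecomp_const} correspond to elementary matroid operations (re-signing and removing loops/parallel elements) that preserve the underlying matroid up to isomorphism; and pivoting on a $\pm 1$ entry corresponds to a change of basis. With this dictionary in hand, cases \ref{thmitem:TUdecomp_netw}--\ref{thmitem:TUdecomp_pivotsum} mirror the three alternatives of Seymour's structural theorem one-to-one.

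I would then invoke Seymour's theorem: every regular matroid is either graphic, cographic, isomorphic to the sporadic matroid $R_{10}$, or admits a non-trivial $1$-, $2$-, or $3$-separation that yields a matroid $k$-sum decomposition into strictly smaller regular matroids. The two exceptional $5\times 5$ matrices displayed in case \ref{thmitem:TUdecomp_const} are standard representations of $R_{10}$; they differ only by a pattern of sign changes and row/column duplication, so after the allowed operations any TU representation of $R_{10}$ reduces to one of them. Moreover, any $1$-sum or $2$-sum decomposition can be re-expressed as a $3$-sum after at most one pivoting step and a suitable padding of the summands, which explains why only $3$-sums need to appear in case \ref{thmitem:TUdecomp_pivotsum}.

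The main obstacle is Seymour's theorem itself, whose proof fills the bulk of a lengthy paper. Its heart is a delicate connectivity and excluded-minor analysis: one assumes a minimum counterexample, forces it to be $3$-connected with no $R_{10}$-minor, and uses the Splitter Theorem together with a careful case analysis (centered on $R_{12}$-minors) to produce a non-trivial $3$-separation that can be arranged to give a proper $3$-sum of regular matroids. Replicating this argument in a short proof is not realistic; the right approach in the present paper is to cite Seymour's result.

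For the algorithmic claim, I would appeal to Truemper's constructive implementation of the decomposition, which runs in polynomial time. It tests whether $T$ or $T^\top$ is a network matrix (both are classical polynomial-time recognition problems), checks the $R_{10}$ base case by direct comparison after reducing away the trivial operations listed in case \ref{thmitem:TUdecomp_const}, and, failing those, locates a suitable $3$-separation, performs the prescribed pivot, and outputs the two summands of the $3$-sum. Combining this algorithm with the structural dictionary above yields the stated polynomial-time procedure.
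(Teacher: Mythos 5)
Your proposal is correct and matches the paper's treatment: the paper does not prove this statement but simply cites Seymour's decomposition theorem \cite{seymour_1980} (with the algorithmic claim resting on its known constructive/polynomial-time implementations), which is exactly the route you advocate. Your additional sketch of the matroid--matrix dictionary and the role of $R_{10}$ is accurate but not something the paper spells out.
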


We remark that typically, a $3$-sum of the form
$\begin{psmallmatrix}
A & ef^\top \\ gh^\top & B
\end{psmallmatrix}$
would be called a \emph{$1$-} or \emph{$2$-sum} if both or one of the off-diagonal blocks $ef^\top$ and $gh^\top$ were zero, respectively.
However, as we treat those special cases in the same way as $3$-sums, there is no need for us to further distinguish between them.
Generally, we refer to matrices covered by \cref{thmitem:TUdecomp_netw,thmitem:TUdecomp_const} of \cref{thm:TUdecomp} as \emph{base block} matrices.
By porting results on congruency-constrained base block problems of~\cite{nagele_2022_congruency} to the group-constrained setting and combining them with our new \cref{thm:transpose-bb}, it follows that \GCTUF{} problems can be solved in strongly polynomial time if the constraint matrix is a base block matrix.
The potential pivoting step in \cref{thmitem:TUdecomp_pivotsum} of \cref{thm:TUdecomp} can also be handled by extending a result from~\cite{nagele_2022_congruency} to the group setting.
For the sake of completeness, we comment on how to extend the arguments from~\cite{nagele_2022_congruency} for base block constraint matrices or pivot steps in \cref{sec:BBreduction,sec:pivot}, respectively.
Showing how to deal with \RGCTUF problems with constraint matrices that are $3$-sums will lead to the following generalization of \cref{thm:reductionToBaseBlocksG4} which, in combination with the aforementioned results on base block problems, immediately implies \cref{thm:RGCTUF}.
We devote the rest of this section to a discussion of its proof.

\begin{theorem}\label{thm:reductionToBaseBlocksRGCTUF}
Let $G$ be a finite abelian group and $\ell\in\mathbb{Z}_{\geq 1}$ with $\ell\geq
|G|-3$.
Given an oracle for solving
{%
base block \RGCTUF problems with group $G$ and any $R\subseteq G$ with $|R|\geq \ell$, %
}%
we can solve
{%
\RGCTUF problems with group $G$ and $R\subseteq G$ with $|R| \geq\ell$ %
}%
in strongly polynomial time with strongly polynomially many calls to the oracle.
\end{theorem}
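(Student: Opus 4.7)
The plan is to induct on the size $k+n$ of the constraint matrix $T\in\mathbb{Z}^{k\times n}$, applying Seymour's TU decomposition (\cref{thm:TUdecomp}) at each step. Base blocks (cases~(i) and~(ii) of \cref{thm:TUdecomp}) are dispatched directly via the given oracle, since by hypothesis $|R|\geq\ell$ throughout. The real content is the recursive handling of case~(iii), in which $T$ decomposes, possibly after a pivot, as a $3$-sum of two smaller TU matrices.

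First I would handle the optional pivot by invoking the group-constrained extension of the pivoting lemma from~\cite{nagele_2022_congruency} (details deferred to \cref{sec:pivot}). A pivot corresponds to a unimodular change of variables that preserves total unimodularity of the constraint matrix and lets us update $\gamma$ and $b$ appropriately, so we may assume $T = M_A\ksum[3]M_B$ directly. Writing $T = \begin{psmallmatrix} A & ef^\top \\ gh^\top & B \end{psmallmatrix}$, I would partition the variables as $x=(x_A,x_B)$ and the data analogously as $b=(b_A,b_B)$ and $\gamma=(\gamma_A,\gamma_B)$, and introduce linking integer scalars $z_A\coloneqq h^\top x_A$ and $z_B\coloneqq f^\top x_B$. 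Since $ef^\top x_B=ez_B$ and $gh^\top x_A=gz_A$, the system $Tx\leq b$ rewrites as two systems on $(x_A,z_B)$ and $(z_A,x_B)$, each encodable as an \RGCTUF{} problem on the constituent matrix $M_A$ or $M_B$. Total unimodularity restricts $z_A,z_B$ to an explicit small integer range, so they can be enumerated in polynomial time.

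Once the variables are decoupled, the two halves are linked only by the group constraint $\gamma_A^\top x_A+\gamma_B^\top x_B\in R$. The main obstacle is that a naive enumeration of $s_A\coloneqq \gamma_A^\top x_A$ over $G$ would force the $M_A$-side subproblem to have a singleton target and hence depth $|G|-1$, which exceeds the inductive budget of depth~$3$ as soon as $|G|\geq 5$. To keep both sides at depth at most~$3$, I would crucially exploit the hypothesis $|G|-|R|\leq 3$: the forbidden set $R^c=G\setminus R$ contains at most three elements, so only at most three forbidden sums need to be ruled out. The plan is to enumerate over all pairs $(R_A,R_B)$ of target subsets of $G$ with $|R_A|,|R_B|\geq\ell$ and $(R_A+R_B)\cap R\neq\emptyset$, and, for each such pair together with each enumerated choice of the linking scalars, invoke the algorithm recursively on the two constituents. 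Because $|R^c|\leq 3$, the number of admissible pairs is polynomial in $|G|$.

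The hardest step is establishing completeness of this enumeration: for every feasible $(x_A^*,x_B^*)$ of the original instance there must exist a pair $(R_A,R_B)$ in the enumeration together with consistent linking scalars for which both recursive calls succeed, and conversely any such pair of sub-solutions must concatenate into a feasible $x$. The bound $|R^c|\leq 3$ is precisely what allows the forbidden sums $s_A+s_B\in R^c$ to be ``charged'' to one side without inflating $|R_A^c|$ or $|R_B^c|$ beyond~$3$; this is the combinatorial linchpin of the reduction and, in contrast to the arguments of~\cite{nagele_2022_congruency}, does not require $|G|$ to be prime. Once completeness is in place, induction on $k+n$ yields the claimed strongly polynomial algorithm making strongly polynomially many oracle calls.
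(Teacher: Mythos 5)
Your setup (Seymour decomposition, oracle for base blocks, pivot handling, splitting the $3$-sum into an $A$- and a $B$-problem linked by scalars and by the group constraint) matches the paper's. But the central mechanism of your reduction --- enumerating pairs $(R_A,R_B)$ with $|R_A|,|R_B|\geq\ell$ and $(R_A+R_B)\cap R\neq\emptyset$ and recursing on both sides --- has a genuine gap, in fact two. First, soundness/completeness: a recursive call with target $R_A$ returns \emph{some} $x_A$ with $\gamma_A^\top x_A\in R_A$, and likewise for $B$; the condition $(R_A+R_B)\cap R\neq\emptyset$ does not guarantee that the particular returned elements $r_A,r_B$ satisfy $r_A+r_B\in R$. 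To force that you would need $R_A+R_B\subseteq R$, which for $|R^c|\leq 3$ generically forces one of $R_A,R_B$ to be a near-singleton, i.e.\ depth close to $|G|-1$, exactly the thing you correctly identify as outside the inductive budget. The ``charging'' of forbidden sums to one side is asserted but never made to work, and it is precisely the hard part. Second, runtime: the $A$-problem can have $n-2$ variables, so spawning more than one recursive call on the $A$-side \emph{at the same depth} $d$ yields a recursion tree of size exponential in $n$. The paper is explicit that this is not affordable.

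The paper's actual resolution is structurally different and is what your proposal is missing. It recurses on the $B$-side (at most $n/2$ variables) at the same depth $d$ to compute up to $d+1$ achievable group elements $\bar\pi_B(\alpha,\beta)$, but on the $A$-side only at strictly smaller depth $d-1$, yielding only $d$ elements $\bar\pi_A(\alpha,\beta)$. This creates an information deficit: the single bad case $|\pi_B(\alpha,\beta)|=1$, $|\pi_A(\alpha,\beta)|\geq d+1$ (``hidden solutions''), which cannot be decided from the computed patterns alone. Handling it occupies most of Section~4: the reduction to the constant-size pattern shape $\widehat\Pi$ (\cref{lemma:boundedPatternShape}), the coset reduction (\cref{lem:cosetReduction}) so that the Cauchy--Davenport replacement (\cref{lemma:CD-replacement}) applies to non-prime $|G|$, the averaging argument (\cref{prop:averaging}) propagating large $A$-patterns to neighboring pairs (\cref{lem:propagating_hidden_solutions}, which is where $d\leq 3$ is used), and the case analysis of pattern types I--IV that either finds a solution or reduces to a \emph{single} smaller instance at the same depth. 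Without some substitute for this analysis, your induction does not close.
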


\subsection{Reducing to a simpler problem when the target elements form a union of cosets}\label{sec:cosetReduction}

If $R$, the set of target elements, is a union of cosets of the same non-trivial proper subgroup $H$ of $G$ (i.e., it is of the form $R=\bigcup_{i=1}^k (g_i+H)$ for some $g_1,\ldots,g_k\in G$, or equivalently, $R=R+H$), we can directly reduce to a simpler problem.
We formalize this in the following lemma.
\begin{lemma}\label{lem:cosetReduction}
Assume we are given an \RGCTUF{} problem
$$
Tx \leq b,\ \gamma^\top x \in R,\ x\in\mathbb{Z}^n
$$
such that $R=R+H$ for a non-trivial proper subgroup $H$ of $G$.
Then, the set of feasible solutions of the given \RGCTUF{} problem is invariant under replacing $G$ by the quotient group $\widehat G = \inlinequot{G}{H}$, $R$ by $\widehat R = \inlinequot{R}{H}$, and $\gamma$ by its image $\widehat \gamma\in \widehat{G}^n$ under the quotient map.
\end{lemma}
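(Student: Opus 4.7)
The plan is to reduce the claim to an equivalence about the group constraint alone: since the polyhedral constraint $Tx \leq b$ and the integrality constraint $x \in \mathbb{Z}^n$ are identical in the original and transformed problems, it suffices to show that for every $x \in \mathbb{Z}^n$, one has $\gamma^\top x \in R$ if and only if $\widehat\gamma^\top x \in \widehat R$.

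To do so, I would set up the canonical quotient homomorphism $\pi \colon G \to \widehat G = G/H$, observe that by construction $\widehat\gamma_i = \pi(\gamma_i)$ for every $i \in [n]$ and $\widehat R = \pi(R)$, and use that $\pi$ is a group homomorphism to obtain $\pi(\gamma^\top x) = \widehat\gamma^\top x$. This identity follows since $\gamma^\top x$ is by definition an integer linear combination (i.e., a finite sum of copies and negations) of the group elements $\gamma_1, \ldots, \gamma_n$, and both iterated addition and inversion are preserved under $\pi$.

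With this at hand, the forward implication is immediate: if $\gamma^\top x \in R$, then $\widehat\gamma^\top x = \pi(\gamma^\top x) \in \pi(R) = \widehat R$. For the converse direction, the assumption $R = R + H$ is exactly what is needed. Indeed, $\widehat\gamma^\top x \in \widehat R$ means $\pi(\gamma^\top x) = \pi(r)$ for some $r \in R$, hence $\gamma^\top x \in r + H \subseteq R + H = R$. Combining both directions yields the claimed equivalence, and the lemma follows.

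I do not expect any real obstacle: the argument is essentially a bookkeeping exercise in basic group theory. The one substantive point worth flagging is that the hypothesis $R = R+H$ is precisely the condition guaranteeing $\pi^{-1}(\widehat R) = R$, which rules out the possibility that passing to the quotient could enlarge the feasible set by admitting spurious group targets. Without this hypothesis, the backward implication would fail in general.
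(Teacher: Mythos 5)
Your proposal is correct and follows essentially the same route as the paper: reduce to the equivalence $\gamma^\top x \in R \iff \widehat\gamma^\top x \in \widehat R$, get the forward direction directly from the quotient homomorphism, and use the hypothesis $R = R + H$ for the converse. Your remark that $R=R+H$ is exactly the condition $\pi^{-1}(\widehat R)=R$ is a nice way of phrasing why the backward implication holds.
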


\begin{proof}
Let $P$ denote the original \RGCTUF{} problem, and let $\widehat P$ denote the modified one.
The inequality system $Tx\leq b$ is the same in $P$ and $\widehat P$, hence it is enough to show that $\gamma^\top x \in R$ if and only if $\widehat \gamma^\top x \in \widehat R$.

To this end, first note that for $x\in\mathbb{Z}^n$, $\gamma^\top x \in R$ immediately implies $\widehat{\gamma}^\top x \in \widehat R$ by definition.
For the other direction, assume $x\in \mathbb{Z}^n$ satisfies $\widehat \gamma^\top x \in R$.
Then, by definition of $\widehat \gamma$ and $\widehat R$, we know that there is an element $h\in H$ such that $\gamma^\top x + h \in R$.
Then $\gamma^\top x \in R - h = R$, as desired.
\end{proof}

The depth of the new problem given by \cref{lem:cosetReduction} in the corresponding hierarchy is $\widehat d=|\inlinequot{G}{H}| - |\inlinequot{R}{H}| = \frac{|G| - |R|}{|H|} < |G| - |R|$, so we indeed end up with a simpler problem in that respect.
Since the existence of such a subgroup $H$ can be checked efficiently (given that $G$ has constant size), we can always and in constant time determine upfront whether the \RGCTUF{} problem at hand is reducible using \cref{lem:cosetReduction}, and if so, reduce it to a simpler \RGCTUF{} problem.
Thus, for the rest of this section, we assume $R$ is not a union of cosets.
This assumption allows us to apply a special case of the Cauchy-Davenport theorem that holds despite the fact that the group order may not be prime. We refer to \cref{lemma:CD-replacement} for details.

\subsection{Decomposing the problem}\label{sec:highlevel}
We now focus on an \RGCTUF problem with a constraint matrix $T$ that can be decomposed into a $3$-sum of the form $T=\begin{psmallmatrix}A & ef^\top \\ gh^\top & B\end{psmallmatrix}$.
The decomposition allows for splitting $x$, $b$, and $\gamma$ into two parts accordingly, giving the equivalent formulation
\begin{equation}\label{eq:structured-problem}
\begin{pmatrix}
A & ef^\top \\ gh^\top & B
\end{pmatrix} \cdot \begin{pmatrix}
x_A \\ x_B
\end{pmatrix} \leq \begin{pmatrix}
b_A \\ b_B
\end{pmatrix}\enspace,\quad
\gamma_A^\top x_A + \gamma_B^\top x_B \in R\enspace,\quad
\begin{aligned}
x_A &\in \mathbb{Z}^{n_A} \\
x_B &\in \mathbb{Z}^{n_B}
\end{aligned}\enspace.
\end{equation}
In the inequality system, the variables $x_A$ and $x_B$ interact only through the rank-one blocks $ef^\top$ and $gh^\top$.
Fixing values of $\alpha\coloneqq f^\top x_B$ and $\beta\coloneqq h^\top x_A$ allows for rephrasing~\eqref{eq:structured-problem} through the following two almost independent problems
\begin{equation}%
\label{eq:A-B-problem}%
\begin{minipage}{.25\linewidth}
\setlength{\abovedisplayskip}{0pt}%
$$\begin{aligned}
A x_A &\leq b_A - \alpha e\\
h^\top x_A  &= \beta \\
x_A  &\in \mathbb{Z}^{n_A}
\end{aligned}$$
\end{minipage}
\quad\text{and}\quad
\begin{minipage}{.25\linewidth}
\setlength{\abovedisplayskip}{0pt}%
$$\begin{aligned}
B x_B &\leq b_B - \beta g\\
f^\top x_B  &= \alpha \\
x_B  &\in \mathbb{Z}^{n_B}
\end{aligned}$$
\end{minipage}~,
\enspace
\end{equation}%
where we seek to find solutions $x_A$ and $x_B$ such that their corresponding group elements $r_A\coloneqq \gamma_A^{\top} x_A$ and $r_B\coloneqq \gamma_B^{\top} x_B$, respectively, satisfy $r_A + r_B \in R$.
Hence, this desired relation between the target elements $r_A$ and $r_B$ is the only dependence between the two problems once $\alpha$ and $\beta$ are fixed.
We assume without loss of generality that $A$ has no fewer columns than $B$, and refer to the problem on the left as the \emph{A-problem}, and the problem on the right as the \emph{B-problem}.
We denote by $\Pi$ the set of all $(\alpha, \beta)\in \mathbb{Z}^{2}$ such that both the $A$- and $B$-problem are feasible. (Note that both problems are described through a TU constraint matrix; hence, feasibility can be checked efficiently.)
Also, for $(\alpha,\beta)\in \Pi$, let $\pi_A(\alpha,\beta) \subseteq G$ be all group elements $r_A\in G$ for which there is a solution $x_A$ to the $A$-problem with $\gamma^\top x_A = r_A$, and define $\pi_B$ analogously. We refer to $\pi_A$ and $\pi_B$ as \emph{patterns}.
Hence, \eqref{eq:structured-problem} is feasible if and only if there is a pair $(\alpha,\beta)\in \Pi$ such that, for some $r_A\in \pi_A(\alpha,\beta)$ and $r_B \in \pi_B(\alpha,\beta)$, we have $r_A+r_B\in R$.
Thus, patterns contain all information needed to decide feasibility.

Using techniques from~\cite{nagele_2022_congruency}, we can restrict our search for feasible solutions to a constant-size subset $\widehat{\Pi} \subseteq \Pi$.
More precisely, one can show the following (we give more details in \cref{sec:pf_boundedPattern}).
\begin{lemma}\label{lemma:boundedPatternShape}
	One can in strongly polynomial time find $\ell_i, u_i\in\mathbb{Z}$ for $i\in\{0,1,2\}$, with $u_i-\ell_i\leq d$  such that
	\begin{equation}\label{eq:patternShape}
		\widehat{\Pi} \coloneqq \left\{(\alpha, \beta)\in\mathbb{Z}^2 \colon \ell_0 \leq \alpha+\beta \leq u_0, \ell_1 \leq \alpha \leq u_1, \ell_2\leq \beta \leq u_2\right\}
	\end{equation}
    satisfies $\Pinar\subseteq\Pi$, and if~\eqref{eq:structured-problem} is feasible, then there is a pair $(\alpha,\beta)$ in $\Pinar$
	for which there is a solution $x_A$ to the $A$-problem and a solution $x_B$ to the $B$-problem with $\gamma^\top x_A + \gamma^\top x_B \in R$.
\end{lemma}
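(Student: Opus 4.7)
The plan is to construct $\widehat\Pi$ as a box of width at most $d$ in each of the three directions $\alpha+\beta$, $\alpha$, $\beta$, placed inside $\Pi$, and to invoke the Cauchy--Davenport-style \cref{lemma:CD-replacement} to translate any feasible pair of~\eqref{eq:structured-problem} to one whose $(\alpha,\beta)$ lies in $\widehat\Pi$.

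First, I observe that $\Pi$ has a clean description in $\mathbb{R}^2$. Both $\begin{psmallmatrix}A\\h^\top\end{psmallmatrix}$ and $\begin{psmallmatrix}f^\top\\B\end{psmallmatrix}$ are TU, since they occur (up to sign changes and permutations) as submatrices of the two TU matrices forming the $3$-sum $T$. Hence the two programs in~\eqref{eq:A-B-problem} are LP--IP equivalent, and $(\alpha,\beta)\in\Pi$ iff $(\alpha,\beta)\in(P_A\cap P_B)\cap\mathbb{Z}^2$, where $P_A,P_B\subseteq\mathbb{R}^2$ are the projections of the respective LP feasibility regions onto the $(\alpha,\beta)$-coordinates. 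Using LP methods and exploiting that $P_A\cap P_B$ is two-dimensional, I would in strongly polynomial time either detect $\Pi=\emptyset$ (in which case \eqref{eq:structured-problem} is infeasible and the statement is vacuous) or compute a lattice point $(\bar\alpha,\bar\beta)\in\Pi$. Based on this, I choose the parameters $\ell_i,u_i$ so that $\widehat\Pi$ contains $(\bar\alpha,\bar\beta)$, has each width $u_i-\ell_i$ at most $d$, and stays inside $P_A\cap P_B$; the containment $\widehat\Pi\subseteq\Pi$ then follows by convexity. In the degenerate case that the projection of $P_A\cap P_B$ onto one of the three directions has width less than $d$, I take the corresponding interval to cover the entire projection; in the remaining directions I use intervals of width exactly $d$ around $(\bar\alpha,\bar\beta)$.

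The crux of the proof is to show that any feasible $(\alpha^*,\beta^*,x_A^*,x_B^*)$ of~\eqref{eq:structured-problem} can be replaced by a feasible triple with $(\alpha,\beta)\in\widehat\Pi$. If $(\alpha^*,\beta^*)\notin\widehat\Pi$, one of the six defining inequalities is violated; by symmetry suppose $\alpha^*>u_1$. Consider moving $\alpha$ toward $u_1$ in unit steps while keeping $\beta=\beta^*$; each step stays in $P_A\cap P_B$ by convexity. The sets $\pi_A(\alpha,\beta^*)$ and $\pi_B(\alpha,\beta^*)$ are shifted by fixed group elements per step, a structure that follows from the existence of integer transition vectors provided by the TU augmented systems for $A$ and $B$. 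The Minkowski sums $\pi_A(\alpha,\beta^*)+\pi_B(\alpha,\beta^*)$ are therefore translates of a single sumset, shifted by a fixed group element for each unit change in $\alpha$. Because $R$ is not a union of cosets (guaranteed by the preprocessing in \cref{sec:cosetReduction}), \cref{lemma:CD-replacement} ensures that among any $d+1$ consecutive such translates at least one intersects $R$. Hence a feasible pair with the same $\beta^*$ exists at some $\alpha\in[u_1-d,u_1]$, lying in the $\alpha$-slab of $\widehat\Pi$. Analogous arguments handle the other five directions; iterating the procedure constantly many times brings $(\alpha^*,\beta^*)$ into $\widehat\Pi$.

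The main obstacle is to justify the shift structure of the patterns along axis-aligned lattice paths within $\Pi$ and to cast the resulting sumset sequences into the precise hypothesis required by \cref{lemma:CD-replacement}. Since the group order may be composite, the non-coset hypothesis on $R$ is essential: without it, a sequence of $d+1$ translates of a pattern need not cover enough of $R$, and in particular the classical Cauchy--Davenport bound fails for such groups. This is precisely where the preprocessing of \cref{sec:cosetReduction} earns its keep.
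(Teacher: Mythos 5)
Your proposal diverges from the paper's proof at its crucial step, and that step contains a genuine gap. You assert that as $\alpha$ is decreased in unit steps (with $\beta=\beta^*$ fixed), the patterns $\pi_A(\alpha,\beta^*)$ and $\pi_B(\alpha,\beta^*)$ are ``shifted by fixed group elements per step,'' so that the sumsets $\pi_A+\pi_B$ form a sequence of translates of a single set. This is false: an integral transition vector $z$ with the right effect on $f^\top x_B$ and $h^\top x_A$ does not map feasible solutions to feasible solutions, because the inequality systems in~\eqref{eq:A-B-problem} change with $(\alpha,\beta)$ and may be violated after adding $z$. The only relation between patterns at nearby pairs that actually holds is the much weaker averaging statement of \cref{prop:averaging}. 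Indeed, the paper's own \cref{fig:barrier} exhibits $\pi_A(0,0)=\{0,1,2\}$, $\pi_A(1,0)=\{0,1\}$, $\pi_A(2,0)=\{0\}$ --- sets of different cardinalities, which cannot be translates of one another. The subsequent appeal to \cref{lemma:CD-replacement} is also not a correct application of that lemma: it requires a sumset $X+Y$ with $|X|=|G|-|R|$ exactly and $|Y|\geq 2$, whereas your ``$d+1$ consecutive translates'' claim would need to apply to a sumset $S+\{0,g,\dots,dg\}$ with $S$ of arbitrary (possibly singleton) size and with $\{0,g,\dots,dg\}$ possibly collapsing when $g$ has small order; as a bare group-theoretic statement it fails (take $G=\inlinequot{\mathbb{Z}}{4\mathbb{Z}}$, $R=\{3\}$, $S=\{0\}$, $g=2$).

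The paper's actual argument (see \cref{sec:pf_boundedPattern}, porting Lemmas~2.5 and~5.1 of~\cite{nagele_2022_congruency}) is a proximity argument of a different nature: one fixes a reference solution, decomposes the difference between an arbitrary feasible solution of~\eqref{eq:structured-problem} and the reference into elementary integral vectors compatible with the TU structure (so that discarding any \emph{interval} of them preserves $Tx\leq b$), and then applies the pigeonhole-type \cref{lem:sumModM}: if more than $d=|G|-|R|$ such vectors are present, some interval of them can be discarded while keeping $\gamma^\top x\in R$. Iterating pins $(\alpha,\beta)$ within distance $d$ of the reference in each of the three directions $\alpha$, $\beta$, $\alpha+\beta$, which is exactly the shape of $\Pinar$. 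Note that this route needs no assumption that $R$ is not a union of cosets and makes no use of \cref{lemma:CD-replacement}; that lemma enters only later, in the analysis of the patterns over $\Pinar$. Your LP-based construction of a candidate box inside $\Pi$ is fine as far as it goes, but without the decomposition-plus-interval-removal mechanism the ``translation into $\widehat\Pi$'' step does not go through.
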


Therefore, the challenges lie less in the size of $\Pi$, but rather in how to obtain information on the sets $\pi_A(\alpha,\beta)$ and $\pi_B(\alpha,\beta)$ for pairs $(\alpha,\beta)\in \Pi$.
Opposed to previous techniques, which almost solely focused on $\pi_B$, we investigate both $\pi_A$ and $\pi_B$ and their interplay---see \cref{sec:patterns}.

As $B$ has at most half the columns of the constraint matrix $T$ of the original \RGCTUF problem~\eqref{eq:structured-problem}, we can afford (runtime-wise) to recursively call our algorithm multiple times on the $B$-problem for different targets $R_B$ of the same depth $d=|G|-|R|$ as the original problem, i.e., with $|R_B|=|R|$.
(We refrain from using larger depths, as \GCTUF become harder with increasing depth.)
This allows us to compute a set $\bar{\pi}_B(\alpha,\beta)\subseteq \pi_B(\alpha,\beta)$ of size $|\bar{\pi}_B(\alpha,\beta))|=\min\{d+1,\pi_B(\alpha,\beta)\}$.
Indeed, we can start with $\bar{\pi}_B(\alpha,\beta)=\emptyset$ and, as long as $|\bar{\pi}_B(\alpha,\beta)| < \min \{d+1,\pi_B(\alpha,\beta)\}$, we solve an \RGCTUF[R_B] $B$-problem (i.e., we look for a $B$-problem solution $x_B$ with $\gamma^\top x_B \in R_B$) with $R_B = G\setminus \bar{\pi}_B(\alpha,\beta)$ being a set of size at least $|G|-d$.
If $R_B\cap \pi_B(\alpha,\beta)\neq \emptyset$, then we find an element in $R_B\cap \pi_B(\alpha,\beta)$ that can be added to $\bar{\pi}_B(\alpha,\beta)$ and we repeat; otherwise, $R_B\cap \pi_B(\alpha,\beta)=\emptyset$ and we know that we computed $\bar\pi_B\ab=\pi_B\ab$.

To the contrary, note that the $A$-problem may be almost as big as the original \GCTUF problem (possibly with just two fewer columns).
Hence, here we cannot afford (runtime-wise) a similar computation as for the $B$-problem.
However, we can afford to solve multiple \RGCTUF[R_A] $A$-problems of smaller depth, i.e., $|R_A|>|R|$, because the runtime decreases significantly with decreasing depth.
By using the same approach as in the $B$-problem, but with sets $R_A$ of size $|R_A|\geq|R|+1$, we obtain a set $\bar{\pi}_A(\alpha,\beta)\subseteq \pi_A(\alpha,\beta)$ of size $|\bar{\pi}_A(\alpha,\beta)|=\min\{d,\pi_A(\alpha,\beta)\}$.

Let us next take a closer look at patterns.
Fix some $\ab \in \Pi$ and let $\pi_A \ab = \{r_A^1,\ldots,r_A^{\ell_A}\}$ for some $\ell_A \geq 1$ and pairwise different $r_A^i\in G$, and let $x_A^{1},\ldots, x_A^{\ell_A}$ be corresponding solutions of the $A$-problem with $\gamma_A^\top x_A^i=r_A^i$.
Define $\ell_B$, $r_B^i$, and $x_B^i$ analogously.
Observe that if $\ell_A\leq d$ and $\ell_B\leq d+1$, we have $\bar\pi_X\ab = \pi_X\ab$ for both $X\in\{A,B\}$.
Hence, we can compute all feasible group elements and check explicitly whether $r_A^i+r_B^j\in R$ for some $i\in[\ell_A]$ and $j\in[\ell_B]$, i.e., whether a solution exists.
If $\ell_B\geq d+1$, we can (independently of $\ell_A$) even show that there always exists a feasible solution, and we can also find one:
Indeed, we can compute $d+1$ solutions $x^i\coloneqq (x^1_A, x^i_B)$ with pairwise different sums $r^1_A + r^i_B\in G$, at least one of which must satisfy $r^1_A + r^i_B \in R$.
If $\ell_A\geq d$ and $\ell_B\geq 2$, we can argue similarly:
We show that among any $d$ elements of $\bar{\pi}_A \ab$, and any two elements of $\bar{\pi}_B \ab$ (which we can compute), there is a pair $r^i_A, r^j_B$ with $r^i_A + r^j_B \in R$.
Note that while for groups of prime order this can be shown via the Cauchy-Davenport theorem, the above result does not hold in general.
We show, however, that as long as $R$ is not a union of cosets in $G$, we can recover the implication (cf. \cref{sec:cosetReduction} for why this assumption is legit).

\begin{lemma}\label{lemma:CD-replacement}
Let $G$ be a finite abelian group, and let $R \subseteq G$ be such that $R \neq R+H$ for any non-trivial subgroup $H$ of $G$.
Then, for any subsets $X,Y\subseteq G$ with $|X| = |G|-|R|$ and $|Y|\geq 2$, we have $(X+Y)\cap R \neq \emptyset$.
\end{lemma}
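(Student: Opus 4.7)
I would attack \cref{lemma:CD-replacement} by contradiction, using a tight pigeonhole count to produce a non-trivial subgroup $H$ with $R+H=R$.

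\textbf{Proof plan.} Suppose for contradiction that $(X+Y)\cap R=\emptyset$. Set $S\coloneqq G\setminus R$, so $|S|=|G|-|R|=|X|$. The assumption gives $X+Y\subseteq S$, and in particular
\[
|X|=|X+y|\;\leq\;|X+Y|\;\leq\;|S|\;=\;|X|\quad\text{for every }y\in Y,
\]
so all inequalities are equalities. Since $X+y\subseteq X+Y$ and both have the same (finite) cardinality, $X+y=X+Y$ for every $y\in Y$. Thus for any $y_1,y_2\in Y$ we have $X+y_1=X+y_2$, i.e. $X+(y_1-y_2)=X$.

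\textbf{From here I would extract the forbidden subgroup.} Let
\[
H\coloneqq\langle\, y_1-y_2 \;:\; y_1,y_2\in Y\,\rangle\leq G.
\]
By the previous step, $X+h=X$ for every generator $h$ of $H$, and hence $X+H=X$. Because $|Y|\geq 2$, there exist $y_1\neq y_2$ in $Y$, so $y_1-y_2\neq 0$ and $H$ is non-trivial. Moreover $S=X+Y=X+y_1$ is a translate of $X$, so $S+H=X+y_1+H=X+H+y_1=X+y_1=S$. Taking complements in $G$ yields $R+H=R$ (since $G\setminus(S+H)=(G\setminus S)-0=R$ whenever $S+H=S$, as $g\in R+H \iff g-H\cap R\neq\emptyset \iff g\notin \bigcap_{h\in H}(S+h)=S$). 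This contradicts the hypothesis that $R$ is not a union of cosets of any non-trivial subgroup.

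\textbf{Anticipated obstacle.} The only subtlety is the passage from $S+H=S$ to $R+H=R$; I would write this out carefully since it is tempting to confuse $G\setminus(R+H)$ with $(G\setminus R)+H$ in a non-abelian or non-subgroup setting. Here however $H$ is a subgroup and $G$ is abelian, so $h\in H \iff -h\in H$, and the equivalence $g\in R+H \iff (g-H)\cap R\neq\emptyset$ together with $S+H=S$ gives the desired complementation cleanly. Everything else is a straightforward pigeonhole/stabilizer argument, so I do not expect further difficulties.
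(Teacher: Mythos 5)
Your proposal is correct and follows essentially the same route as the paper: the cardinality forcing $|X+Y|=|X|$ gives $X+(y_1-y_2)=X$, the non-trivial subgroup generated by such differences stabilizes $X$ (and hence its translate $G\setminus R$), and complementation yields $R+H=R$, contradicting the hypothesis. The paper phrases the last step slightly more directly by noting $R=G\setminus(X+b_1)$ and iterating $X=X+h$, but this is the same argument.
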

\begin{proof}
Let $b_1, b_2 \in Y$ with $b_1\neq b_2$, and set $h= b_1-b_2$. Assume $\left(X + Y\right)\cap R = \emptyset$. Then $|X| = |G| - |R|$ implies $|X + Y| =|X|$.
Thus, $X + b_1 = X + b_2$ and hence  $X = X + h$.
Iterating gives $X = X + \langle h \rangle$, where $\langle h \rangle$ denotes the subgroup generated by $h$. As $R = G \setminus (X + b_1)$, we get $R = R + \langle h \rangle$, a contradiction.
\end{proof}

The following observation summarizes the above discussion.

\begin{observation}\label{obs:immediateConclusion}
Let $\ab \in \widehat{\Pi}$.
If $|\bar{\pi}_A \ab|\leq d-1$ or $|\bar{\pi}_B \ab|\geq2$, we can immediately determine whether a feasible solution to the original \RGCTUF problem exists for such $\ab$, and if so, obtain one by combining solutions computed for the $A$- and $B$-subproblem when determining $\bar\pi_A$ and $\bar\pi_B$.
\end{observation}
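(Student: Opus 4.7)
The plan is to case split on which of the two disjunctive conditions $|\bar{\pi}_A\ab| \leq d-1$ or $|\bar{\pi}_B\ab| \geq 2$ holds and, in each case, to combine the construction rules $|\bar{\pi}_A\ab| = \min\{d, |\pi_A\ab|\}$ and $|\bar{\pi}_B\ab| = \min\{d+1, |\pi_B\ab|\}$ with the combinatorial tools already developed in this section.

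First I would consider the case $|\bar{\pi}_A\ab| \leq d-1$. Since $|\bar{\pi}_A\ab| < d$, the construction forces $\bar{\pi}_A\ab = \pi_A\ab$, so the complete set of feasible $A$-targets for this $\ab$ is explicitly at hand. If additionally $|\bar{\pi}_B\ab| \leq d$, then analogously $\bar{\pi}_B\ab = \pi_B\ab$, and it suffices to iterate over all pairs $(r_A,r_B) \in \bar{\pi}_A\ab \times \bar{\pi}_B\ab$ and check whether $r_A+r_B \in R$. Otherwise $|\bar{\pi}_B\ab| = d+1$; assuming $\bar{\pi}_A\ab \neq \emptyset$ (else the $\ab$-restricted instance is infeasible), for any fixed $r_A \in \bar{\pi}_A\ab$ the translate $r_A + \bar{\pi}_B\ab$ consists of $d+1$ distinct group elements and therefore intersects $R$, since $|G \setminus R| = d$, yielding a feasible pair.

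In the complementary case $|\bar{\pi}_B\ab| \geq 2$ with $|\bar{\pi}_A\ab| \geq d$, the sub-case $|\bar{\pi}_B\ab| = d+1$ is again handled by the pigeonhole argument above, while for $2 \leq |\bar{\pi}_B\ab| \leq d$ one has $\bar{\pi}_B\ab = \pi_B\ab$, and I would apply \cref{lemma:CD-replacement} with $X$ any $d$-element subset of $\bar{\pi}_A\ab$ and $Y = \bar{\pi}_B\ab$. Its hypothesis $R \neq R+H$ for all non-trivial subgroups $H$ holds by the standing assumption of \cref{sec:cosetReduction}, and the remaining conditions $|X| = d = |G|-|R|$ and $|Y| \geq 2$ are verified by construction, so the lemma yields $(X+Y) \cap R \neq \emptyset$ and hence a feasible pair $(r_A,r_B)$. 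Whenever a feasible pair is produced, the corresponding solutions $x_A,x_B$ stored from the oracle calls that built $\bar{\pi}_A\ab$ and $\bar{\pi}_B\ab$ combine to a solution $x=(x_A,x_B)$ of~\eqref{eq:structured-problem}; if no pair passes the test in the ``both $\bar{\pi}$'s equal the corresponding $\pi$'' sub-case, the $\ab$-restricted instance is certifiably infeasible. The one step that is not mere bookkeeping is the invocation of \cref{lemma:CD-replacement}, which is precisely where the coset reduction of \cref{sec:cosetReduction} pays off, substituting for Cauchy--Davenport in the non-prime regime.
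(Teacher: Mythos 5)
Your proposal is correct and follows essentially the same route as the paper's discussion preceding the observation: full enumeration when both $\bar\pi_A\ab=\pi_A\ab$ and $\bar\pi_B\ab=\pi_B\ab$, a pigeonhole argument over $d+1$ distinct sums when $|\bar\pi_B\ab|=d+1$, and \cref{lemma:CD-replacement} (valid under the standing no-coset assumption from \cref{sec:cosetReduction}) when $|\bar\pi_A\ab|=d$ and $|\bar\pi_B\ab|\geq 2$. The only notable difference is that you make the case exhaustiveness and the identities $|\bar\pi_X\ab|=\min\{\cdot,|\pi_X\ab|\}$ explicit, which the paper leaves implicit.
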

Thus, the only case in which we cannot immediately check whether a feasible solution exists for some $\ab$, is when $\ell_B=1$ and $\ell_A \geq d+1$ (which imply $|\bar{\pi}_A \ab| = d$ and $|\bar{\pi}_B \ab|= 1$).
This is the only case where we may have $\left(\pi_A \ab + \pi_B \ab \right)\cap R \neq \emptyset$ but $\left(\bar{\pi}_A \ab + \bar{\pi}_B \ab \right)\cap R = \emptyset$, in which case we say that $\ab$ contains a \emph{hidden solution}.

\subsection{\boldmath New insights towards overcoming previous barriers for $d=3$}%
\label{sec:patterns}

We now describe how our new techniques allow for overcoming barriers restricting previous approaches to depth $d=2$.
Recall that we focus on a constant size subset $\Pinar$ as defined in \eqref{eq:patternShape}.
We call sets of this form, for any choice of $\ell_i$ and $u_i$, \emph{pattern shapes}, and denote by
\begin{equation}\label{eq:directions}
    \mathcal{D}\coloneqq \left\{
\pm\begin{psmallmatrix} 1 \\ 0 \end{psmallmatrix},
\pm\begin{psmallmatrix} 0 \\ 1 \end{psmallmatrix},
\pm\begin{psmallmatrix} 1 \\ -1 \end{psmallmatrix}\right\}
\end{equation}
the possible edge directions of $\operatorname{conv}(\Pinar)$.
Focusing on $\Pinar$ allows for efficiently computing $\bar\pi_X\ab$ for $X\in\{A,B\}$ and all $\ab\in\Pinar$ to the extent discussed earlier.
In order to proceed, we use a structural result from~\cite{nagele_2022_congruency}, called averaging, that allows us to relate solutions---and thus elements of $\pi_X$---across different $\ab$.
Despite being true in more generality, the exposition here requires the following special case only.

\begin{proposition}[{\cite[special case of Lemma~5.3]{nagele_2022_congruency}}]\label{prop:averaging}
Consider an \RGCTUF{} problem as described in~\eqref{eq:structured-problem}.
 Let $X \in \{A,B\}$, $v\in \mathcal{D}$, and $\ab\in \Pinar$ with $\ab + 2v \in \Pinar$. Given a solution $x_1$ of the $X$-problem for $\ab$ and, similarly, $x_2$ for $\ab+2v$, there are solutions $x_3,x_4$ for the $X$-problem for $\ab + v$ such that $x_1+x_2 = x_3+x_4$.
\end{proposition}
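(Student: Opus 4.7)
The plan is to prove the proposition via a standard averaging argument for TU systems, treating $X=A$ explicitly (the case $X=B$ is fully symmetric). Writing $v=(v_1,v_2)$ and setting $z \coloneqq x_1+x_2 \in \Z^{n_A}$, I would first observe that $y \coloneqq z/2$ is a (possibly fractional) feasible solution of the $A$-problem at $\ab+v$: averaging $Ax_1 \leq b_A - \alpha e$ and $Ax_2 \leq b_A - (\alpha+2v_1)e$ yields $Ay \leq b_A - (\alpha+v_1)e$, and averaging $h^\top x_1 = \beta$ with $h^\top x_2 = \beta+2v_2$ yields $h^\top y = \beta+v_2$. In particular, $h^\top z = 2(\beta+v_2)$.

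The second step is to introduce the polyhedron
\[
Q \coloneqq \bigl\{x \in \mathbb{R}^{n_A} : Ax \leq b_A - (\alpha+v_1)e,\ A(z-x) \leq b_A - (\alpha+v_1)e,\ h^\top x = \beta+v_2\bigr\},
\]
whose elements are precisely those $x$ for which both $x$ and $z-x$ are feasible solutions of the $A$-problem at $\ab+v$ (the equality $h^\top(z-x) = \beta+v_2$ being automatic from $h^\top z = 2(\beta+v_2)$ and $h^\top x = \beta+v_2$). Since $y \in Q$, the polyhedron $Q$ is non-empty. Its constraint matrix consists of (signed) rows of $\begin{psmallmatrix} A \\ h^\top \end{psmallmatrix}$, which is a submatrix of the TU matrix $\begin{psmallmatrix} A & e & e \\ h^\top & 0 & 1 \end{psmallmatrix}$ appearing in the $3$-sum yielding $T$; hence this matrix is TU, and the right-hand side is manifestly integer. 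By the classical Hoffman--Kruskal-type result stating that any non-empty polyhedron with TU constraint matrix and integer right-hand side contains an integer point, there exists an integer $x_3 \in Q$.

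Setting $x_4 \coloneqq z - x_3$ now completes the proof: both $x_3, x_4 \in \Z^{n_A}$ are integer feasible solutions of the $A$-problem at $\ab+v$ by definition of $Q$, and $x_3+x_4 = z = x_1+x_2$ as required. The $X=B$ case proceeds identically, with $\begin{psmallmatrix} f^\top \\ B \end{psmallmatrix}$---a submatrix of the TU matrix $\begin{psmallmatrix} 0 & 1 & f^\top \\ g & g & B \end{psmallmatrix}$---playing the role of $\begin{psmallmatrix} A \\ h^\top \end{psmallmatrix}$, and with the roles of $\alpha$ and $\beta$ swapped. The only point requiring mild attention is the existence of an integer point in $Q$ when $Q$ may be unbounded, but this is a standard feature of TU systems and presents no real obstacle; the entire argument hinges on precisely the TU-ness that Seymour's decomposition hands us for the constituent matrices of the $3$-sum.
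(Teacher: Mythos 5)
Your proof is correct: the averaging of the two systems, the auxiliary polyhedron $Q$ forcing both $x$ and $z-x$ to be feasible at $\ab+v$, and the appeal to Hoffman--Kruskal integrality of the TU system $\begin{psmallmatrix}A\\ h^\top\end{psmallmatrix}$ (a submatrix of a TU constituent of the $3$-sum) together give exactly the claimed $x_3,x_4$. The paper itself does not prove \cref{prop:averaging} but imports it as a special case of \cite[Lemma~5.3]{nagele_2022_congruency}, and your argument is essentially the standard one underlying that cited lemma, so there is nothing further to reconcile.
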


We remark that the proof of the above result for congruency-constrained problems given in~\cite{nagele_2022_congruency} only exploits that congruency-constraints are linear constraints; therefore, the result carries over to group-constraints seamlessly.

In previous approaches for depth $d=2$, it was enough to only compute a single element from $\pi_A$ (e.g., by solving the $A$-problem after dropping the group constraint).
Concretely, consider patterns of the shape as given in \cref{fig:barrier}. For $d=2$, \cref{prop:averaging} can be used to show that, if there is a hidden feasible solution for $\ab=(0,0)$ or $\ab=(2,0)$, then there must also be a feasible solution for $\ab=(1,0)$.
The example in \cref{fig:barrier} shows that this is no longer true if the depth $d$ exceeds~$2$, as only $\ab=(0,0)$ admits a feasible solution.

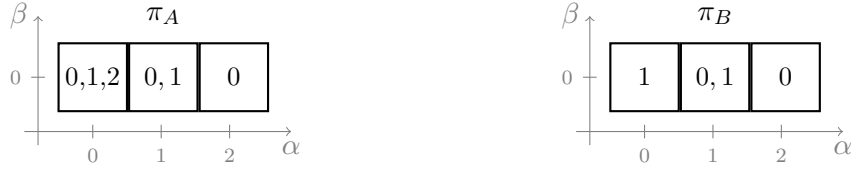
\begin{figure}[!ht]
\centering
\begin{subfigure}{0.45\textwidth}
\centering
\begin{tikzpicture}[scale=0.9]

\begin{scope}[every node/.style={patternSquare}]
\node (l) at (0.5,0.5) {$0,\! 1,\! 2$};
\node[right=0cm of l] (m) {$0,1$};
\node[right=0cm of m] (r) {$0$};
\end{scope}

\node[above=0.1cm of m, font=\normalsize] {$\pi_A$};

\begin{scope}[every path/.append style={gray}]
\draw[->] (-0.5, -0.3) -- node[pos=1, below] {$\alpha$} (3.4,-0.3);
\draw[->] (-0.3, -0.5) -- node[pos=1, left] {$\beta$} (-0.3,1.4);
\foreach \tick/\num in {0.5/0,1.5/1,2.5/2} {
\draw (\tick, -0.2) -- (\tick, -0.4) node[below, font=\scriptsize]{$\num$};
}
\draw (-0.2, 0.5) -- (-0.4, 0.5) node[left, font=\scriptsize]{$0$};
\end{scope}
\end{tikzpicture} %
\end{subfigure}
\begin{subfigure}{0.45\textwidth}
\centering
\begin{tikzpicture}[scale=0.9]

\begin{scope}[every node/.style={patternSquare}]
\node (l) at (0.5, 0.5) {$1$};
\node[right=0cm of l] (m) {$0, 1$};
\node[right=0cm of m] (r) {$0$};
\end{scope}

\node[above=0.1cm of m, font=\normalsize] {$\pi_B$};

\begin{scope}[every path/.append style={gray}]
\draw[->] (-0.5, -0.3) -- node[pos=1, below] {$\alpha$} (3.4,-0.3);
\draw[->] (-0.3, -0.5) -- node[pos=1, left] {$\beta$} (-0.3,1.4);
\foreach \tick/\num in {0.5/0,1.5/1,2.5/2} {
\draw (\tick, -0.2) -- (\tick, -0.4) node[below, font=\scriptsize]{$\num$};
}
\draw (-0.2, 0.5) -- (-0.4, 0.5) node[left, font=\scriptsize]{$0$};
\end{scope}
\end{tikzpicture} %
\end{subfigure}
\caption{Possible patterns $\pi_A$ and $\pi_B$ for a problem with group $G=\sfrac{\mathbb{Z}}{4\mathbb{Z}}$. Every square corresponds to a pair $\ab\in\Pinar$, and the numbers in the box indicate elements of $\pi_A\ab$ and $\pi_B\ab$, respectively. For $R=\{3\}$, there is a feasible solution with $\ab=(0,0)$, but this cannot be detected without studying $\pi_A$.
}\label{fig:barrier}
\end{figure}

This problem can be circumvented by analyzing the $A$-pattern $\bar{\pi}_A$.
As argued in \cref{sec:highlevel}, if a pair $\ab$ has a hidden solution, then $|\pi_A\ab| \geq d+1$ (and hence $|\bar{\pi}_A\ab| = d$), hence we assume that there exists at least one such pair.
The following result uses averaging (i.e., \cref{prop:averaging}) to show that pairs $\abprime$ adjacent to such a pair $\ab$ containing a hidden solution also have large $\bar{\pi}_A \abprime$.

\begin{lemma}\label{lem:propagating_hidden_solutions}
Let $d\in\{1,2,3\}$, $v \in \mathcal{D}$, and $(\alpha, \beta)\in\Pinar$ such that $|\pi_A\ab|\geq d+1$ and $(\alpha, \beta)+2v\in\Pinar$.
Then $|\bar{\pi}_A(\ab+v)| = d$.
\end{lemma}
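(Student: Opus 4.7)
The plan is to combine the averaging proposition (\cref{prop:averaging}) with an elementary sumset bound, transferring the abundance of $A$-solutions at $\ab$ (and the mere existence of one $A$-solution at $\ab+2v$) to the midpoint $\ab+v$. No number-theoretic input about the group $G$ will be required, which is what keeps the argument valid beyond the prime-cardinality setting.

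First, since $|\pi_A\ab|\geq d+1$, we pick $A$-problem solutions $x_1,\ldots,x_{d+1}$ at $\ab$ whose group images $r_i\coloneqq \gamma_A^\top x_i\in G$ are pairwise distinct. Because $\ab+2v\in\Pinar\subseteq\Pi$, the $A$-problem at $\ab+2v$ admits some solution $y$; set $r_y\coloneqq\gamma_A^\top y$. Convexity of the axis-aligned box defining $\Pinar$ in~\eqref{eq:patternShape} implies $\ab+v\in\Pinar\subseteq\Pi$, so in particular $\pi_A(\ab+v)$ is non-empty.

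Next, for each $i\in[d+1]$, apply \cref{prop:averaging} to the pair $(x_i,y)$ to obtain $A$-solutions $z_i$ and $z'_i$ at $\ab+v$ with $x_i+y=z_i+z'_i$. Applying $\gamma_A^\top$ to both sides gives
\[
    \gamma_A^\top z_i + \gamma_A^\top z'_i \;=\; r_i + r_y\enspace,
\]
so as $i$ ranges over $[d+1]$, the right-hand side attains $d+1$ pairwise distinct values in $G$. Letting $P\coloneqq\pi_A(\ab+v)\subseteq G$, we have $\gamma_A^\top z_i,\gamma_A^\top z'_i\in P$, and hence the sumset $P+P\coloneqq\{p+q\colon p,q\in P\}$ contains all of these $d+1$ distinct values, yielding $|P+P|\geq d+1$. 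Combining this with the trivial bound $|P+P|\leq\binom{|P|+1}{2}$ forces $|P|\geq d$ in every case $d\in\{1,2,3\}$: indeed, $|P|\leq d-1$ would give $|P+P|\leq (d-1)d/2$, which equals $0,1,3$ for $d=1,2,3$ respectively and is in each case strictly less than $d+1\in\{2,3,4\}$. Hence $|\pi_A(\ab+v)|\geq d$, and by the definition of $\bar\pi_A$, $|\bar\pi_A(\ab+v)|=\min\{d,|\pi_A(\ab+v)|\}=d$, as required.

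The argument is conceptually clean; the main subtlety is recognising that averaging against a single fixed reference solution $y$ already suffices to import all $d+1$ distinct images from $\ab$ into $P+P$ simultaneously. The sumset bound breaks exactly at $d=4$, since a set of size $3$ can realise $\binom{4}{2}=6\geq 5$ distinct sums, which is precisely why pattern propagation via this technique does not extend beyond depth $3$ and why the lemma is stated with the restriction $d\leq 3$.
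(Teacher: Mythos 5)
Your proof is correct and follows essentially the same route as the paper's: apply \cref{prop:averaging} to each of $d+1$ solutions at $\ab$ with distinct group images against one fixed solution at $\ab+2v$, observe that the resulting pairwise sums realize $d+1$ distinct values in $\pi_A(\ab+v)+\pi_A(\ab+v)$, and conclude via the same counting bound $\binom{d-1}{2}+(d-1)=\sfrac{(d-1)d}{2}<d+1$ for $d\in\{1,2,3\}$ that $|\pi_A(\ab+v)|\geq d$. Your closing remark about why the bound fails at $d=4$ matches the paper's \cref{fig:barrier_example} discussion.
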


\begin{proof}
It is enough to show that $|\pi_A(\ab+v)|\geq d$.
To this end, for each of the at least $d+1$ elements $r\in\pi_A(\alpha,\beta)$, let $x_1^r$ be a corresponding solution of the $A$-problem, and let $x_2$ denote any fixed solution for the $A$-problem on the pair $\ab+2v$.
\cref{prop:averaging} applied to $x_1^r$ and $x_2$ gives solutions $x_3^r$ and $x_4^r$ corresponding to elements $\gamma_A^\top x_3^r,\gamma_A^\top x_4^r\in\pi_A(\ab+v)$ with $\gamma_A^\top x_3^r + \gamma_A^\top x_4^r$ taking at least $d+1$ different values.
Assume for the sake of deriving a contradiction that $|\pi_A(\ab+v)|\leq d-1$. Then, since the number of different sums of pairs of elements in $\pi_A(\ab+v)$ is bounded by $\binom{d-1}{2} +d - 1 = \sfrac{(d-1)d}{2}< d+1 $ for $d\in\{1,2,3\}$,
this contradicts the above construction.
\end{proof}

\begin{remark}
For depth $d=4$, one can find \GCTUF problems with $G=\sfrac{\mathbb{Z}}{5\mathbb{Z}}$ and patterns that fail to satisfy \cref{lem:propagating_hidden_solutions}; we present one such example in \cref{fig:barrier_example}.
Moreover, we remark that \cref{lem:propagating_hidden_solutions} is the only place in our proofs where we use the assumption that $d = |G|-|R| \leq 3$.
\end{remark}
\begin{figure}[!ht]
\centering
\begin{subfigure}{0.45\textwidth}
\centering
\begin{tikzpicture}[scale=0.9]

\begin{scope}[every node/.style={patternSquare}]
\node[text depth=2.3ex] (l) at (0.5,0.5) {$0, 1,$ $2,\! 3,\! 4$};
\node[text depth=2.3ex, right=0cm of l] (m) {$0, 1,$ $2$};
\node[right=0cm of m] (r) {$0$};
\end{scope}

\node[above=0.1cm of m, font=\normalsize] {$\pi_A$};

\begin{scope}[every path/.append style={gray}]
\draw[->] (-0.5, -0.3) -- node[pos=1, below] {$\alpha$} (3.4,-0.3);
\draw[->] (-0.3, -0.5) -- node[pos=1, left] {$\beta$} (-0.3,1.4);
\foreach \tick/\num in {0.5/0,1.5/1,2.5/2} {
\draw (\tick, -0.2) -- (\tick, -0.4) node[below, font=\scriptsize]{$\num$};
}
\draw (-0.2, 0.5) -- (-0.4, 0.5) node[left, font=\scriptsize]{$0$};
\end{scope}
\end{tikzpicture}
\end{subfigure}
\begin{subfigure}{0.45\textwidth}
\centering
\begin{tikzpicture}[scale=0.9]

\begin{scope}[every node/.style={patternSquare}]
\node (l) at (0.5, 0.5) {$1$};
\node[right=0cm of l] (m) {$0, 1$};
\node[right=0cm of m] (r) {$0$};
\end{scope}

\node[above=0.1cm of m, font=\normalsize] {$\pi_B$};

\begin{scope}[every path/.append style={gray}]
\draw[->] (-0.5, -0.3) -- node[pos=1, below] {$\alpha$} (3.4,-0.3);
\draw[->] (-0.3, -0.5) -- node[pos=1, left] {$\beta$} (-0.3,1.4);
\foreach \tick/\num in {0.5/0,1.5/1,2.5/2} {
\draw (\tick, -0.2) -- (\tick, -0.4) node[below, font=\scriptsize]{$\num$};
}
\draw (-0.2, 0.5) -- (-0.4, 0.5) node[left, font=\scriptsize]{$0$};
\end{scope}
\end{tikzpicture}
\end{subfigure}
\caption{Possible patterns $\pi_A$ and $\pi_B$ for a problem with group $G=\sfrac{\mathbb{Z}}{5\mathbb{Z}}$. Every square corresponds to a pair $\ab\in\Pinar$, and the numbers in the box indicate the elements of $\pi_A\ab$ and $\pi_B\ab$, respectively. For $d=4$, \cref{lem:propagating_hidden_solutions} fails to hold for $\ab=(0,0)$ and $v=(1,0)$.}
\label{fig:barrier_example}
\end{figure}
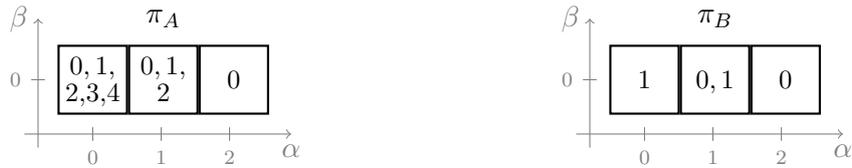

To proceed, we observe that if, on top of the assumption in \cref{lem:propagating_hidden_solutions}, $|\pi_B(\ab+v)| \geq 2$ holds, then \cref{lemma:CD-replacement} guarantees $(\bar\pi_A(\ab+v) + \bar\pi_B(\ab+v)) \cap R \neq \emptyset$, i.e., existence of a feasible solution.
Thus, from now on, we analyze both the $A$- and $B$-patterns in detail, in particular through averaging, to find a pattern constellation as mentioned above, or identify additional properties that allow for direct progress.

\subsection{Analyzing pattern structure}

Before getting to an exhaustive analysis of patterns based on the insights laid out earlier, we introduce notions that will allow us to distinguish patterns from a structural point of view (also see \cref{fig:pattern}).

\begin{definition}\label{def:pattern_types}
Let $\mathcal{D}$ be the possible edge directions of a pattern shape as defined in \eqref{eq:directions}. We call $(\alpha, \beta)\in\Pinar$
an \emph{interior pair} if $(\alpha, \beta) + v \in \Pinar$ for all $v\in \mathcal{D}$,
a \emph{border pair} if $(\alpha, \beta) \pm v \in \Pinar$ for exactly two $v\in \mathcal{D}$, and
a \emph{vertex pair} if it is not an interior or border pair.
\end{definition}

\begin{figure}[ht]%
\centering%
\begin{tikzpicture}[scale=0.6]
\clip (-1.5, -1) rectangle (4.5,4.5);
\begin{scope}[every path/.append style={thick}, every node/.style={patternSquare, minimum width=.6cm, minimum height=.6cm, text width=0.5cm, font={\vphantom{b}}}]
\node at (0.5, 0.5) {$x$};
\node at (1.5, 0.5) {$b$};
\node at (2.5, 0.5) {$x$};
\node at (0.5, 1.5) {$b$};
\node at (1.5, 1.5) {$i$};
\node at (2.5, 1.5) {$x$};
\node at (0.5, 2.5) {$x$};
\node at (1.5, 2.5) {$x$};
\end{scope}

\draw[->] (-0.75, -0.5) -- node[pos=1, above right] {$\alpha$} (3.5,-0.5);
\draw[->] (-0.5, -0.75) -- node[pos=1, above left] {$\beta$} (-0.5,3.5);
\foreach \tick in {0.5,1.5,2.5} {
\draw (\tick, -0.6) -- (\tick, -0.4);
\draw (-0.6, \tick) -- (-0.4, \tick);
}
\end{tikzpicture}
\caption{A pattern shape with interior, border, and vertex pairs (marked $i$, $b$, and $x$, respectively).}
\label{fig:pattern}
\end{figure}
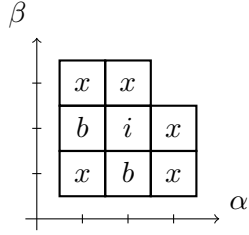

Note that for a border pair $\ab$, due to symmetry, the two directions $v\in \mathcal{D}$ satisfying $\ab\pm v\in\Pinar$ will always be antiparallel, i.e., $v$ and $-v$ for some $v\in\mathcal{D}$.
Next, we summarize results of~\cite{nagele_2022_congruency} that we reuse here.
We remark that these results were proved in a congruency-constrained setting, but translate to problems with group constraints straightforwardly.
For the sake of completeness, we comment on how to adapt the proofs in \cref{sec:pf_blackbox}.

\begin{theorem}[\cite{nagele_2022_congruency}]\label{thm:blackbox}
Consider an \RGCTUF{} problem as described in~\eqref{eq:structured-problem}.
\begin{enumerate}
\item \label{thmitem:next_to_two} If there is some $\ab\in\Pinar$ with $|\bar\pi_B\ab| \geq 2$, then for each $\abprime \in\Pinar$, there exists $v\in\mathcal{D}\cup\{0\}$ such that $|\bar\pi_B(\abprime+v)| \geq 2$.
If in addition, $\Pinar$ contains an interior pair, then for each $\abprime\in\Pinar$, $v$ can be chosen such that we additionally have $\abprime +2v \in \Pinar$.

\item \label{thmitem:linear_pattern} If $|\pi_B\ab|= 1$ for all $\ab\in\Pinar$, or $\Pinar$ only contains vertex pairs and there are no solutions for $\ab\in\Pinar$ with $|\bar\pi_B\ab|\geq 2$, then the problem can be reduced to a single \RGCTUF problem with the same group $G$ and at the same depth $d$, and strictly fewer variables.
\end{enumerate}
\end{theorem}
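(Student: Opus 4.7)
The strategy is to leverage Proposition~\ref{prop:averaging} twice for part~(i) and to extract a rigid affine structure of the $B$-pattern for part~(ii).

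For part~(i), the core step is a one-step propagation lemma: if $|\pi_B(\ab)|\geq 2$ and $\ab+2v\in\Pinar$ for some $v\in\mathcal{D}$, then $|\pi_B(\ab+v)|\geq 2$. To see this, pick two $B$-solutions $y_1,y_2$ at $\ab$ with $\gamma_B^\top y_1\neq\gamma_B^\top y_2$ and any $B$-solution $z$ at $\ab+2v$. Applying averaging to each of the pairs $(y_1,z)$ and $(y_2,z)$ yields, at $\ab+v$, two pairs of solutions whose group-value sums equal $\gamma_B^\top y_i + \gamma_B^\top z$ respectively; as these sums are distinct in $G$, the four resulting group values cannot all coincide, giving $|\pi_B(\ab+v)|\geq 2$. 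The first sentence then follows by iterating this propagation along $\mathcal{D}$-paths within $\Pinar$, starting from the witness pair and noting that every $\abprime\in\Pinar$ is at most one $\mathcal{D}$-step away from some pair reachable by such a chain. The refinement under the presence of an interior pair is obtained by routing propagation paths through interior pairs so that the required $2v$-translate stays inside $\Pinar$ at every step, which is precisely what interiority guarantees.

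For part~(ii), consider first the subcase $|\pi_B(\ab)|=1$ for all $\ab \in \Pinar$, and let $\phi_B\colon \Pinar\to G$ denote the unique-element map. Averaging forces $\phi_B(\ab) + \phi_B(\ab+2v) = 2\phi_B(\ab+v)$ whenever the three points lie in $\Pinar$, which rewrites as $\phi_B(\ab+v)-\phi_B(\ab) = \phi_B(\ab+2v)-\phi_B(\ab+v)$. Thus first-order differences of $\phi_B$ along each $v\in\mathcal{D}$ are constant in $\ab$, and since $\mathcal{D}$ spans $\mathbb{Z}^2$, this pins down $\phi_B$ as a group-affine function $\phi_B(\ab)=g_0+\alpha g_1+\beta g_2$ for some $g_0,g_1,g_2\in G$. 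Substituting $\beta=h^\top x_A$ into the target constraint $\gamma_A^\top x_A+\phi_B(\ab)\in R$ absorbs the $\beta g_2$ term into a modified coefficient vector $\gamma_A'=\gamma_A+g_2 h$, while $\alpha g_1+g_0$ can be absorbed via a shift on the target side (exploiting that $\alpha=f^\top x_B$ only affects variables of the $B$-part). The result is an \RGCTUF problem on $x_A$ alone with the same group $G$ and a target $R'$ of the same cardinality (hence same depth $d$), and with strictly fewer variables since $n_B\geq 2$. In the second subcase, where $\Pinar$ consists only of vertex pairs, the shape of $\Pinar$ forces $|\Pinar|$ to be bounded by an absolute constant, and the hypothesis eliminates any $\ab$ with $|\bar\pi_B(\ab)|\geq 2$; the remaining pairs again satisfy $|\pi_B(\ab)|=1$, and the affine-structure argument applied to this constant-size scenario collapses the finitely many candidate $A$-problems into a single \RGCTUF instance on $x_A$.

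The main obstacle I anticipate lies in the subcase-1 argument of part~(ii): the averaging identity only constrains $2\phi_B(\ab+v)$, so in the presence of $2$-torsion in $G$ one must argue carefully—by working with differences rather than halvings, and by invoking enough lattice points of $\Pinar$ to cover all of $\mathcal{D}$—that an affine form is in fact well-defined. An auxiliary difficulty is ensuring that the reduction does not inflate $R'$ into a union of cosets, which could enlarge the effective depth; here we rely on the standing assumption from \cref{sec:cosetReduction} that $R$ is not a union of cosets to keep the reduced problem within the same hierarchy level.
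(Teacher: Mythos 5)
Your overall strategy matches the paper's: use averaging to spread the property $|\pi_B|\geq 2$ across the pattern shape for part~(i), and extract an affine (``linear'') structure of singleton $B$-patterns to eliminate variables for part~(ii). However, part~(i) has a genuine gap. Your only tool is the special case recorded in \cref{prop:averaging}, so your one-step propagation from $\ab$ to $\ab+v$ needs $\ab+2v\in\Pinar$. There are admissible pattern shapes where this fails for \emph{every} $v\in\mathcal{D}$ at the witness pair, so no chain can start, while some $\abprime$ is still not $\mathcal{D}$-adjacent to the witness. Concretely, take $\Pinar=\{0,1\}^2$ (attainable with $\ell_1=\ell_2=0$, $u_1=u_2=1$, $\ell_0=0$, $u_0=2$, valid once $d\geq2$), suppose the witness is $(0,0)$, and let $\abprime=(1,1)$: no $2v$-translate of $(0,0)$ lies in $\Pinar$, and $(1,1)-(0,0)\notin\mathcal{D}\cup\{0\}$, so your iteration establishes nothing for this $\abprime$. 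The paper's argument instead invokes the \emph{generalized} averaging statement of~\cite[Lemma~5.3]{nagele_2022_congruency}: for any two distinct pairs $(\alpha_1,\beta_1),(\alpha_2,\beta_2)\in\Pinar$ there are pairs $(\alpha_3,\beta_3),(\alpha_4,\beta_4)$ --- essentially the rounded midpoints --- with $x_1+x_2=x_3+x_4$; this always makes progress toward $\abprime$ and forces $|\pi_B|\geq2$ at one of the midpoints (in the example, at $(1,0)$ or $(0,1)$, both neighbors of $(1,1)$). The same issue undercuts your treatment of the interior-pair refinement, where ``routing through interior pairs'' is asserted but not derived; the paper again obtains the extra condition $\abprime+2v\in\Pinar$ from the midpoint form of the generalized lemma.

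Part~(ii) is essentially the paper's route: the first-difference argument (correctly avoiding division by $2$ in the presence of $2$-torsion) recovers the linear-pattern statement, and the vertex-pair case is handled by restricting to linear singleton subpatterns. Two smaller corrections: the reduction cannot yield a problem ``on $x_A$ alone'' by absorbing $\alpha g_1+g_0$ into the target set, because $\alpha=f^\top x_B$ is not fixed --- it both shifts the group value and enters the right-hand side $b_A-\alpha e$ --- so the reduced instance must retain a variable playing the role of $\alpha$ (whence the bound of $n-p+1$ variables in \cref{thm:summaryDecomp}); this still gives strictly fewer variables since $n_B\geq2$. Finally, your concern that the new target set might become a union of cosets is moot: the depth depends only on $|R'|=|R|$, and a coset structure would merely enable a further reduction via \cref{lem:cosetReduction} rather than invalidate this one.
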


The two statements in the above theorem serve a complimentary purpose: While \cref{thmitem:linear_pattern} allows for direct progress (by reducing the number of variables), particularly in the case where no $\ab\in\Pinar$ satisfies $|\pi_B\ab|\geq2$, \cref{thmitem:next_to_two} shows that whenever such $\ab$ are present, then they are, in a certain sense, well spread over the pattern.
We will exploit the latter in combination with \cref{lem:propagating_hidden_solutions}.

To formally analyze patterns, based on \cref{obs:immediateConclusion}, we may assume that we face an \RGCTUF problem for which $\Pinar$ contains at least one $\ab$ such that $|\pi_A \ab|\geq d+1$ and $|\pi_B \ab|=1$.
Starting from there, we distinguish four different types of pattern structure as follows:
\begin{enumerate*}[label=(\Roman*), ref={type~\Roman*}]
\item\label{item:type1} $|\pi_B\ab|=1$ for all $\ab\in\Pinar$, or this is not the case and
\item\label{item:type2} $\Pinar$ has an interior pair, or
\item\label{item:type3} $\Pinar$ has no interior but border pairs, or
\item\label{item:type4} $\Pinar$ has only vertex pairs.
\end{enumerate*}
The remainder of this section is devoted to presenting how to achieve progress in each of these four cases.

\subsubsection*{Pattern structure of \ref{item:type1}}
Pattern structure of \ref{item:type1} is covered by \cref{thm:blackbox}~\ref{thmitem:linear_pattern}, which allows to reduce the problem to a new \GCTUF problem with same group $G$ and same depth $d$, and at least one variable less, thus allowing to make progress in that respect.

\subsubsection*{Pattern structure of \ref{item:type2}}
For pattern structure of \ref{item:type2}, we argue that if the \RGCTUF{} problem is feasible, then $\left(\bar{\pi}_A + \bar{\pi}_B\right)\cap R \neq \emptyset$.
More precisely, we show that there must exist $\ab \in \Pinar$ with  $|\bar{\pi}_A \ab| = d$ and $|\bar{\pi}_B \ab| \geq 2$.
This then implies the desired result by \cref{lemma:CD-replacement}.
Concretely, assume that there exist $(\alpha', \beta')\in\Pinar$ containing a hidden solution.
Then, since $\Pinar$ contains an interior pair, and there exists $\ab \in \Pinar$ with $|\pi_B \ab| \geq 2$,
by \cref{thm:blackbox}~\ref{thmitem:next_to_two} there exists
$v\in\mathcal{D}$ such that $(\alpha', \beta') + 2v \in \Pinar$ and $|\pi_B((\alpha', \beta') + v)|\geq 2$.
As \cref{lem:propagating_hidden_solutions} implies that $|\bar\pi_A((\alpha', \beta') + v)| = d$, it follows by \cref{lemma:CD-replacement} that
$\left(\bar{\pi}_A ((\alpha', \beta')+v)  + \bar{\pi}_B ((\alpha', \beta')+v) \right)\cap R \neq \emptyset$; thus we can find a solution at $\abprime+v$.

\subsubsection*{Pattern structure of \ref{item:type3}}
In this case, we show that if $\bar\pi_A + \bar\pi_B$ does not hit the target set $R$, i.e., we fail to find a solution by combining solutions of the $A$- and $B$-problem that we computed recursively, then we can reduce to a smaller pattern shape $\Pi'$, and recurse.
Through such a reduction, we will after constantly many steps reach pattern structures of \ref{item:type1} or \ref{item:type4}, and therefore achieve progress through the techniques presented for the corresponding type.

To start with, we show the following structural auxiliary result.
We remark that this result can be seen as an implication of~\cite[Lemma~5.12]{nagele_2022_congruency}, but we provide a more direct proof here.

\begin{lemma}\label{claim:4vertexSquares}
Assume that $\Pinar$ does not contain interior pairs. Then $\Pinar$ contains at most four vertex pairs.
\end{lemma}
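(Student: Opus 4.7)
The plan is to give a concrete geometric description of vertex pairs and then case-split on the widths of $\Pinar$.

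First, I would show that $\ab \in \Pinar$ is a vertex pair if and only if at least two of the three statements $Z_a\colon \alpha \in \{\ell_1, u_1\}$, $Z_b\colon \beta \in \{\ell_2, u_2\}$, $Z_c\colon \alpha + \beta \in \{\ell_0, u_0\}$ hold. Indeed, each direction $v \in \mathcal{D}$ changes exactly two of the three quantities $\alpha, \beta, \alpha + \beta$ and leaves the third fixed, so $\ab \pm v$ both lie in $\Pinar$ iff neither of those two quantities currently sits at an extreme value. Letting $|F| \in \{0,1,2,3\}$ denote how many of $Z_a, Z_b, Z_c$ \emph{fail} at $\ab$, the number of antiparallel directions $v$ with $\ab \pm v \in \Pinar$ is then exactly $\binom{|F|}{2}$. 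Thus interior corresponds to $|F| = 3$, border to $|F| = 2$, and vertex to $|F| \in \{0, 1\}$, i.e., to at least two of $Z_a, Z_b, Z_c$ holding. In particular, vertex pairs are precisely the lattice vertices of $\operatorname{conv}(\Pinar)$.

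With this characterization in hand, I would split on the widths $a = u_1 - \ell_1$, $b = u_2 - \ell_2$, $c = u_0 - \ell_0$. If one of them is at most $1$, say $a \leq 1$, then $\Pinar$ lies on at most two parallel lattice lines and only the two endpoints of each traced-out lattice segment can satisfy the vertex criterion, giving at most four vertex pairs. Otherwise $a, b, c \geq 2$, and a lattice point is interior iff it lies in the shrunken polytope $[\ell_1 + 1, u_1 - 1] \times [\ell_2 + 1, u_2 - 1] \cap \{\ell_0 + 1 \leq \alpha + \beta \leq u_0 - 1\}$, which has integer bounds and hence contains a lattice point whenever it is continuously non-empty. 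Absence of an interior pair therefore forces this polytope to be empty, which (given $a, b, c \geq 2$) reduces to $u_0 \leq \ell_1 + \ell_2 + 2$ or $\ell_0 \geq u_1 + u_2 - 2$. The central reflection $\ab \mapsto (\ell_1 + u_1 - \alpha, \ell_2 + u_2 - \beta)$ swaps these two options and preserves the vertex-pair property, so I may assume the former.

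In the case $u_0 \leq \ell_1 + \ell_2 + 2$ with $a, b \geq 2$, the set $\Pinar$ is contained in the triangle $\{\alpha \geq \ell_1,\ \beta \geq \ell_2,\ \alpha + \beta \leq u_0\}$, whose legs have length at most $2$ and which therefore contains at most six lattice points. Since $a, b \geq 2$, the upper bounds $\alpha \leq u_1$ and $\beta \leq u_2$ are slack on this triangle, so only the lower diagonal $\alpha + \beta \geq \ell_0$ can still cut it. A short enumeration of the admissible placements of $\ell_0$ (using $c \geq 2$ to rule out the degenerate configurations where $\Pinar$ collapses onto a single anti-diagonal) shows that $\operatorname{conv}(\Pinar)$ is in each remaining case either a triangle or a quadrilateral, hence has at most four lattice vertices.

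The main obstacle I expect is the bookkeeping in the thick case: even though the cut-off triangle is tiny, one must inspect each admissible position of the lower diagonal $\alpha + \beta \geq \ell_0$ and confirm that each produces a convex hull with at most four extreme points. This amounts to a short but somewhat tedious case distinction, and is the place where the argument is least elegant.
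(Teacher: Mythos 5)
Your proof is correct, and it shares the paper's overall skeleton---split on whether one of the three widths $u_1-\ell_1$, $u_2-\ell_2$, $u_0-\ell_0$ is at most $1$ (where the thin case is handled identically), then show that in the thick case the shape degenerates to a tiny triangle---but the thick case is executed by a genuinely different mechanism. The paper picks a single hand-crafted candidate pair (either $(1,\ell_0)$ or $(\ell_0-u_2+2,u_2-1)$ after normalizing $\ell_1=\ell_2=0$), argues it would be interior unless certain bounds are tight, and from that tightness explicitly lists the three remaining vertex pairs. You instead first prove the clean characterization that a pair is interior, border, or vertex according to whether $3$, $2$, or at most $1$ of the three constraint types are strict on both sides (via the observation that each antiparallel direction pair in $\mathcal{D}$ corresponds to a $2$-subset of $\{\alpha,\beta,\alpha+\beta\}$), so that absence of interior pairs is equivalent to emptiness of the shrunken polytope; integrality of that polytope (it is a TU system with integer right-hand sides, which is the real reason ``integer bounds'' suffice---worth saying explicitly) then yields the dichotomy $u_0\leq\ell_1+\ell_2+2$ or $\ell_0\geq u_1+u_2-2$, and the central reflection reduces to one case. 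Your approach is more systematic and avoids the ad hoc candidate point, at the price of two steps you leave slightly underspecified: (i) you only need, and should only claim, the inclusion that every vertex pair is a vertex of $\operatorname{conv}(\Pinar)$, which follows because two tight constraints of distinct types among $\alpha,\beta,\alpha+\beta$ are linearly independent and hence make the pair the unique maximizer of a linear functional over the (integral) polytope; and (ii) the final ``enumeration of placements of $\ell_0$'' in fact collapses entirely, since $u_0-\ell_0\geq 2$ together with $u_0\leq\ell_1+\ell_2+2$ forces $\ell_0\leq\ell_1+\ell_2$, so the lower diagonal is redundant on the triangle $\{\alpha\geq\ell_1,\ \beta\geq\ell_2,\ \alpha+\beta\leq u_0\}$ and $\operatorname{conv}(\Pinar)$ is exactly that triangle with at most three vertices, matching the paper's count. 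With those two points filled in, the argument is complete.
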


\begin{proof}
Through shifting, we may assume that $\ell_1 = \ell_2 = 0$, and thus also $\ell_3\geq 0$ (if $\ell_3<0$, we may set it to zero without changing $\Pinar$).
Similarly, we may assume $u_3\leq u_1+u_2$.
Note that if $u_1 - \ell_1 \leq 1$ there are at most four vertex pairs:
At most two pairs $\ab$ may satisfy $\alpha = \ell_1$, and at most two further pairs may have $\alpha = u_1$.
Similarly, we are done if $u_2 - \ell_2 \leq 1$, or $u_3 - \ell_3\leq 1$.
Thus, we assume $u_i - \ell_i\geq 2$ for $i\in \{1,2,3\}$.
Consider the pair
\[
\ab = \begin{cases}
(1, \ell_3) & \text{ if } \ell_3 < u_2\\
(\ell_3-u_2+2, u_2-1) & \text{ if } \ell_3 \geq u_2
\end{cases}\enspace.
\]
By definition, $0<\alpha$, $\beta<u_2$, and $\ell_3 < \ell_3+1 = \alpha+\beta < u_3$.
Because $\ab$ can not be an interior pair, we must either have $\alpha \geq u_1$, or $\beta \leq 0$.
Because $u_1\geq 2+\ell_1=2$, we can only have $\alpha \geq u_1$ in the case $\ell_3\geq u_2$, which implies $\ell_3 + 2 \geq u_1+u_2$.
As also, $\ell_3 + 2 \leq u_3 \leq u_1+u_2$, these inequalities must be tight, implying that there are precisely the three vertex pairs $(u_1, u_2), (u_1-2,u_2)$, and $(u_1, u_2-2)$.
Similarly, because $u_2-1\geq 1+\ell_2\geq 1$, we can only have $\beta\leq 0$ in the case $\ell_3<u_2$, which implies $\ell_3=0$.
Consequently, we must also have $u_2=2$; otherwise $(1,1)$ is an interior pair.
This implies that there are precisely the three vertex pairs $(0,0), (2,0), (0,2)$, and completes the proof.
\end{proof}

With the above at hand, we can achieve the desired progress for pattern structure of \ref{item:type3}.

\begin{lemma}\label{lemma:easy_border}
Consider an \RGCTUF{} instance of the form given in \eqref{eq:structured-problem}, and let the corresponding pattern shape $\Pinar$ be of \ref{item:type3}. Then either $\left(\bar{\pi}_A\ab + \bar{\pi}_B\ab\right)\cap R \neq \emptyset$ for some $\ab\in\Pinar$, or we can in strongly polynomial time find a pattern shape $\Pi' \subsetneq \Pinar$ such that the \RGCTUF{} instance is feasible on $\Pinar$ if and only if it is feasible on $\Pi'$.
\end{lemma}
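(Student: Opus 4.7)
\medskip
\noindent\textbf{Proof sketch.}
The plan is a case analysis driven by the vertex pairs of $\Pinar$, of which there are at most four by \cref{claim:4vertexSquares}. Since $\Pinar$ has no interior pair but contains border pairs, an inspection of the six-inequality description forces $\Pinar$ to be a short line segment (of length at most $d+1\leq 4$) along some direction $v\in\mathcal{D}$, possibly widened to two parallel such lines. In particular, every border pair of $\Pinar$ is adjacent to a vertex pair in its edge direction, a structural fact we exploit repeatedly below.

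For every vertex pair $\abbar$, the recursive procedure has already computed $\bar\pi_A\abbar$ and $\bar\pi_B\abbar$, so we invoke \cref{obs:immediateConclusion}. If at some $\abbar$ we have $|\bar\pi_A\abbar|\leq d-1$ or $|\bar\pi_B\abbar|\geq 2$, the observation determines feasibility at $\abbar$: either a solution is found and we output it, or we obtain a certificate that no solution exists at $\abbar$. In the latter case we construct $\Pi'\subsetneq\Pinar$ by tightening one of the six bounds of $\Pinar$ that is tight at $\abbar$ (e.g.\ replacing $\ell_1$ by $\ell_1+1$ if $\bar\alpha=\ell_1$). Any additional pairs removed by this tightening lie on a single short line adjacent to $\abbar$, and a case check over the admissible thin shapes shows that each such pair either again satisfies the premise of \cref{obs:immediateConclusion} with no solution, or its infeasibility is propagated from $\abbar$ via \cref{prop:averaging}; thus $\Pinar$-feasibility is equivalent to $\Pi'$-feasibility.

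The hard case is when \emph{every} vertex pair $\abbar$ has $|\bar\pi_A\abbar|=d$ and $|\bar\pi_B\abbar|=1$. A counting argument shows that a feasible solution actually exists at each such $\abbar$ (since $|\pi_A\abbar|\geq d+1$ and $|R|=|G|-d$, so for the unique $r_B\in\pi_B\abbar$ the translate $R-r_B$ meets $\pi_A\abbar$), but our oracles do not produce one directly. To obtain a witness, we exploit that $\Pinar$ is of type~III: some $\abprime\in\Pinar$ has $|\pi_B\abprime|\geq 2$, and by our hypothesis $\abprime$ must be a border pair with an associated direction $v\in\mathcal{D}$ for which $\abprime\pm v\in\Pinar$. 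By the one-dimensional structure of $\Pinar$ noted above, at least one of $\abprime+v$ or $\abprime-v$ is a vertex pair, and hence satisfies $|\pi_A|\geq d+1$. Applying \cref{lem:propagating_hidden_solutions} at that vertex pair in the direction pointing back to $\abprime$ (using the other of $\abprime+v$, $\abprime-v$ as the required second-neighbor witness) yields $|\bar\pi_A\abprime|=d$. Combined with $|\bar\pi_B\abprime|\geq 2$, \cref{lemma:CD-replacement} produces $\bigl(\bar\pi_A\abprime+\bar\pi_B\abprime\bigr)\cap R\neq\emptyset$, and we output the corresponding solution.

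The main obstacle is making rigorous the two structural claims underlying the argument: that $\Pinar$ of type~III indeed has the simple one-dimensional form described (so that every border pair is adjacent to a vertex pair), and that the bound tightening in the first case only removes pairs whose feasibility was, or can be, certified away. Both reduce to finite case analyses over the admissible thin pattern shapes, enabled by the restriction $d\leq 3$ via \cref{lem:propagating_hidden_solutions} and \cref{lemma:CD-replacement}. Once these shapes are enumerated, correctness follows by combining \cref{obs:immediateConclusion}, \cref{prop:averaging}, \cref{lem:propagating_hidden_solutions}, and \cref{lemma:CD-replacement}; strongly polynomial runtime is immediate since we inspect at most four vertex pairs, each check using already-computed values, and the shape tightening is explicit.
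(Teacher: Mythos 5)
Your proof has genuine gaps, and the most serious one is in the ``hard case.'' There you infer from $|\bar{\pi}_A\abbar|=d$ that $|\pi_A\abbar|\geq d+1$, but $\bar{\pi}_A$ is computed so that $|\bar{\pi}_A\abbar|=\min\{d,|\pi_A\abbar|\}$, so this only yields $|\pi_A\abbar|\geq d$. Consequently your counting argument (``a feasible solution actually exists at each such $\abbar$'') is unfounded --- it is entirely possible that $|\pi_A|=d$ and $|\pi_B|=1$ everywhere and the instance is infeasible, which is exactly why the lemma has a second alternative --- and you are not entitled to invoke \cref{lem:propagating_hidden_solutions}, whose hypothesis is $|\pi_A\ab|\geq d+1$. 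The correct logic is conditional: \emph{if} some pair carries a hidden solution (which by definition forces $|\pi_A|\geq d+1$ there), \emph{then} propagation plus \cref{lemma:CD-replacement} produces a detectable solution at a neighbour; the contrapositive is what licenses shrinking the shape. A second gap is the bound-tightening step: certifying that no solution exists at a single vertex pair $\abbar$ does not justify deleting the entire tight line through $\abbar$, and ``infeasibility is propagated from $\abbar$ via \cref{prop:averaging}'' is not a valid inference --- averaging turns solutions at $\ab$ and $\ab+2v$ into solutions at $\ab+v$; it never transfers \emph{infeasibility} to neighbours. Finally, your opening structural claim that a \ref{item:type3} shape is a line segment possibly widened to two parallel lines is false: the triangle $\{(\alpha,\beta)\in\mathbb{Z}_{\geq0}^2:\alpha+\beta\leq 2\}$ has no interior pairs, three vertex pairs and three border pairs (this is precisely one of the configurations appearing in the proof of \cref{claim:4vertexSquares}).

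For contrast, the paper's proof splits on whether some \emph{border} pair $\ab$ has $|\pi_B\ab|\geq 2$. If so, it shows that a hidden solution at any $\abprime$ on the same tight constraint would, via repeated averaging (giving $|\bar{\pi}_B(\abprime+v)|\geq2$), \cref{lem:propagating_hidden_solutions}, and \cref{lemma:CD-replacement}, yield a detectable solution at $\abprime+v$; hence if nothing is detected, that whole tight line can be deleted. If no border pair has $|\pi_B|\geq2$, it combines \cref{claim:4vertexSquares} with \cref{thm:blackbox}~\ref{thmitem:next_to_two} to rule out one configuration outright and, in the remaining one, to show all pairs with $|\bar{\pi}_B|=1$ (the only candidates for hidden solutions) lie on a single tight constraint, which then serves as $\Pi'$. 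I recommend reworking your argument around this conditional ``hidden solution $\Rightarrow$ detectable solution nearby'' scheme rather than trying to certify feasibility or infeasibility pair by pair.
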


\begin{proof}
First assume there is a border pair $\ab$ with $|\pi_B\ab| \geq 2$.
Consider the unique constraint in the inequality description of $\Pinar$ that is tight at $\ab$, and assume $\abprime\in\Pinar\setminus\{\ab\}$ is another pair satisfying the same constraint with equality, and such that there is a hidden solution at $\abprime$.
Let $v \in \mathcal{D}$ be the direction pointing from $(\alpha',\beta')$ to $(\alpha,\beta)$.
By applying \cref{prop:averaging} repeatedly, we get $|\bar{\pi}_B((\alpha',\beta') + v)| \geq 2$. Moreover, as $(\alpha,\beta)$ is a border pair, we have that $(\alpha,\beta) + v$ lies in $\Pinar$, and hence so does $(\alpha',\beta') + 2v$.
It then follows from \cref{lem:propagating_hidden_solutions} that $|\bar{\pi}_A( (\alpha', \beta') +v)|= d$, and hence $\left(\bar{\pi}_A ((\alpha', \beta')+v) + \bar{\pi}_B ((\alpha', \beta')+v) \right)\cap R \neq \emptyset$ by \cref{lemma:CD-replacement}.
Thus, if we do not find a solution right away by combining elements from $\bar\pi_A\ab$ and $\bar\pi_B\ab$, then there cannot be a hidden solution anywhere on the tight constraint.
Thus, strengthening the tight constraint by one unit leads to the desired pattern shape $\Pi'\subsetneq \Pinar$.

It is left to study the case where no border pair $\ab\in\Pinar$ satisfies $|\pi_B\ab| \geq 2$.
Since $\Pinar$ has no interior pairs, there must exist a vertex pair $(\alpha', \beta') \in \Pinar$ with $|\bar{\pi}_B(\alpha', \beta')|\geq 2$ (otherwise $\Pinar$ would be of type I).
We now distinguish two cases:

\emph{Case 1: There are at least three vertex pairs with $|\bar{\pi}_B \ab| =1$.}
Combining the above discussion with \cref{claim:4vertexSquares}, it follows that there is exactly one $(\alpha', \beta') \in \Pinar$ with $|\pi_B(\alpha', \beta')|  \geq 2$.
Then by \cref{thm:blackbox}~\ref{thmitem:next_to_two}, for each of the other at least four pairs $\ab \in\Pinar$ there exists $v\in\mathcal{D}$ such that $\ab + v = \abprime$, since $\ab + v$ must satisfy $|\pi_B(\ab + v)| \geq 2$ and $\abprime$ is the only such pair.
But $\abprime$ is a vertex pair, and hence it can have at most three such adjacent pairs, a contradiction (also see \cref{fig:borderSquares1}).

\emph{Case 2: There are at most two vertex pairs with $|\bar{\pi}_B\ab| =1$.}
Consider a border pair $(\alpha', \beta')$ and $v\in\mathcal{D}$ such that $(\alpha', \beta')\pm v \in \Pinar$.
The tight constraint of $(\alpha', \beta')$ contains two vertex pairs.
Both of them must satisfy $|\bar{\pi}_B\ab|=1$, for otherwise, \cref{prop:averaging} would imply $|\bar{\pi}_B(\alpha', \beta')| \geq 2$, contradicting our assumption.
Additionally, since we assumed to have at most two vertex pairs with $|\bar{\pi}_B\ab| =1$, it follows that any vertex pair $\ab$ outside this tight constraint must satisfy $|\bar{\pi}_B\ab| \geq 2$.
But then, from an argument analogous to the one in the first case above, it follows that there can not be a border pair $\ab \in \Pinar$ that does not satisfy the tight constraint with equality, as otherwise we would get $|\bar{\pi}_B\ab| \geq 2$.
Consequently, all pairs $\ab \in \Pinar$ with $|\bar{\pi}_B \ab| = 1$ (those are the only ones where there might be a hidden solution) satisfy the constraint that is tight at $\abprime$ with equality.
Thus, we can let $\Pi'$ be the pattern shape defined by all $\ab$ that satisfy the tight constraint with equality (see \cref{fig:borderSquares2} for an example).
\end{proof}

\begin{figure}[ht]
\begin{subfigure}[t]{0.49\textwidth}
\begin{center}
\begin{tikzpicture}[scale=0.7]
\begin{scope}[every path/.append style={very thick}]
\draw (0,0) -- (3,0);
\draw (0,1) -- (3,1);
\draw (0,2) -- (2,2);
\draw (0,0) -- (0,2);
\draw (1,0) -- (1,2);
\draw (2,0) -- (2,2);
\draw (3,0) -- (3,1);
\draw[red] (0,0) -- (0,1) -- (1,1) -- (1,0) --cycle;

\node at (0.5, 0.5) {$1$};
\node at (1.5, 0.5) {$1$};
\node at (2.5, 0.5) {$1$};
\node at (0.5, 1.5) {$1$};
\node at (1.5, 1.5) {$2$};
\end{scope}

\begin{scope}[every path/.append style={gray}]
	\draw[->] (-0.5, -0.3) -- node[pos=1, below] {$\alpha$} (3.5,-0.3);
	\draw[->] (-0.3, -0.5) -- node[pos=1, left] {$\beta$} (-0.3,2.4);
	\foreach \tick/\num in {0.55/0,1.5/1,2.5/2} {
		\draw (\tick, -0.2) -- (\tick, -0.4) node[below, font=\scriptsize]{$\num$};
	}
	\draw (-0.2, 0.5) -- (-0.4, 0.5) node[left, font=\scriptsize]{$0$};
	\draw (-0.2, 1.5) -- (-0.4, 1.5) node[left, font=\scriptsize]{$1$};
\end{scope}

\end{tikzpicture}
\end{center}
\caption{The pair marked in red is not adjacent to any $\ab$ with $\pi_B\ab \geq 2$, hence this pattern cannot occur.}
\label{fig:borderSquares1}
\end{subfigure}
\begin{subfigure}[t]{0.49\textwidth}
\begin{center}
\begin{tikzpicture}[scale=0.7]
\begin{scope}[every path/.append style={very thick}]
\draw (0,0) -- (3,0);
\draw (0,1) -- (3,1);
\draw (0,2) -- (2,2);
\draw (0,0) -- (0,2);
\draw (1,0) -- (1,2);
\draw (2,0) -- (2,2);
\draw (3,0) -- (3,1);
\draw[red] (0,0) -- (0,1) -- (3,1) -- (3,0) --cycle;

\node at (0.5, 0.5) {$1$};
\node at (1.5, 0.5) {$1$};
\node at (2.5, 0.5) {$1$};
\node at (0.5, 1.5) {$2$};
\node at (1.5, 1.5) {$2$};
\end{scope}

\begin{scope}[every path/.append style={gray}]
	\draw[->] (-0.5, -0.3) -- node[pos=1, below] {$\alpha$} (3.5,-0.3);
	\draw[->] (-0.3, -0.5) -- node[pos=1, left] {$\beta$} (-0.3,2.4);
	\foreach \tick/\num in {0.55/0,1.5/1,2.5/2} {
		\draw (\tick, -0.2) -- (\tick, -0.4) node[below, font=\scriptsize]{$\num$};
	}
	\draw (-0.2, 0.5) -- (-0.4, 0.5) node[left, font=\scriptsize]{$0$};
	\draw (-0.2, 1.5) -- (-0.4, 1.5) node[left, font=\scriptsize]{$1$};
\end{scope}
\end{tikzpicture}
\end{center}
\caption{Hidden solutions may only occur for the pairs marked in red.}
\label{fig:borderSquares2}
\end{subfigure}
\caption{The two cases arising in the proof of \cref{lemma:easy_border}. Every square corresponds to a pair $\ab\in\Pinar$, and the numbers inside indicate the value of $|\bar{\pi}_B\ab|$.}
\label{fig:borderSquares}
\end{figure}

\subsubsection*{Pattern structure of \ref{item:type4}}

For \ref{item:type4} pattern structure, we first observe that, by \cref{obs:immediateConclusion}, if there are any solutions for pairs $\ab\in\Pinar$ with $|\pi_B\ab|\geq 2$, we can also find one efficiently by combining solutions computed for the $A$- and $B$-subproblem when determining $\bar\pi_A$ and $\bar\pi_B$.
The other case, i.e., when no solutions exist for such $\ab$, is covered by \cref{thm:blackbox}~\ref{thmitem:linear_pattern}. Again, that statement allows to reduce the problem to a new \GCTUF problem with the same group $G$ and at the same depth $d$, but at least one variable less. Therefore, it allows us to make progress with respect to the number of variables.

\subsection*{Summary}

The above discussion can be summarized in the following theorem.
Recall that the depth of an \RGCTUF problem is defined as $d\coloneqq|G|-|R|$.

\begin{theorem}\label{thm:summaryDecomp}
Let $G$ be a finite abelian group.
Consider an \RGCTUF{} problem $\mathcal{P}$ with $n$ variables, group $G$, set of target residues $R$, depth $d\leq 3$, and a constraint matrix $T$ that is a $3$-sum of two matrices with $n_A$ and $n_B$ many columns, respectively, such that $n=n_A+n_B$ and $n_A,n_B\geq 2$. %
Let $p \coloneqq \min \{n_A, n_B\}$.
Assume furthermore that there is no non-trivial subgroup $H$ of $G$ with $R = R + H$.
Then, after solving at most $(d+1)^3$ many \RGCTUF{} problems with $p$ variables, group $G$ and depth at most $d$, as well as at most $d(d+1)^2$ \RGCTUF{} problems with $n-p$ variables, group $G$ and depth at most $d-1$, one can either
\begin{itemize}
\item find a solution of $\mathcal{P}$ in strongly polynomial time, or
\item determine a single \RGCTUF{} problem $\mathcal{P}'$ with at most $n-p+1$ variables, group $G$ and depth $d$, such that $\mathcal{P}$ is feasible if and only if $\mathcal{P}'$ is feasible.
Additionally, a solution of $\mathcal{P}'$ can be transformed into a solution of $\mathcal{P}$ in strongly polynomial time.
\end{itemize}
Furthermore, all involved \RGCTUF{} problems can be constructed in strongly polynomial time.
\end{theorem}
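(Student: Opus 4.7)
The plan is to execute the case analysis developed throughout \cref{sec:patterns} and carefully count oracle calls. Without loss of generality, I assume $n_B = p \leq n_A$, so that the $B$-problem has $p$ variables and the $A$-problem has $n-p$ variables. First, I would apply \cref{lemma:boundedPatternShape} to compute the pattern shape $\Pinar$ in strongly polynomial time; since $\alpha$ and $\beta$ each range over at most $d+1$ integer values, $|\Pinar|\leq(d+1)^2$.

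Next, for every $\ab\in\Pinar$, I would compute $\bar{\pi}_B\ab$ through the iterative procedure from \cref{sec:highlevel}: starting from $\bar{\pi}_B\ab=\emptyset$, repeatedly solve the $B$-subproblem with target set $R_B=G\setminus\bar{\pi}_B\ab$ and adjoin the returned element, stopping once $|\bar{\pi}_B\ab|=d+1$ or the subproblem becomes infeasible. Since $|R_B|\geq|G|-d$ throughout, each of the at most $(d+1)\cdot|\Pinar|\leq(d+1)^3$ calls is an \RGCTUF{} problem with $p$ variables, group $G$, and depth at most $d$. Analogously, but stopping at $|\bar{\pi}_A\ab|=d$ and using $|R_A|\geq|R|+1$ throughout, I would compute $\bar{\pi}_A\ab$ via at most $d(d+1)^2$ calls to the $A$-subproblem, each with $n-p$ variables and depth at most $d-1$. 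This matches the oracle budget asserted in the theorem.

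With all patterns in hand, I would then classify $\Pinar$ into one of the four types \ref{item:type1}--\ref{item:type4} of \cref{sec:patterns} and apply the corresponding argument. Types \ref{item:type1} and \ref{item:type4} (the latter only when no $\ab$ with $|\bar{\pi}_B\ab|\geq 2$ admits a solution, as governed by \cref{obs:immediateConclusion}) are handled by \cref{thm:blackbox}~\ref{thmitem:linear_pattern}, which outputs a single \RGCTUF{} problem $\mathcal{P}'$ with group $G$ and depth $d$; an inspection of the construction (which absorbs the essentially one-dimensional $B$-pattern into the $A$-problem via at most one auxiliary variable) shows that $\mathcal{P}'$ has at most $n_A+1=n-p+1$ variables, and that any solution of $\mathcal{P}'$ lifts to a solution of $\mathcal{P}$ in strongly polynomial time. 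For \ref{item:type2}, if no pair can hide a solution (i.e., no $\abprime$ has $|\pi_A\abprime|\geq d+1$) then scanning $\bar{\pi}_A+\bar{\pi}_B$ over $\Pinar$ settles feasibility directly by \cref{obs:immediateConclusion}; otherwise, combining \cref{thm:blackbox}~\ref{thmitem:next_to_two}, \cref{lem:propagating_hidden_solutions}, and \cref{lemma:CD-replacement} as outlined in \cref{sec:patterns} produces a pair $\ab+v$ with $(\bar{\pi}_A(\ab+v)+\bar{\pi}_B(\ab+v))\cap R\neq\emptyset$, yielding an explicit solution.

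The main obstacle is \ref{item:type3}, for which \cref{lemma:easy_border} only guarantees either a solution, or a strict shrinking $\Pi'\subsetneq\Pinar$. The plan is therefore to iterate: after each shrinking, reclassify the strictly smaller pattern shape, crucially reusing the already-computed $\bar{\pi}_A$ and $\bar{\pi}_B$ restricted to the new shape so that no additional oracle calls are needed. Since $|\Pinar|\leq(d+1)^2$ is constant, this loop terminates after constantly many iterations in a shape of type \ref{item:type1}, \ref{item:type2}, or \ref{item:type4}, at which point one of the branches above applies. Verifying that everything outside of the oracle calls runs in strongly polynomial time is routine, and completes the plan.
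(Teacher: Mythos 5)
Your proposal is correct and follows essentially the same route as the paper, which presents \cref{thm:summaryDecomp} precisely as a summary of the discussion in \cref{sec:highlevel,sec:patterns}: compute $\Pinar$ via \cref{lemma:boundedPatternShape}, build $\bar{\pi}_B$ and $\bar{\pi}_A$ with the stated oracle budgets, and then dispatch on the four pattern types using \cref{thm:blackbox}, \cref{lem:propagating_hidden_solutions}, \cref{lemma:CD-replacement}, and the iterated shrinking of \cref{lemma:easy_border}. Your accounting of the oracle calls and the observation that the type-III shrinking loop reuses the already-computed patterns both match the paper's intent.
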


\subsection{Proof of \cref{thm:reductionToBaseBlocksRGCTUF}}

Consider an \RGCTUF{} problem with group $G$, $n$ variables, and depth $d = |G|-|R|\leq 3$.
If $d=0$, then the problem is an unconstrained TU problem, and thus it is enough to find a vertex solution of the linear relaxation.
This can be done in strongly polynomial time using the algorithm of \textcite{tardos_1986_strongly}.
If $d>0$, we apply \cref{thm:TUdecomp} to the constraint matrix $T$.
If $T$ is covered by one of \cref{thmitem:TUdecomp_netw,thmitem:TUdecomp_const}, then it is a base block matrix itself, so a single call to the oracle suffices.
Else, $T$ is covered by \cref{thmitem:TUdecomp_pivotsum} of \cref{thm:TUdecomp}, and we may assume that $T$ decomposes into a $3$-sum of two matrices with at least two columns each (see \cref{sec:pivot} for the case where a pivot step is necessary).
In particular, note that in this case, we have $n\geq 4$.
Using \cref{lem:cosetReduction} or \cref{thm:summaryDecomp}, we now reduce the problem to one or more smaller problems, until we eventually obtain base block problems, which we solve by an oracle call.

We bound the number of oracle calls triggered by our procedure.
Let $f(n,d)$ be the smallest upper bound on the number of oracle calls when starting from an instance with $n$ variables and depth $d$.
We claim that
\[
	f(n,d) \leq \left(d+1\right)^{3d}n^{d+3\log_2 (d+1) + 2}\enspace.
\]
To prove this bound, we use induction on $n+d$.
First observe that, by the above discussion, $f(n,0)=0$ for any $n\geq 1$, and $f(n,d)=1$ for $n\leq 3$ and $d>0$ (in the latter case, we cannot attain \cref{thmitem:TUdecomp_pivotsum} of \cref{thm:TUdecomp}).
Now consider an \RGCTUF{} problem with $n$ variables and depth $d$.
If $R=R+H$ for a non-trivial subgroup $H$ of $G$, \cref{lem:cosetReduction} allows for a reduction to a single \RGCTUF problem at smaller depth.
Thus, we end up with at most $f(n,d')$ many base block problems for some $d'<d$.
In this case, the induction hypothesis implies the claimed bound on $f(n,d)$ because it is monotone in $d$.
In the other case, we apply \cref{thm:summaryDecomp}.
Thus, there is a $p \in \{2, \dots, \left\lfloor \sfrac{n}{2}\right\rfloor\}$ such that the number of base block problems we have to solve is bounded by
\begin{multline*}
	(d+1)^3 f(p,d) + d(d+1)^2 f(n-p,d-1) + f(n-p+1,d)\\
			\leq \left(d+1\right)^{3d}n^{d+3\log_2 (d+1) + 2}
			\underbrace{\left( \left(\frac{p}{n}\right)^2 	+ \frac{n-p}{n^2} + \left(\frac{n-p+1}{n}\right)^2 \right)}_{\leq 1}
			\leq \left(d+1\right)^{3d}n^{d+3\log_2 (d+1) + 2}\enspace,
\end{multline*}
proving the claimed bound on $f(n,d)$.
Here, we use that
\begin{align*}
	(d+1)^3 f(p,d) &\leq \left(d+1\right)^{3d+3}p^{d+3\log_2 (d+1) + 2}\\
	& = \underbrace{(d+1)^3 \left(\frac{p}{n}\right)^{d+3\log_2 (d+1)}}_{\leq 1}\left(d+1\right)^{3d}n^{d+3\log_2 (d+1) + 2} \left(\frac{p}{n}\right)^{2} \\
	&\leq \left(d+1\right)^{3d}n^{d+3\log_2 (d+1) + 2}\left(\frac{p}{n}\right)^2\enspace,
\end{align*}
as $\sfrac{p}{n}\leq \sfrac{1}{2}$ and hence $\left(\sfrac{p}{n}\right)^{d+3\log_2 (d+1)} \leq (d+1)^{-3}$, together with
\begin{align}
\nonumber
	d(d+1)^2 f(n-p,d-1) &\leq d(d+1)^2 d^{3d-3}(n-p)^{d-1+3\log_2 d + 2}\\
\nonumber
	&\leq \underbrace{\frac{d(d+1)^2}{(d+1)^3} \left(\frac{n-p}{n}\right)^{d+3\log_2 (d+1)}}_{\leq 1}\left(d+1\right)^{3d}n^{d+3\log_2 (d+1) + 2} \left(\frac{n-p}{n^{2}}\right)\\
\tag*{\qed}
    &\leq \left(d+1\right)^{3d}n^{d+3\log_2 (d+1) + 2} \left(\frac{n-p}{n^{2}}\right)\enspace.
\end{align}

\begingroup
\setlength{\emergencystretch}{0.5em}
\printbibliography
\endgroup

\appendix
\section{Adapted proofs of structural results}

This section is devoted to the proofs of \cref{lemma:boundedPatternShape,thm:blackbox}.
More precisely, given that these statements were proved in a congruency-constrained setting in~\cite{nagele_2022_congruency}, we mainly recall proof ideas and comment on how to adapt them to the group constraint setting that we work with.

\subsection{Proof of \cref{lemma:boundedPatternShape}}\label{sec:pf_boundedPattern}

\cref{lemma:boundedPatternShape} is the group-constrained analogue of~\cite[Lemmas 2.5 and 5.1]{nagele_2022_congruency}.
To obtain the result, \cite{nagele_2022_congruency} exploits a decomposition theorem for solutions of totally unimodular systems~\cite[Lemma 2.1]{nagele_2022_congruency} combined with the following fact for the special case of $G=\inlinequot{\mathbb{Z}}{m\mathbb{Z}}$~\cite[Lemma 2.2]{nagele_2022_congruency}, which is indeed the only property of $\inlinequot{\mathbb{Z}}{m\mathbb{Z}}$ used throughout the proof.

\begin{lemma}\label{lem:sumModM}
Let $G$ be a finite abelian group, $R\subseteq G$, and $r_1,\ldots,r_\ell\in G$ with $\sum_{i\in[\ell]} r_i \in R$.
If there is no interval $I=\{i_1,\ldots,i_2\}$ with $i_1,i_2\in[\ell]$ and $i_1<i_2$ such that $\sum_{i\in[\ell]\setminus I}r_i\in R$, then $\ell\leq|G|-|R|$.
\end{lemma}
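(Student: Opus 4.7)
My plan is to reformulate the condition in terms of partial sums and then finish with an elementary pigeonhole argument. Setting $s_j \coloneqq \sum_{i=1}^{j} r_i$ for $j \in \{0, 1, \ldots, \ell\}$, so that $s_0 = 0$ and $s_\ell \in R$, I observe that for an interval $I = \{i_1, \ldots, i_2\} \subseteq [\ell]$, one has $\sum_{i \in [\ell] \setminus I} r_i = s_\ell - s_{i_2} + s_{i_1 - 1}$. Reparameterising via $a = i_1 - 1$ and $b = i_2$, the hypothesis becomes: $s_\ell - s_b + s_a \notin R$ for all pairs $0 \leq a < b \leq \ell$ coming from an admissible interval.

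The next step is to introduce the shifted target set $T \coloneqq s_\ell - R \subseteq G$, so that $|T| = |R|$ and, crucially, $0 \in T$ because $s_\ell \in R$. The condition just derived rewrites as $s_b - s_a \notin T$ for all admissible pairs $a < b$ in $\{0, 1, \ldots, \ell\}$.

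Two elementary observations then close the argument. Specialising to $a = 0$ (which corresponds to intervals starting at $i_1 = 1$), every partial sum $s_b$ with $b \in [\ell]$ must lie in $G \setminus T$, a set of cardinality $|G| - |R|$. Moreover, the partial sums $s_0, s_1, \ldots, s_\ell$ must all be pairwise distinct: an equality $s_a = s_b$ with $a < b$ would give $s_b - s_a = 0 \in T$, contradicting the previous display. Combining the two observations, $s_1, \ldots, s_\ell$ are $\ell$ distinct elements of a set of size $|G| - |R|$, and hence $\ell \leq |G| - |R|$.

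There is no real obstacle beyond the reformulation; once the condition is rephrased as ``every pairwise directed difference of partial sums avoids $T$'', the pigeonhole step is immediate. Importantly, the argument uses only the abelian group structure of $G$ (so that $T$ is well-defined, $|T|=|R|$, and $0\in T$), which is precisely why the same proof extends the cyclic statement of~\cite[Lemma~2.2]{nagele_2022_congruency} to arbitrary finite abelian groups, as needed in the sequel.
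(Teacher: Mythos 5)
Your proof is correct and is essentially the paper's own argument: both reduce the hypothesis to a statement about the partial sums $s_0,\ldots,s_\ell$ and conclude by pigeonhole that $\ell$ distinct partial sums must fit inside a forbidden-complement of size $|G|-|R|$; your shift to $T=s_\ell-R$ with prefix intervals versus the paper's direct use of $G\setminus R$ with suffix intervals is only cosmetic. One shared caveat worth noting: both of your observations (the case $b=1$ in ``$s_b\notin T$'' and the case $b=a+1$ in the distinctness claim) and the paper's own proof (which uses the interval $I=\{j+1,\ldots,\ell\}$ with $j=\ell-1$) rely on single-element intervals, so the condition ``$i_1<i_2$'' in the statement must be read as ``$i_1\le i_2$'' for either argument---and indeed for the lemma itself, e.g.\ $G=\mathbb{Z}/2\mathbb{Z}$, $R=\{1\}$, $(r_1,r_2)=(0,1)$---to be valid.
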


Consequently, through \cref{lem:sumModM}, the proofs of~\cite[Lemmas 2.5 and 5.1]{nagele_2022_congruency} immediately extend to the group constraint setting, and thereby imply \cref{lemma:boundedPatternShape}.
We also remark that \cref{lem:sumModM} is in fact just a slightly generalized and more constructive version of \cref{lem:elementary}.
Moreover, the original proof for $G=\inlinequot{\mathbb{Z}}{m\mathbb{Z}}$ directly generalizes.
We repeat it here for completeness.

\begin{proof}[Proof of \cref{lem:sumModM}]
Assume for the sake of deriving a contradiction that there is no interval $I\subseteq [\ell]$ with $\sum_{i\in[\ell]\setminus I}r_i\in R$, but $\ell \geq |G|-|R|+1$.
Consider the $\ell$ group elements $s_0=0$, $s_1=r_1$, \ldots, $s_{\ell-1}=r_1+\ldots+r_{\ell-1}$.
Observe that $s_j\notin R$ for all $j\in[\ell-1]$; for otherwise, there is an interval $I=\{j+1,\ldots,\ell\}$ for some $j\in[\ell-1]$ such that $\sum_{i\in[\ell]\setminus I} = s_j\in R$, contradicting the assumption.
Thus, $s_j\in G\setminus R$ for all $j\in[\ell-1]$.
Hence, because $\ell\geq |G|-|R|+1$, we have by the pigeonhole principle that there exist $j_1,j_2\in[\ell-1]$ with $j_1<j_2$ such that $s_{j_1}=s_{j_2}$.
Thus, $I=\{j_1+1, \ldots, j_2\}$ is an interval with $\sum_{i\in[\ell]\setminus I} = \sum_{i\in[\ell] I}r_i - (s_{j_2}-s_{j_1}) = \sum_{i\in[\ell]\setminus I} r_i \in R$, again contradicting the assumption and hence completing the proof.
\end{proof}

\subsection{Proof of \cref{thm:blackbox}}\label{sec:pf_blackbox}
\newcommand{\abn}[1]{(\alpha_{#1}, \beta_{#1})}

The statements in \cref{thm:blackbox} are closely linked to a \GCTUF problem in the form presented in~\eqref{eq:structured-problem}.
In the following, we say that a solution $x=(x_A, x_B)$ of a problem in that form is \emph{a solution for $\ab$} if $\alpha = f^\top x_B$ and $\beta = h^\top x_A$.

\subsubsection*{\cref{thmitem:next_to_two} of \cref{thm:blackbox}}

In the congruency-constrained setting, the first part of \cref{thm:blackbox}~\ref{thmitem:next_to_two} was proved in~\cite[Lemma 5.9]{nagele_2022_congruency}, while the second part is implicit in~\cite[Proof of Lemma 5.8]{nagele_2022_congruency}.
The argument uses a generalized version of \cref{prop:averaging}, which states that given distinct pairs $\abn{1}, \abn{2}\in \Pinar$, there are $\abn{3}, \abn{4}\in\Pinar$ such that for any solutions $x_1$ for $\abn{1}$ and $x_2$ for $\abn{2}$, there exist solutions $x_3$ for $\abn{3}$ and $x_4$ for $\abn{4}$ such that $x_1+x_2= x_3+x_4$ (in fact, $\abn{3}$ and $\abn{4}$ are equal to $\frac12(\abn{1}+\abn{2})$ up to rounding, see~\cite[Lemma 5.3]{nagele_2022_congruency}).

Towards a proof of \cref{thm:blackbox}~\ref{thmitem:next_to_two}, we may assume that $\abprime\neq \ab + v$ for all $v\in\mathcal{D}\cup \{0\}$, and apply the above with $\abn{1}=\abprime$ and $\abn{2}=\ab$.
This gives that $\abn{3}$ and $\abn{4}$ are both different from $\abn{1}$ and $\abn{2}$, but ``closer'' to $\abprime$ than $\ab$ was.
The result will follow by iteratively applying this argument after showing that $|\pi_B\abn{3}|\geq 2$ or $|\pi_B\abn{4}|\geq 2$.
In the congruency-constrained case, the latter is concluded from the assumption that $|\pi_B \ab|\geq 2$, i.e., that there are two feasible residues at $\ab$.
Indeed, if we had $|\pi_B\abn{3}|=|\pi_B\abn{4}|=1$, then $\gamma^\top(x_3+x_4)$---and hence also $\gamma^\top(x_1+x_2)$---would have to yield the same residue for all solutions $x_1$ for $\abn{1}$ and $x_2$ for $\abn{2}$.
This reasoning holds analogously for any finite abelian group other than the cyclic groups $\inlinequot{\mathbb{Z}}{m\mathbb{Z}}$, by well-definedness of addition to be precise.
Because no other properties of congruency-constraints are exploited, the proofs of~\cite{nagele_2022_congruency} directly translate to a proof of \cref{thm:blackbox}~\ref{thmitem:next_to_two}.

\subsubsection*{\cref{thmitem:linear_pattern} of \cref{thm:blackbox}}\label{sec:blackboxTypeIandIV}

\cref{thm:blackbox}~\ref{thmitem:linear_pattern} gives two sufficient conditions that allow for reduction to a problem with fewer variables.
The conditions are precisely that the pattern structure is of \ref{item:type1}, or that it is of \ref{item:type4} and that there are no solutions for pairs $\ab\in\Pinar$ with $|\pi\ab|\geq 2$.

For pattern structure of \ref{item:type1}, \cite[Corollary 5.1]{nagele_2022_congruency} shows that---in the congruency-constrained setting---there exist $r_0, r_1, r_2\in\inlinequot{\mathbb{Z}}{m\mathbb{Z}}$ such that $\pi_B\ab = \{r_0 + r_1\alpha + r_2\beta\}$ for all $\ab \in \Pinar$, and $\pi_B$ is called \emph{linear} in this case.
Again, the proof is based on an averaging argument (i.e., \cref{prop:averaging}) that seamlessly carries over to the more general group-constrained setting with a finite abelian group $G$ by simply replacing calculations in $\inlinequot{\mathbb{Z}}{m\mathbb{Z}}$ by calculations in $G$.
The same applies to showing that in case of a linear pattern, a \GCTUF problem can be reduced to an equivalent problem with the same group $G$, at the same depth $d$, and strictly fewer variables~\cite[Theorems~2.4 and~2.5]{nagele_2022_congruency}.

For pattern structure of \ref{item:type4}, \cref{claim:4vertexSquares} gives that $\Pinar$ consists of at most four pairs, and all of them are vertex pairs.
This pattern family is very restricted, and is in fact a subset of the patterns covered by an analysis of certain small patterns in~\cite[Proof of Lemma 5.11]{nagele_2022_congruency}.
Concretely, for the \ref{item:type4} pattern structure that we consider here, it was concluded (using another averaging argument, i.e., \cref{prop:averaging}) that one can choose for every $\ab\in\Pinar$ a singleton-subset $\tilde\pi_B\ab\subseteq\pi_B\ab$ such that $\tilde\pi_B$ is linear in the sense introduced above.
Now recall that we also assume here that there are no solutions for pairs $\ab\in\Pinar$ with $|\pi\ab|\geq 2$, i.e., solutions can only occur for $\ab\in\Pinar$ with $|\pi_B\ab|=1$.
For those $\ab$, we have $\pi_B\ab=\tilde\pi_B\ab$, so it is enough to look for solutions compatible with $\tilde\pi_B$.
But then, linearity of $\tilde\pi_B$ allows for the same reduction to an equivalent problem with fewer variables as discussed above (the congruency-constrained version is given in~\cite[Theorems~5.2 and~2.5]{nagele_2022_congruency}).
Again, in all involved proofs of~\cite{nagele_2022_congruency}, all calculations in $\inlinequot{\mathbb{Z}}{m\mathbb{Z}}$ can directly be replaced by calculations in any fixed finite abelian group $G$ without affecting correctness of the proofs, hence the results carry over as desired.\qed
\section{Adapted proofs of base block results}\label{sec:BBreduction}

In this section, we discuss how to extend the proofs of~\cite[Section 4]{nagele_2022_congruency} for congruency-constrained TU problems with a base block constraint matrix, i.e., matrices covered by \cref{thmitem:TUdecomp_netw,thmitem:TUdecomp_const} of \cref{thm:TUdecomp}.
The congruency constraints that were used previously may equivalently be formulated as constraints in a cyclic group $\inlinequot{\mathbb{Z}}{m\mathbb{Z}}$, and it turns out that all arguments extend to general finite abelian groups, i.e., the setting that we need for our purposes.
In this appendix, we recall the proofs of~\cite{nagele_2022_congruency} and comment on the mostly straightforward modifications for the sake of completeness.

To this end, let us first define \emph{group-constrained TU problems} (\linkdest{prb:GCTU}\GCTU problems) to be the optimization variant of \GCTUF{} problems, i.e., where additionally to the \GCTUF setup, we are given an objective $c\in\mathbb{R}^n$ that we want to minimize over all feasible solutions of the \GCTUF problem.
Note that we can always assume to start with a \GCTU{} problem whose relaxation (i.e., the problem obtained after dropping the group constraint) is feasible, which we can check in strongly polynomial time; for otherwise, the \GCTU{} problem is clearly infeasible.
Hence, we assume feasibility of the relaxation throughout this section.
To start with, let us recall the definition of a \emph{network matrix}.
\begin{definition}\label{def:networkMatrix}
	A matrix $T$ is a \emph{network matrix} if the rows of $T$ can be indexed by the edges of a directed spanning tree $(V, U)$, and the columns can be indexed by the edges of a directed graph $(V, A)$ on the same vertex set, such that for every arc $a=(v, w)\in A$ and every arc $u\in U$,
	$$
	T_{u,a} = \begin{cases}
		1 & \text{if the unique $v$-$w$ path in $U$ passes through $u$ forwardly,}\\
		0 & \text{if the unique $v$-$w$ path in $U$ does not pass through $u$,}\\
		-1 & \text{if the unique $v$-$w$ path in $U$ passes through $u$ backwardly.}
	\end{cases}
	$$
\end{definition}

In the subsequent three sections, we distinguish the three different cases of base block matrices, namely whether the constraint matrix $T$ of the \GCTU{} problem that we consider is a network matrix, the transpose of a network matrix, or a matrix stemming from the constant-size matrices given in \cref{thmitem:TUdecomp_const} of \cref{thm:TUdecomp}.
\subsection{Network matrix base block GCTU problems}

In this section, we discuss the extension from congruency constraints to group constraints in the case where the constraint matrix is a network matrix.
Concretely, this will provide the following result.

\begin{theorem}\label{thm:solveNetwRPP}
Let $G$ be a finite abelian group.
There is a strongly polynomial time randomized algorithm for \GCTU{} problems with group $G$, unary encoded objectives, and constraint matrices that are network matrices.
\end{theorem}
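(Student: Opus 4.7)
The plan is to follow the pipeline used for the congruency-constrained version in~\cite{nagele_2022_congruency} and verify that each step extends to arbitrary finite abelian groups $G$ of constant order.

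First, I would translate $Tx \leq b$ into a flow problem on an auxiliary digraph. Since $T$ is the network matrix of a directed spanning tree $(V, U)$ and digraph $(V, A)$, the substitution $y_u = \sum_{a \in A} T_{u,a} x_a$ for $u \in U$ rewrites the system as an integer-circulation problem on $(V, A \cup U^{\mathrm{rev}})$: the arcs in $A$ carry the group labels $\gamma_a$ and unary-encoded costs $c_a$, while the tree arcs carry the bounds induced by $b_u$. The \GCTU problem thus becomes that of finding a min-cost integer circulation $z$ with $\sum_a \gamma_a z_a = r$ in $G$, and a classical flow-to-matching reduction replaces this by a min-cost perfect matching problem in an auxiliary bipartite graph in which each arc is represented by a single edge carrying its group label and cost. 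The feasible circulations of label $r$ correspond exactly to the perfect matchings of group label $r$.

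The heart of the proof is therefore the following problem: given a bipartite graph with labels $\gamma_e \in G$ and unary-encoded costs $c_e$ on the edges and a target $r \in G$, find a min-cost perfect matching $M$ with $\sum_{e \in M} \gamma_e = r$. Here I would adapt the algebraic approach of~\textcite{camerini_1992_rpp}. Writing $G \cong \mathbb{Z}/m_1\mathbb{Z} \times \cdots \times \mathbb{Z}/m_k\mathbb{Z}$, I introduce one formal variable $z_i$ per cyclic factor and one further variable $t$ for the cost, and attach to each edge the weight $t^{c_e}\prod_i z_i^{\gamma_{e,i}}$ multiplied by an independent random element of a sufficiently large finite field $\mathbb{F}$. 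In the quotient ring $\mathbb{F}[t, z_1, \ldots, z_k]/(z_1^{m_1}-1, \ldots, z_k^{m_k}-1)$, the determinant of the resulting Edmonds matrix expands as $\sum_{c, g} P_{c, g}\, t^c \prod_i z_i^{g_i}$, and standard polynomial-identity-testing together with an isolation-lemma-style argument shows that $P_{c, r}$ is nonzero with high probability iff a perfect matching of cost $c$ and group label $r$ exists. Scanning $c$ over its polynomially many values (which are polynomially bounded by the unary encoding) and using self-reducibility to decode the matching yields the algorithm.

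The main obstacle to importing the result of~\textcite{camerini_1992_rpp} as a black box is that it is naturally phrased for a single exact-weight constraint, whereas a constraint in $G$ encodes $k$ simultaneous congruency conditions. The multivariate quotient ring above is exactly the remedy: since $|G|$, and hence $\prod_i m_i$, is constant, the $z$-part of the determinant has only constantly many monomials, coefficient extraction is a constant-overhead bookkeeping step, and the ring arithmetic has constant cost. Combined with Schwartz-Zippel over $\mathbb{F}$, this delivers the desired strongly polynomial randomized procedure.
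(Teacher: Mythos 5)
Your pipeline (network matrix $\to$ group-constrained circulation $\to$ matching $\to$ randomized algebraic detection \`a la \textcite{camerini_1992_rpp}) is the same as the paper's, and your handling of the group constraint in the algebraic step is a legitimate variant: where you work in the multivariate quotient ring $\mathbb{F}[t,z_1,\dots,z_k]/(z_1^{m_1}-1,\dots,z_k^{m_k}-1)$, the paper instead folds the $k$ congruency coordinates into a single scalar arc length $\tilde\ell(a)=\ell(a)\cdot m^{2k}|A|^k+\sum_i m^{2i-2}|A|^{i-1}\varphi_i(\eta(a))$ separated by orders of magnitude (\cref{lem:CCCtoXLC}), so that the exact-length circulation algorithm of \textcite{camerini_1992_rpp} can be invoked as a black box (\cref{thm:XLCalgo}). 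Your route re-opens that algorithm; this works with constant overhead for constant $|G|$, though you would still need to note that your quotient ring is not a field, so the determinant is computed by evaluation/interpolation rather than direct elimination.

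The genuine gap sits earlier, in the reduction to a circulation and then to a matching: you never bound the flow values. The variables of a \GCTU{} problem are unrestricted integers and $b$ may have large binary-encoded entries, so the integer circulation you construct can carry arc values as large as $\|b\|_\infty$; your claim that ``each arc is represented by a single edge'' in the matching instance therefore fails, since the flow-to-matching reduction needs one edge per unit of capacity, and the algorithm of \cref{thm:XLCalgo} is only pseudopolynomial in $\max_a u(a)$. The paper gets capacities in $\{0,\dots,|G|-1\}$ (\cref{lem:GCTUtoCCC}) through a proximity argument: starting from an optimal vertex of the LP relaxation, the difference to a group-feasible integral solution is decomposed into sign-compatible elementary circuits, and the pigeonhole statement \cref{lem:sumModM} (the group analogue of~\cite[Lemma 2.2]{nagele_2022_congruency}, essentially \cref{lem:elementary}) lets one discard any subcollection of circuits whose labels sum to zero in $G$, leaving at most $|G|-1$ of them. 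This is precisely the step where group theory enters this base block, and without it the ``strongly polynomial'' claim of the theorem does not follow from your construction.
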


In this case, one can exploit the graph structure that comes with network matrices to interpret \GCTU{} problems with network constraint matrices as minimum-cost group-constrained circulation problems in certain directed graphs. To get started, let us recall that a circulation $f$ in a directed graph $H=(V,A)$ with capacities $u\colon A\to\mathbb{Z}_{\geq 0}$ is a mapping $f\colon A\to\mathbb{Z}_{\geq 0}$ such that $f(a)\leq u(a)$ for every arc $a\in A$, and $f(\delta^+(v)) = f(\delta^-(v))$ for every vertex $v\in V$. Given arc lengths $\ell \colon A\to\mathbb{Z}$, the length of a circulation $f$ is $\ell(f)\coloneqq\sum_{a\in A}\ell(a)f(a)$. Note that here, arc lengths are allowed to be negative.
A group-constrained circulation problem is formally defined as follows.
\begin{mdframed}[innerleftmargin=0.5em, innertopmargin=0.5em, innerrightmargin=0.5em, innerbottommargin=0.5em, userdefinedwidth=0.95\linewidth, align=center]
	{\textbf{Group-Constrained Circulation (\linkdest{prb:GCC}\GCC):}}
	Let $H=(V,A)$ be a directed graph with capacities $u\colon A\to\mathbb{Z}_{\geq 0}$, arc lengths $\ell\colon A\to\mathbb{Z}$, and let $G$ be a finite abelian group, $\eta\colon A\to G$ and $r\in G$. Find a minimum-length circulation $f\colon A\to\mathbb{Z}_{\geq 0}$ in the given network such that $\sum_{a\in A}\eta(a)f(a) = r$.
\end{mdframed}

\noindent In~\cite[Lemma 4.2]{nagele_2022_congruency}, a reduction from \GCTU{} problems with a network matrix as constraint matrix to \GCC{} is presented for the case of a cyclic group $G=\inlinequot{\mathbb{Z}}{m\mathbb{Z}}$.
In this reduction, the only property of $\inlinequot{\mathbb{Z}}{m\mathbb{Z}}$ that is exploited is~\cite[Lemma 2.2]{nagele_2022_congruency}.
Thus, replacing this statement by its group-constrained analogue \cref{lem:sumModM} proved earlier immediately gives the following.
\begin{lemma}\label{lem:GCTUtoCCC}
	\GCTU{} problems with group $G$, objective vector $c$, and constraint matrices that are network matrices can be reduced in strongly polynomial time to \GCC{} problems with group $G$, capacities $u$ within $\{0, \ldots, |G|-1\}$, and arc lengths $\ell$ with $\|\ell\|_\infty\leq \|c\|_{\infty}$.
\end{lemma}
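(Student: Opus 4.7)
The plan is to follow the proof of~\cite[Lemma 4.2]{nagele_2022_congruency} verbatim, and to check that the single place where that argument invokes the cyclicity of the group (via~\cite[Lemma 2.2]{nagele_2022_congruency}) can be replaced by the analogous statement for general finite abelian groups, namely our \cref{lem:sumModM}.

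First I would recall the graph-theoretic construction. Let $T$ be a network matrix encoded by a directed spanning tree $(V,U)$ and a directed graph $(V,A)$ as in \cref{def:networkMatrix}. Starting from the \GCTU{} instance $\min\{c^\top x\colon Tx\leq b,\,\gamma^\top x=r,\,x\in\mathbb{Z}^n\}$, one constructs a directed graph $H$ on vertex set $V$ whose arcs consist of the arcs in $A$ together with forward/backward copies of the tree arcs in $U$ used as slack arcs. The slack-arc capacities are chosen to encode $b$, while the arcs coming from $A$ inherit arc length $c_a$ and group label $\eta(a)\coloneqq\gamma_a$; the slack arcs get length $0$ and group label $0\in G$. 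A standard tree-based bijection shows that integer points $x\in\mathbb{Z}^n$ with $Tx\leq b$ correspond exactly to integer circulations $f$ in $H$, that $c^\top x$ equals the length $\ell(f)$, and that $\gamma^\top x = \sum_{a\in A(H)}\eta(a)f(a)$. This immediately yields the objective and group-constraint bounds $\|\ell\|_\infty\leq\|c\|_\infty$.

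Second, I would argue that capacities can be reduced to lie in $\{0,\ldots,|G|-1\}$. After preprocessing out zero- or negative-length directed cycles in the residual graph (a standard strongly polynomial step that does not depend on the group), decompose an arbitrary feasible optimal integer circulation $f$ into a non-negative integer combination of simple directed cycles $C_1,\ldots,C_t$ with multiplicities $\lambda_1,\ldots,\lambda_t$. If some $\lambda_i\geq |G|$, then the group contributions of the $\lambda_i$ copies of $C_i$ form a sequence $\eta(C_i),\eta(C_i),\ldots,\eta(C_i)$ of $\lambda_i\geq|G|$ equal group elements. Applying \cref{lem:sumModM} with $R=\{\gamma^\top x - (\text{rest})\}$, one obtains a contiguous subset of these copies whose removal keeps the total group sum equal to $r$. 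Removing these copies preserves feasibility while not increasing the objective (by the preprocessing), and strictly decreases the total flow. Iterating, we may assume every arc carries flow at most $|G|-1$, and we cap capacities at $|G|-1$ accordingly. All constructions are manifestly carried out in strongly polynomial time.

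The main obstacle---and the only step that previously relied on $G$ being cyclic---is the capacity-bounding argument in the second step. Since \cref{lem:sumModM} provides the required interval-deletion property for arbitrary finite abelian groups with exactly the same conclusion that~\cite[Lemma 2.2]{nagele_2022_congruency} provides for $\mathbb{Z}/m\mathbb{Z}$, the proof of~\cite[Lemma 4.2]{nagele_2022_congruency} goes through essentially unchanged, yielding the claim.
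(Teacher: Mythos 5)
Your proposal matches the paper's argument: the paper likewise reduces via the construction of \cite[Lemma 4.2]{nagele_2022_congruency} and observes that the only property of $\inlinequot{\mathbb{Z}}{m\mathbb{Z}}$ used there is \cite[Lemma 2.2]{nagele_2022_congruency}, which is replaced by its group analogue \cref{lem:sumModM}. The additional detail you supply (the circulation correspondence and the interval-deletion argument bounding capacities by $|G|-1$) is consistent with that reduction, so the proposal is correct and takes essentially the same route.
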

Now, one can finish the proof of \cref{thm:solveNetwRPP} by exploiting a connection to the so-called \emph{exact length circulation problem}, where the goal is to find a circulation whose length is equal to a given value.
The reduction follows the one given in~\cite[Lemma 4.3]{nagele_2022_congruency}, but needs to be slightly adapted in order to capture group constraints.

\begin{mdframed}[innerleftmargin=0.5em, innertopmargin=0.5em, innerrightmargin=0.5em, innerbottommargin=0.5em, userdefinedwidth=0.95\linewidth, align=center]
	{\textbf{Exact Length Circulation (\linkdest{prb:XLC}\XLC):}}
	Let $H=(V,A)$ be a digraph with capacities $u\colon A\to\mathbb{Z}_{>0}$ and arc lengths $\ell \colon A\to\mathbb{Z}$. Given $L\in\mathbb{Z}$, find a circulation $f$ in the given network such that $\ell(f)=L$.
\end{mdframed}
Exact length circulation problems can be solved using a randomized pseudopolynomial algorithm, as shown by \textcite{camerini_1992_rpp}.
They reduce the problem to an exact cost perfect matching problem, which can then be reduced to computing the coefficients of a well-defined polynomial.
The following theorem summarizes the result of \textcite{camerini_1992_rpp} for \XLC{}.
\begin{theorem}[\cite{camerini_1992_rpp}]\label{thm:XLCalgo}
	There is a randomized algorithm for \XLC{} problems in a directed graph $H=(V,A)$ with capacities $u\colon A\to\mathbb{Z}_{\geq 0}$ in time $\mathrm{poly}(|V|, \max_{a\in A}u(a), \max_{a\in A}|\ell(a)|)$.
\end{theorem}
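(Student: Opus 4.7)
My plan is to reduce \XLC to an exact-weight perfect matching problem and then use algebraic/randomized techniques based on the Schwartz-Zippel lemma, following the classical pseudopolynomial matching approach.

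First, I would reduce the capacitated problem to a unit-capacity version. For each arc $a\in A$ with capacity $u(a)$, replace $a$ by $u(a)$ parallel copies, each of capacity $1$ and length $\ell(a)$. The total number of arcs in the resulting digraph $H'=(V,A')$ is at most $|A|\cdot\max_{a}u(a)$, which is polynomial in the input parameters of the theorem. Integer circulations in $H$ of length $L$ correspond bijectively to $\{0,1\}$-circulations in $H'$ of length $L$, and every $\{0,1\}$-circulation decomposes into a set of arc-disjoint directed simple cycles.

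Second, I would encode arc-disjoint cycle collections in $H'$ as perfect matchings in an auxiliary bipartite graph $B$. Split each vertex $v\in V$ into two copies $v^-$ and $v^+$, and create an edge in $B$ between $u^+$ and $v^-$ for every arc $(u,v)\in A'$, with weight $\ell(u,v)$; additionally, add a zero-weight "identity" edge between $v^-$ and $v^+$ for every $v\in V$. A perfect matching $M$ in $B$ selects, for each vertex, either an identity edge (vertex unused) or two arcs (one outgoing, one incoming), yielding a $\{0,1\}$-circulation in $H'$ whose length equals the weight of $M$. Thus, \XLC reduces to deciding whether $B$ has a perfect matching of weight exactly $L$, and reconstructing it.

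Third, to solve the exact-weight bipartite perfect matching problem, I would use the symbolic approach of Camerini--Galbiati--Maffioli. Consider the $|V|\times|V|$ Edmonds matrix $M(x,y)$ whose entry $M_{u,v}$ is $x_{uv}\cdot y^{\ell(uv)}$ for each edge $\{u^+,v^-\}$ of $B$ (and analogously for identity edges), where the $x_{uv}$ are formal indeterminates. Then $\det(M(x,y))=\sum_{P}\mathrm{sgn}(P)\prod_{e\in P}x_e\,y^{w(P)}$, the sum being over perfect matchings $P$ of weight $w(P)$. Because the monomials in $x$ are distinct across matchings, no cancellation occurs, so the coefficient of $y^L$ in $\det(M(x,y))$ is a nonzero polynomial in $x$ iff a perfect matching of weight $L$ exists. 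Sampling the $x_{uv}$'s uniformly from a set of size polynomial in $|V|\max_a u(a)\max_a|\ell(a)|$ and applying Schwartz--Zippel yields a one-sided error randomized test in pseudopolynomial time, since $|L|\le|V|\max_au(a)\max_a|\ell(a)|$ bounds the degree in $y$.

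Finally, to extract an actual matching (and hence a circulation), I would use the standard self-reducibility trick: for each edge $e$ of $B$ in turn, remove $e$, rerun the test; if a matching of weight $L$ still exists, keep $e$ deleted, otherwise put $e$ back and mark it forced. The main obstacle is controlling the coefficient-of-$y^L$ extraction and ensuring no unwanted cancellation in $\det(M(x,y))$; introducing the generic indeterminates $x_{uv}$ (one per edge) is exactly what guarantees this, and evaluating the coefficient can be done by interpolation in $y$ over the $\mathrm{poly}(|V|,\max u,\max|\ell|)$ possible exponents.
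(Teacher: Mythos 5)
Your proposal is correct and follows essentially the same route as the paper, which does not prove this statement itself but cites \cite{camerini_1992_rpp} and sketches exactly your pipeline: reduce \XLC{} to an exact-cost perfect matching problem and then to extracting a coefficient of a well-defined (determinant) polynomial via randomized evaluation. The only detail worth flagging is that negative arc lengths make $\det(M(x,y))$ a Laurent polynomial in $y$, which you should normalize by a uniform shift of the edge weights (harmless since all perfect matchings of $B$ have exactly $|V|$ edges).
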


Thus, it remains to extend the connection between \GCC{} and \XLC{} problems from the congruency-constrained to the group-constrained setting.
To this end, we recall the reduction for a single congruency constraint modulo $m$: Taking an integer representative in $\{0, \dots, m-1\}$ for each residue modulo $m$, we may construct new arc lengths that capture the residue representative $r(a)$ and original arc lengths $\ell(a)$ independently (e.g., by using $\tilde\ell(a) = c\cdot \ell(a) + r(a)$ for a large enough constant $c$).
Guessing both the length and the residue class of an optimal solution of the \GCC problem is then equivalent to guessing a target $L$ with respect to the new lengths, hence the reduction can be achieved by binary search.

For group constraints, we recall that we may equivalently interpret a group constraint as multiple congruency constraints, which can then be integrated into new arc lengths $\tilde\ell$ at different orders of magnitude.
Formally, this leads to the following lemma, which is an analogue of~\cite[Lemma 4.3]{nagele_2022_congruency}.
\begin{lemma}\label{lem:CCCtoXLC}
A \GCC{} problem in a graph $H=(V,A)$ with group $G$, arc lengths $\ell\colon A\to\mathbb{Z}$, and capacities $u\colon A\to\{0, 1, \ldots, |H|-1\}$ can be polynomially reduced to $\mathrm{poly}(|G|, |V|, |A|, \max_{a\in A}|\ell(a)|)$ many \XLC{} problems in $G$ with the same capacities.
\end{lemma}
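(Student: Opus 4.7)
The plan is to reduce a \GCC{} instance to polynomially many \XLC{} instances by decomposing the group constraint into a system of congruency constraints via the fundamental theorem of finite abelian groups, and then packing the original length together with each congruency contribution at a distinct order of magnitude into a single modified arc-length function. An admissible integer target for this modified instance will then uniquely encode a tuple (original length, contribution to each cyclic factor), and enumerating over admissible targets will yield the claimed reduction.

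First, I would invoke the fundamental theorem to identify $G\cong\prod_{i=1}^q\mathbb{Z}/m_i\mathbb{Z}$ (so $\prod_i m_i=|G|$) and write $\eta(a)=(\eta_1(a),\ldots,\eta_q(a))$, $r=(r_1,\ldots,r_q)$ in these coordinates, with $\eta_i(a),r_i\in\{0,\ldots,m_i-1\}$. The group constraint $\sum_a\eta(a)f(a)=r$ then decomposes into the $q$ congruencies $S_i\coloneqq\sum_a\eta_i(a)f(a)\equiv r_i\pmod{m_i}$. Since $u(a)\leq|G|-1$ and $f\geq 0$, every feasible $f$ satisfies $|\ell(f)|\leq L_{\max}\coloneqq|A|(|G|-1)\max_a|\ell(a)|$ and $0\leq S_i\leq K\coloneqq|A|(|G|-1)^2$.

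Next, I would fix scaling constants $B_0\coloneqq 2K+1$ and $B\coloneqq 2L_{\max}+1$ and define new arc lengths
$$\tilde\ell(a) \coloneqq \ell(a) + B\sum_{i=1}^q\eta_i(a)\,B_0^{i-1},$$
so that $\tilde\ell(f)=\ell(f)+B\sum_{i=1}^q S_i B_0^{i-1}$ for every circulation $f$. The strict bounds $|\ell(f)|<B/2$ and $0\leq S_i<B_0/2$ will guarantee that the integer $\tilde\ell(f)$ uniquely determines the tuple $(\ell(f),S_1,\ldots,S_q)$ via iterated division. For each target of the form $\tilde L\coloneqq L_0+B\sum_{i=1}^q s_iB_0^{i-1}$ with $L_0\in\{-L_{\max},\ldots,L_{\max}\}$ and $s_i\in\{0,\ldots,K\}$ satisfying $s_i\equiv r_i\pmod{m_i}$, I would invoke one \XLC{} call on $(H,u,\tilde\ell)$ with target $\tilde L$; by uniqueness of the encoding, this call succeeds if and only if there is a feasible \GCC{} circulation with $\ell(f)=L_0$ and $S_i=s_i$ for all $i$. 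Iterating $L_0$ from smallest to largest, the first $L_0$ for which some such call succeeds yields the \GCC{} optimum, while simultaneous infeasibility of all calls certifies \GCC{} infeasibility.

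Finally, I would bound the total number of \XLC{} calls by $(2L_{\max}+1)\prod_{i=1}^q(K/m_i+1)$, which is polynomial in $|G|$, $|V|$, $|A|$, and $\max_a|\ell(a)|$ as required. The only delicate step is to choose $B_0$ and $B$ strictly larger than twice the respective value ranges, so that the base-expansion of $\tilde\ell(f)$ is faithful; the rest of the argument is clean bookkeeping.
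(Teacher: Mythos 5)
Your proposal is correct and follows essentially the same route as the paper's proof: both decompose $G$ into cyclic factors, pack $\ell(f)$ together with the per-factor sums $\sum_{a\in A}\varphi_i(\eta(a))f(a)$ into a single modified arc-length function at separated orders of magnitude, and enumerate the admissible integer targets consistent with $r$. The only differences are cosmetic --- you place $\ell$ at the lowest order and sweep the candidate lengths $L_0$ linearly, whereas the paper scales $\ell$ up by the largest factor and binary-searches for the optimal length.
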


\begin{proof}
First, we may decompose $G\cong\inlinequot{\Z}{m_1\Z} \times \cdots \times \inlinequot{\Z}{m_k\Z}$ for some $k\in\mathbb{Z}_{\geq 1}$.
For every $i\in[k]$, let $\varphi_i \colon G \rightarrow \inlinequot{\Z}{m_i\Z} \rightarrow \Z$ denote the corresponding natural projection maps composed with the natural mapping of a residue in $\inlinequot{\Z}{m_i\Z}$ to a representative in $\{0, \dots, m_i-1\}$.
Set $m\coloneqq|G|$.

Define $\tilde\ell\colon A\to\mathbb{Z}$ for every arc $a\in A$ as
$$\tilde{\ell}(a)=\ell(a)\cdot m^{2k}|A|^k + \sum_{i=1}^k m^{2i-2}|A|^{i-1}\varphi_i(\eta(a))\enspace.$$
We thus have $\tilde{\ell}(f) = \ell(f)\cdot m^{2k}|A|^k + \sum_{i=1}^k m^{2i-2}|A|^{i-1}\varphi_i(\eta(f))$.
Observe that $\sum_{a\in A}\varphi_i(\eta(a))f(a) < m^2 |A|$ for every $1\leq i\leq k$, hence  from $\tilde{\ell}(f)$, one can retrieve $\ell(f)$ as well as $\sum_{a\in A}\varphi_i(\eta(a))f(a)$ for every $i\in[k]$.
Consequently, finding a circulation of length $L$ with $\sum_{a\in A}\eta(a)f(a)= r$ is equivalent to solving \XLC{} problems in $H$ with respect to lengths $\tilde{\ell}$ and with target length $\tilde{L}=L\cdot m^{2k}|A|^k + \sum_{i=1}^km^{2i-2}|A|^{i-1} (d_im_i + \varphi_i(r))$ for all tuples $(d_1, \dots, d_k)\in\{0,\ldots, m|A|-1\}^k$.
We can find the smallest $L$ for which there is a \GCC{} solution of length $L$ by binary search in $O(k\log (m|A|\cdot\max_{a\in A}|\ell(a)|))$ iterations, because $|\ell(f)| = \left|\sum_{a\in A}\ell(a)f(a)\right| \leq m|A|\cdot\max_{a\in A}|\ell(a)|$.
Altogether, this gives the desired result.
\end{proof}

Combining \cref{lem:GCTUtoCCC,thm:XLCalgo,lem:CCCtoXLC} readily implies \cref{thm:solveNetwRPP}.
\subsection{Transposed network matrix base block GCTU problems}\label{sec:transposedNetworkBB}

For \GCTU problems with a constraint matrix that is the transpose of a network matrix, we recall from \cref{sec:transposedNetworkBaseBlock} that we aim for a reduction to a lattice problem of the form given in~\eqref{eq:CCSM}.
For the sake of completeness, let us restate that problem in the optimization setting here.

\begin{mdframed}[innerleftmargin=0.5em, innertopmargin=0.5em, innerrightmargin=0.5em, innerbottommargin=0.5em, userdefinedwidth=0.95\linewidth, align=center]
	{\textbf{Group-Constrained Lattice Optimization (\linkdest{prb:GCLO}{\GCLO}):}}
	Let $N$ be a finite set, $\mathcal{L}\subseteq 2^N$ a lattice, $(G,+)$ a finite abelian group, $\gamma \colon N \rightarrow G$, $r\in G$, and $w\colon N \rightarrow \mathbb{Z}$.
	The task is to find $X\in\mathcal{L}$ with $\gamma(X)\coloneqq\sum_{x\in X}\gamma(x) = r$ minimizing $\sum_{x\in X} w(x)$, or decide infeasibility.
\end{mdframed}

\noindent
In~\cite[Section 4.2]{nagele_2022_congruency}, the following result was shown implicitly for the case of a cyclic group $G$.
It turns out, though, that no properties of cyclic groups beyond them being finite abelian groups are used in the proofs, hence by substituting all occurrences of congruency constraints, i.e., constraints in a cyclic group $\inlinequot{\mathbb{Z}}{m\mathbb{Z}}$, with group constraints, the original proofs also imply the theorem for a general finite abelian group.

\begin{theorem}\label{thm:transposeNetworkToLattice}
  Given a finite abelian group $G$, consider a \GCTU problem on $n$ variables and a constraint matrix that is the transpose of a network matrix.
  One can in strongly polynomial time determine a \GCLO problem over a ground set $N$ with $|N| = n|G|$ such that from an optimal solution of the \GCLO problem, we can in strongly polynomial time compute an optimal solution of the \GCTU problem.
\end{theorem}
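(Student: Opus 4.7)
The plan is to carry out the reduction underlying~\cite[Section~4.2]{nagele_2022_congruency} (where it is shown implicitly for cyclic $G$) and verify that it transfers verbatim to any finite abelian group. I would split the work into three stages: a potential substitution, a layered DAG construction, and a bounding argument restricting the range of the relevant potentials.

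First I would use the network structure of $T$ to reformulate the problem in terms of vertex potentials. Let $(V,U)$ be the spanning tree underlying $T$ with $|U|=n$ and let $A$ be its arc set, so that the \GCTU{} variables $x$ are indexed by $U$. Fix a root $v_0\in V$ and write $x_e=y_{e^+}-y_{e^-}$ for $e\in U$, which gives a bijection between $x\in\mathbb{Z}^n$ and integer potentials $y\colon V\to\mathbb{Z}$ with $y_{v_0}=0$. The standard identity for network matrices turns the row of $T^\top$ indexed by an arc $a=(s,t)\in A$ into $y_t-y_s$, so $T^\top x\leq b$ becomes the potential-difference system $y_t-y_s\leq b_a$ for $a\in A$, while the objective $c^\top x$ and the group constraint $\gamma^\top x=r$ rewrite, respectively, as $\sum_v w_v y_v$ and $\sum_v \eta_v y_v = r$ in $G$, for suitable $w_v\in\mathbb{Z}$ and $\eta_v\in G$.

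Next I would encode potentials $y\colon V\to\{0,\dots,|G|-1\}$ with $y_{v_0}=0$ as closed sets of a layered digraph $H$ on $N\coloneqq(V\setminus\{v_0\})\times\{0,\dots,|G|-1\}$. Within each column I add downward arcs $(v,k)\to(v,k-1)$ to enforce the level-set property $y_v\geq k\Rightarrow y_v\geq k-1$, and for each $a=(s,t)\in A$ I add cross-column arcs so that closed sets of $H$ correspond bijectively to feasible potentials $y$ in the box. Placing the scalar weight $w_v$ and the group label $\eta_v$ on each vertex $(v,k)\in N$ ensures that the weight and group evaluation of a closed set $X$ agree with $\sum_v w_v y_v$ and $\sum_v \eta_v y_v$, respectively. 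This produces a \GCLO{} instance on the lattice of closed sets of $H$, whose ground set has the claimed cardinality $(|V|-1)\cdot|G|=n|G|$.

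The step where genuine care is needed, and the main obstacle I foresee, is justifying that the restriction to the box $\{0,\dots,|G|-1\}$ is without loss of generality. Translation-invariance of the system $y_t-y_s\leq b_a$ immediately lets me assume $\min_v y_v=0$; the upper bound $|G|-1$ relies on the fact that $|G|\cdot g=0$ for every $g\in G$, so that modifying a coordinate $y_v$ by a multiple of $|G|$ leaves $\sum_v \eta_v y_v$ unchanged. Combining this observation with a cycle/path-exchange argument on $(V,A)$ allows us to trade such shifts against the linear objective while respecting all potential-difference inequalities. Only the finite abelian group structure of $G$ enters, so the cyclic-case argument of~\cite{nagele_2022_congruency} extends unchanged; every other step is combinatorial with complexity polynomial in $|V|$, $|A|$ and $|G|$, which yields strong polynomiality of the reduction immediately.
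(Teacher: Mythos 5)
Your first two stages (rewriting the transposed-network system as difference constraints on integer potentials, and encoding box-bounded potentials as closed sets of a layered digraph with $|G|$ copies per vertex) are exactly the architecture of the reduction the paper imports from~\cite[Section 4.2]{nagele_2022_congruency}, and your accounting $|N|=(|V|-1)\cdot|G|=n|G|$ is right. The gap is in the third stage, which you correctly identify as the delicate one but then resolve incorrectly. Normalizing $\min_v y_v=0$ does not let you restrict to $y_v\in\{0,\dots,|G|-1\}$: a single constraint $y_t-y_s\leq -|G|$ already forces every feasible potential to have spread at least $|G|$, so the box anchored at $0$ may contain no feasible point at all. The box has to be anchored at an optimal solution $y^0$ of the \emph{relaxation} (the problem with the group constraint dropped), and the statement one needs is a proximity theorem: some optimal solution of the group-constrained problem lies within $\ell_\infty$-distance $|G|-1$ of $y^0$.

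Moreover, the group-theoretic fact driving that proximity bound is not ``$|G|\cdot g=0$ for all $g\in G$''. Shifting a single coordinate by a multiple of $|G|$ generally destroys feasibility, and the ``cycle/path-exchange'' you gesture at does not obviously recover a width-$(|G|-1)$ window while preserving optimality. The actual mechanism is: decompose $y^*-y^0$ (for $y^*$ a constrained optimum) into $k$ elementary $\{0,\pm1\}$-vectors via the TU decomposition lemma, each of which can be removed without losing feasibility and without improving past the relaxation optimum; if $k\geq|G|$, the pigeonhole argument on partial sums (the paper's \cref{lem:elementary}, in its constructive form \cref{lem:sumModM}) yields a nonempty zero-sum subfamily that can be discarded, preserving the group constraint and not increasing the objective. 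Iterating gives $k\leq|G|-1$ and hence the box. This zero-sum lemma is precisely the one place where passing from cyclic to general finite abelian groups requires an argument (and it is the point the paper singles out), so replacing it by the order-of-the-group identity both misattributes the difficulty and leaves the proximity claim unproved.
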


In particular, \cref{thm:transposeNetworkToLattice} implies the reduction claimed in \cref{prop:GCTUFtoGCLF} by observing that we can in strongly polynomial time determine $w\colon N \rightarrow \mathbb R$ such that $f(A) = \sum_{x\in A} w(x)$. %
\subsection{Matrices stemming from particular constant-size matrices}\label{sec:constCore}

To complete the discussion of base block \GCTU{} problems, we now cover \GCTU{} problems with constraint matrices covered by \cref{thmitem:TUdecomp_const} of \cref{thm:TUdecomp}.
In other words, these are matrices $T$ that can be obtained from the two matrices
\begin{equation}\label{eq:specialMatrices}
	\begin{pmatrix*}[r]
		1 & -1 &  0 &  0 & -1 \\
		-1 &  1 & -1 &  0 &  0 \\
		0 & -1 &  1 & -1 &  0 \\
		0 &  0 & -1 &  1 & -1 \\
		-1 &  0 &  0 & -1 &  1
	\end{pmatrix*}
	\quad\text{and}\quad
	\begin{pmatrix}
		1 &  1 &  1 &  1 &  1 \\
		1 &  1 &  1 &  0 &  0 \\
		1 &  0 &  1 &  1 &  0 \\
		1 &  0 &  0 &  1 &  1 \\
		1 &  1 &  0 &  0 &  1
	\end{pmatrix}
\end{equation}
by repeatedly appending unit vector rows or columns, appending a copy of a row or column, and inverting the sign of a row or column.

To tackle problems with such constraint matrices, \cite[Section 4.3]{nagele_2022_congruency} more generally considers constraint matrices $T$ that are obtained from a \emph{core} matrix $C$ through the above operations.
(Concretely, in the case relevant here, $C$ will be one of the two matrices in~\eqref{eq:specialMatrices}.)
It is shown that if $C$ has $\ell$ columns, then there are integral vectors  $s_1,\ldots,s_\ell$ such that once the values $s_i^\top x$ are known for all $i\in[\ell]$, the inequality system $Tx\leq b$ can be reduced to an equivalent system $T'x\leq b'$ on the same variables $x$, where $T'$ is a network matrix and the transpose of a network matrix at the same time, and $b'$ is integral.
By exploiting a proximity statement for congruency-constrained TU problems, \cite{nagele_2022_congruency} observe that it suffices to consider values $s_i^\top x\in\{-m+1,\ldots,m-1\}$, where $m$ is the modulus of the congruency-constraint, and they conclude that there are only $(2m-1)^\ell$ possible combinations of values for $s_i^\top x$ to be enumerated, which can be done in polynomial time for constant $\ell$.
To extend this reasoning from congruency constraints (i.e., constraints in a cyclic group $G=\inlinequot{\mathbb{Z}}{m\mathbb{Z}}$) to constraints in general finite abelian groups, we observe that the aforementioned proximity statement relies on a decomposition theorem for solutions of totally unimodular systems~\cite[Lemma 2.1]{nagele_2022_congruency} and the property of cyclic groups that we generalized in \cref{lem:sumModM}.
Thereby, it is again a matter of replacing modular arithmetic by calculations in a general finite abelian group $G$ to obtain the following result.

\begin{lemma}\label{lem:reductionConstCore}
Let $G$ be a finite abelian group and consider a \GCTU{} problem with group $G$ and a constraint matrix $T$ that can be obtained from a matrix $C$ with $\ell$ columns by repeatedly appending unit vector rows or columns, appending a copy of a row or column, and inverting the sign of a row or column.
Then, the \GCTU{} problem can be reduced to $(2|G|-1)^\ell$ many \GCTU{} problems with group $G$ and constraint matrices of size linear in the size of $T$ that are network matrices and transposes of network matrices at the same time.
\end{lemma}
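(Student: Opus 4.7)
The plan is to adapt the approach used in~\cite[Section~4.3]{nagele_2022_congruency} for congruency-constrained TU problems (the case where $G$ is cyclic) and to verify that every step remains valid when $\inlinequot{\mathbb{Z}}{m\mathbb{Z}}$ is replaced by an arbitrary finite abelian group $G$. The argument splits into a purely combinatorial, group-free reduction followed by a proximity argument that is the only place where the group structure plays a role.

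First, I would invoke the structural observation from~\cite{nagele_2022_congruency}: since $T$ is built from the core $C$ by appending unit-vector rows or columns, duplicating rows or columns, and flipping signs, there are integral vectors $s_1,\ldots,s_\ell\in\mathbb{Z}^n$ (depending only on the construction history of $T$ from $C$) with the property that, once the values $\sigma_i \coloneqq s_i^\top x$ are fixed for every $i\in[\ell]$, the inequality system $Tx\leq b$ is equivalent to a system $T' x\leq b'(\sigma)$ on the same variables, where $T'$ has size linear in that of $T$ and is simultaneously a network matrix and the transpose of a network matrix. This transformation does not involve the group constraint $\gamma^\top x = r$ at all and therefore transfers to our setting verbatim. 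For every guess of $\sigma=(\sigma_1,\ldots,\sigma_\ell)$ we then obtain a \GCTU{} problem of exactly the type claimed in the statement, and it only remains to bound the number of relevant guesses by $(2|G|-1)^\ell$.

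This bound is precisely a proximity-type statement: if the original \GCTU{} problem admits an optimal solution, then it admits one with $|s_i^\top x|\leq |G|-1$ for every $i\in[\ell]$, so it suffices to enumerate $\sigma\in\{-|G|+1,\ldots,|G|-1\}^\ell$. The main and only real obstacle is to establish this proximity bound in the general finite abelian setting, since this is the step that in~\cite{nagele_2022_congruency} was phrased explicitly in terms of residues modulo $m$.

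I would follow the cyclic-group proof of~\cite{nagele_2022_congruency} line by line: given any optimal solution $x$, decompose it into elementary pieces via the TU decomposition result~\cite[Lemma~2.1]{nagele_2022_congruency}, a statement about totally unimodular systems that is purely combinatorial and independent of the group. Whenever $|s_i^\top x|\geq |G|$ for some index $i$, the pieces contribute a sequence of at least $|G|$ group elements whose sum equals $\gamma^\top x$; applying \cref{lem:sumModM}---the finite-abelian-group analogue of~\cite[Lemma~2.2]{nagele_2022_congruency} established earlier in this appendix---to this sequence produces a non-trivial sub-collection of pieces that can be removed without changing the value of $\gamma^\top x$, without violating feasibility, and without increasing the objective, while strictly reducing $\sum_{i\in[\ell]}|s_i^\top x|$. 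Iterating this exchange drives $\sigma$ into the box $\{-|G|+1,\ldots,|G|-1\}^\ell$, which yields the desired reduction to $(2|G|-1)^\ell$ many \GCTU{} problems whose constraint matrix $T'$ is, by construction, both a network matrix and the transpose of one.
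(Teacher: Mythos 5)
Your proposal is correct and follows essentially the same route as the paper: both rely on the structural result of \cite[Section~4.3]{nagele_2022_congruency} providing the vectors $s_1,\ldots,s_\ell$ and the reduced system $T'x\leq b'$ that is simultaneously a network matrix and the transpose of one, and both identify the proximity bound $s_i^\top x\in\{-|G|+1,\ldots,|G|-1\}$ as the only group-dependent step, resting on the TU decomposition lemma together with \cref{lem:sumModM} in place of the cyclic-group statement. Your somewhat more explicit sketch of the piece-removal argument is exactly what the paper delegates to the cited proofs.
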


For feasibility problems, we may thus exploit \cref{thm:transpose-bb} to obtain the following direct corollary.

\begin{corollary}
Let $G$ be a finite abelian group. There is a strongly polynomial time algorithm for solving \GCTUF problems with group $G$ and a constraint matrix covered by \cref{thmitem:TUdecomp_const} of \cref{thm:TUdecomp}.
\end{corollary}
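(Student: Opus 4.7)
The plan is a direct composition of \cref{lem:reductionConstCore} with \cref{thm:transpose-bb}. Given a \GCTUF problem whose constraint matrix $T$ falls under \cref{thmitem:TUdecomp_const} of \cref{thm:TUdecomp}, $T$ can, by the operations of deleting/appending unit-vector rows or columns, duplicating rows or columns, and negating rows or columns, be obtained from one of the two $5 \times 5$ core matrices $C$ displayed in~\eqref{eq:specialMatrices}. Applying \cref{lem:reductionConstCore} with $\ell = 5$ therefore reduces the given \GCTUF problem to at most $(2|G|-1)^5$ \GCTUF problems with group $G$ whose constraint matrices are simultaneously network matrices and transposes of network matrices; in particular they are transposes of network matrices.

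For each of these subproblems, I would invoke the algorithm from \cref{thm:transpose-bb}, which runs in strongly polynomial time. The reduction of \cref{lem:reductionConstCore} is feasibility-preserving in the sense that the original problem is feasible if and only if at least one of the subproblems is feasible, and a feasible solution of any of the subproblems can be turned back into a feasible solution of the original \GCTUF problem in strongly polynomial time. Since $G$ is fixed, the quantity $(2|G|-1)^5$ is an absolute constant, so the overall number of oracle calls to the algorithm of \cref{thm:transpose-bb}, as well as the total running time, remain strongly polynomial.

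I do not expect a real obstacle here beyond verifying that the ingredients apply as stated: namely that \cref{lem:reductionConstCore} can indeed be instantiated with $C$ being one of the two matrices in~\eqref{eq:specialMatrices} and $\ell = 5$, and that the resulting subproblems fall within the scope of \cref{thm:transpose-bb}. Both checks are immediate from the way \cref{lem:reductionConstCore} is formulated. The only conceptual point worth emphasizing in the write-up is that, while the analogous congruency-constrained result in~\cite{nagele_2022_congruency} relies on a proximity bound whose proof uses \cref{lem:sumModM}, we have already verified in \cref{sec:BBreduction} that this bound carries over from the cyclic to the general finite abelian group setting, so no new combinatorial ingredient is needed.
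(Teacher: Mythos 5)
Your proposal matches the paper's own derivation: the corollary is obtained exactly by combining \cref{lem:reductionConstCore} (instantiated with the two core matrices of~\eqref{eq:specialMatrices}, so $\ell=5$ and constantly many subproblems for fixed $G$) with the algorithm of \cref{thm:transpose-bb} applied to each resulting subproblem, whose constraint matrices are in particular transposes of network matrices. The argument is correct and there is nothing to add.
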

\section{Pivoting steps}\label{sec:pivot}

By \cref{thm:TUdecomp} (concretely, \cref{thmitem:TUdecomp_pivotsum}), we may face pivoting operations when applying Seymour's decomposition.
For congruency-constrained TU problems, \cite[Theorem 2.7]{nagele_2022_congruency} shows that such operations can be dealt with in the following sense:
After the addition of a single variable upper bound to a given congruency-constrained TU problem, there is a unimodular variable transformation that transforms the problem into an equivalent congruency-constrained TU problem such that (up to one extra constraint that is an upper bound on a variable), the new constraint matrix is the desired pivoted form of the original constraint matrix.

The transformation argument sketched above only exploits that congruency constraints are constraints on a linear combination of the variables, and thus immediately extends to group constraints.
Adding an upper bound constraint on a variable can be done without changing the problem due to a proximity result for congruency-constrained TU problems~\cite[Lemma 3.2]{nagele_2022_congruency}.
The latter is again based on a decomposition theorem for solutions of TU systems~\cite[Lemma 2.1]{nagele_2022_congruency} and the property of cyclic groups that we generalized in \cref{lem:sumModM}, and thus translates to group constraints.

\end{document}